\newcommand{\N}{{\cal N}}
\newcommand{\NA}{a}
\newcommand{\NB}{b}
\newcommand{\NC}{c}
\newcommand{\ND}{d}
\newcommand{\NE}{e}
\newcommand{\NX}{x}
\newcommand{\NY}{y}
\newcommand{\NZ}{z}
\newcommand{\fn}[1]{\mathsf{fn}(#1)}
\newcommand{\bn}[1]{\mathsf{bn}(#1)}
\newcommand{\n}[1]{\mathsf{n}(#1)}
\newcommand{\sym}[1]{\mathsf{sym}(#1)}
\newcommand{\ttype}{\omega}
\newcommand{\nub}{{\mathbf{\nu}}}
\newcommand{\NR}{{\mathbf{r}}}
\newcommand{\NS}{{\mathbf{s}}}
\newcommand{\NT}{{\mathbf{t}}}
\newcommand{\role}{\rho}
\newcommand{\varx}{x}
\newcommand\dual[1]{\overline{#1}}
\newcommand{\PP}{P}
\newcommand{\PQ}{Q}
\newcommand{\PR}{R}
\newcommand{\inact}{0}
\newcommand{\parop}{\;|\;}
\newcommand{\rest}[1]{(\nu #1)}
\newcommand{\scope}[1]{(#1)}
\newcommand{\prefix}{\sigma}
\newcommand{\msg}{l}
\newcommand{\send}[4]{#1!#4}
\newcommand{\receive}[4]{#1?#4}
\newcommand{\repreceive}[4]{!\scope #1 #1?#4}
\newcommand{\sauth}[4]{#1\langle#4\rangle}
\newcommand{\rauth}[4]{#1(#4)}
\newcommand{\red}{\rightarrow}
\newcommand{\lts}[1]{\xrightarrow{#1}}
\newcommand{\subst}[2]{\{#1/#2\}}
\newcommand\context{\mathcal{C}}
\newcommand{\operator}{\mathit {drift}}
\newcommand{\ophole}[3]{\operator(#1;#2;#3)}
\newcommand{\opholes}[5]{\operator(#1;#2;#3;#4;#5)}
\newcommand{\optop}[3]{\operator(#1;#2;#3)}
\newcommand{\rulename}[1]{\text{\small \textsc{#1}}}
\newenvironment{theorema}[2]{\begin{trivlist}
\item[\hskip \labelsep {\bfseries Theorem #1}] {#2}}{\end{trivlist}}
\newenvironment{lemmata}[2]{\begin{trivlist}
\item[\hskip \labelsep {\bfseries Lemma #1}] {#2}}{\end{trivlist}}
\newenvironment{corollar}[2]{\begin{trivlist}
\item[\hskip \labelsep {\bfseries Corollary #1}] {#2}}{\end{trivlist}}
\newenvironment{prop}[2]{\begin{trivlist}
\item[\hskip \labelsep {\bfseries Proposition #1}] {#2}}{\end{trivlist}}
\title{A Calculus for Modeling Floating Authorizations}
\author[1]{Jovanka~Pantovi\'c} 
\author[1]{Ivan~Proki\'c} 
\author[2]{Hugo {Torres Vieira}} %\inst{3}}
\affil[1]{Faculty of Technical Sciences, University of Novi Sad, Serbia}
\affil[2]{IMT School for Advanced Studies Lucca, Lucca, Italy}
\date{}
\newtheorem{definition}{Definition}[section]
\newtheorem{proposition}{Proposition}[section]
\newtheorem{theorem}{Theorem}[section]
\newtheorem{lemma}{Lemma}[section]
\newtheorem{corollary}{Corollary}[section]
\begin{document}

\maketitle              % typeset the title of the contribution

\begin{abstract}
Controlling resource usage in distributed systems is a challenging task 
given the dynamics involved in access granting. Consider, for instance, the 
setting of floating licenses where access can be granted if the request 
originates in a licensed domain and the number of active users is within the 
license limits, and where licenses can be interchanged. 
Access granting in such scenarios is given in terms of floating authorizations, 
addressed in this paper as first class entities of a process calculus 
model, encompassing the notions of domain, 
accounting and delegation. 
%Our development  focuses on communication-centred systems, building on the 
%$\pi$-calculus, where the only resources are communication channels. 
We 
present the operational semantics of the model in two equivalent alternative ways, 
each informing on the specific nature of authorizations. We also introduce a typing 
discipline to single out systems that never get stuck due to lacking authorizations, addressing configurations where authorization assignment is not statically prescribed in the system specification. %\comment{review}
%The abstract should summarize the contents of the paper
%using at least 70 and at most 150 words. 
%\keywords{}
\end{abstract}
%

%!TEX root =  auth.tex

\section{Introduction}

Despite the continuous increase of computational resources, their usage will 
always nevertheless be subject to availability, not to mention accessibility. Regardless whether such 
resources are of hardware or software nature, they might have finite or 
virtually infinite capabilities. Physical examples, that can be mapped 
to finite capabilities directly, include actual devices such as printers 
or cell phones, and components of a computing system such as memory or 
processors. Virtual examples such as a shared memory cell or a web service 
can be more easily seen as having infinite potential but often their availability 
is also finitely constrained.
In general, ensuring proper resource usage is a crucial yet non trivial task, 
given the highly dynamic nature of access requests and the flexibility necessary
to handle such requests while ensuring secure and efficient system operation.

In particular for security purposes it is crucial to control access to resources 
so as to guarantee, on the one hand, that access is granted only to authorized 
users and, on the the other hand, that granting access is subject to availability. 
Concrete examples include a wireless access point that has a determined policy 
to grant access and a limited capacity on the number of connected devices, and
a software application that is licensed to be used internally to an institution in a 
bounded way. Both examples include a limited amount of capabilities that are 
accessible in a shared way to a number of authorized users, involving a notion 
of \emph{floating} authorizations, borrowing the terminology from licensing. 
Essentially, the authorization to use the resource is associated to a capacity
and access can be granted to any authorized user only up to the point the 
capacity is reached, so floating captures the flexible nature of the access 
granting. We also identify a notion of \emph{implicit} granting since users may
be granted access under a certain \emph{domain} (e.g., the licensed institution) 
and not even be aware of the fact a capacity bound is present, at least up to the 
point access is denied.

Further pursuing the licensing setting we find examples which hint on the 
dynamic dimension of authorizations. Consider a user that deploys a software 
application on the cloud, for which the licensing may be provided by the user 
himself, a notion sometimes dubbed \emph{Bring Your Own License}. 
Such scenario involves a notion of 
authorization \emph{delegation}, in which the user may actually lose access 
to use the application locally given the capacity constraints. We identify here 
a notion of \emph{explicit} authorization granting, where the intention to 
yield and obtain an authorization are somehow signaled. 
%Under such flexibility and dynamic requirements, controlling resource 
%accesses are authorized is a challenging central task for what concerns 
%proper resource usage.
%
We therefore distill the following notions in a floating authorizations model: 
{\bf domain} so as to specify where access may be (implicitly) granted; 
{\bf accounting} so as to capture capacity; and {\bf delegation} so as to model 
(explicit) authorization granting. 

In this paper we present a model that 
encompasses these notions, developed focusing on a process 
calculus tailored for communication centered systems. In our model the only 
resources considered are communication channels, so our exploration on 
authorization control is carried out considering authorizations refer to channels 
and their usage (for communication) is controlled. 
Our development builds on the $\pi$-calculus~\cite{pi_calculus}, which 
provides the basis to model communicating systems including name passing, 
and on a model for authorizations~\cite{clar:eke}, from where we adopt 
the language constructs for authorization domain \emph{scope} and delegation. 
We adapt here the interpretation for authorization domains so as to encompass 
accounting, which thus allows us to model floating authorizations. To the best of 
our knowledge, ours is the first process calculus model that addresses floating
resources (in our case authorizations) as first class entities.

After presenting our language we also show a typing discipline that ensures
systems never incur in authorization errors, i.e., systems that are blocked 
due to lacking authorizations. Our type analysis addresses systems for which 
the authorization assignment is not statically prescribed in the system specification, 
given the particular combination between $\pi$-calculus name passing and floating authorizations.
In particular when authorizations for (some) received names are already held by the receiving
parties, a notion we call contextual authorizations.
We start by informally introducing our model by means of examples.

%!TEX root =  auth.tex 

\subsection{Examples} 

In this section we present our language by showing some examples
resorting to the licensing setting for the sake of a more intuitive reading.
First of all, to model authorization domains we consider a scoping 
operator, and we may write $\scope{\mathit{license}} \mathit{University}$ to 
represent that the $\mathit{University}$ domain holds one $\mathit{license}$.
This means that one use of $\mathit{license}$ within $\mathit{University}$ 
is authorized. In particular, if $\mathit{University}$ comprises two students that 
are simultaneously active, which we dub $\mathit{Alice}$ and $\mathit{Bob}$, 
we may write 
$$\scope{\mathit{license}} (\mathit{Alice} \parop \mathit{Bob})$$
in which case either $\mathit{Alice}$ or $\mathit{Bob}$ can use the 
$\mathit{license}$ that is floating, but not both of them. 
The idea is to support a notion of accounting, so when one of the students uses the
license the scope is confined accordingly. For example, if $\mathit{Bob}$ 
uses the license and evolves to $\mathit{LicensedBob}$ then the system
evolves to 
$$ \mathit{Alice} \parop \scope{\mathit{license}}\mathit{LincensedBob}$$ 
where
the change in the scope denotes that $\mathit{license}$ is not available
to $\mathit{Alice}$ anymore. 
%\rep{Notice the idea is that $\mathit{Bob}$ grabs
%the authorization directly by using it so as to model implicit granting.}
The evolution of the system is such that
the authorization is directly confined to $\mathit{Bob}$, who is using it, so as to model implicit granting.
Hence, at this point $\mathit{Alice}$ cannot implicitly grab the authorization 
and gets stuck if she tries to use $\mathit{license}$. Hence, 
name $\mathit{license}$ may be shared between $\mathit{Alice}$ and
$\mathit{Bob}$, so the (change of) scoping does not mean the name is privately held by 
$\mathit{LicensedBob}$, just the authorization.

Now consider a more generous 
university  
$$\scope{\mathit{license}}\scope{\mathit{license}} (\mathit{Alice} \parop \mathit{Bob}) \parop \mathit{Carol}$$
which specifies two authorizations (for the same resource) are available to
 $\mathit{Alice}$ and $\mathit{Bob}$, however none is specified for 
$\mathit{Carol}$. As mentioned, $\mathit{Carol}$ cannot use the 
authorization implicitly, but she can explicitly ask for it. To model explicit
authorization granting we introduce two communication primitives that allow 
for an authorization to be delegated. We write
$\sauth{\mathit{auth}}\role\msg{\mathit{license}}.\mathit{UnlicensedBob}$
to represent the explicit delegation of one authorization for $\mathit{license}$
via communication channel $\mathit{auth}$, after which activating configuration 
$\mathit{UnlicensedBob}$. Instead, by
$\rauth{\mathit{auth}}\role\msg{\mathit{license}}.\mathit{LicensedCarol}$ we 
represent the dual primitive that allows to receive one authorization for
$\mathit{license}$ via channel $\mathit{auth}$ after which activating 
configuration $\mathit{LicensedCarol}$. So by
$$\scope{\mathit{license}} \scope{\mathit{auth}}\sauth{\mathit{auth}}\role\msg{\mathit{license}}.\mathit{UnlicensedBob} \parop \scope{\mathit{auth}}\rauth{\mathit{auth}}\role\msg{\mathit{license}}.\mathit{LicensedCarol}$$
we represent a system where the authorization for $\mathit{license}$ can
be transferred from the delegating user to the receiving user leading to
$$ \scope{\mathit{auth}}\mathit{UnlicensedBob} \parop \scope{\mathit{auth}}\scope{\mathit{license}}\mathit{LicensedCarol}$$
where the scope of the authorization for $\mathit{license}$ changes accordingly. 
The underlying 
communication is carried out by a synchronization on channel $\mathit{auth}$,
for which we remark the respective authorizations $\scope{\mathit{auth}}{}$ are
present. 
%\del{{, so as to represent an authentication of both sides}}. 
In fact, in our model the only resources considered are communication
channels and their (immediate) usage is subject to the (implicit) authorization 
granting mechanism. 

Our model supports a form of fairness and does not allow a ``greedy'' usage of resources. For example, in 
$$\scope{\mathit{license}} (\mathit{Alice} \parop \scope{\mathit{license}}\mathit{LincensedBob})$$
the user $\mathit{LincensedBob}$ is considered granted by the closest $\scope{\mathit{license}}$ and cannot be confined with the floating one, except in case he needs both licenses (possibly for delegation).

We remark that no name passing is involved in the authorization delegation 
mechanism and that name $\mathit{license}$ is known to both ends in the first 
place. Instead, name passing is supported by dedicated primitives, namely
$$\scope{\mathit{comm}}\send{\mathit{comm}}\role\msg{\mathit{license}}.{\mathit{Alice}}
\parop \scope{\mathit{comm}}\receive{\mathit{comm}}\role\msg{\mathit{x}}.{\mathit{Dylan}}$$
represents a system where the name $\mathit{license}$ can be passed
from the left hand side to the right hand side via a synchronization on 
channel $\mathit{comm}$, leading to the activation of $\mathit{Alice}$ 
and $\mathit{Dylan}$ where the placeholder $x$ is instantiated by
$\mathit{license}$. Notice that the synchronization can take place since
the authorizations to use channel $\mathit{comm}$ are given, one for
each endpoint. 

Name passing allows to model systems where access to channels 
changes dynamically (since the 
communicated names refer to channels) but, as hinted in the previous 
examples, knowing a name does not mean being authorized to use it. 
So for instance 
$\scope{\mathit{comm}}\receive{\mathit{comm}}\role\msg{\mathit{x}}.\send{\mathit{x}}\role\msg{\mathit{reply}}.\inact$ specifies a reception
in $\mathit{comm}$ where the received name is then used to output 
$\mathit{reply}$, leading to an inactive state, represented by $\inact$. 
Receiving $\mathit{license}$ in the initial duly authorized use of
channel $\mathit{comm}$ leads to 
$\scope{\mathit{comm}}\send{\mathit{license}}\role\msg{\mathit{reply}}.\inact$
where the authorization for $\mathit{comm}$ is present but no authorization 
for $\mathit{license}$ is acquired as a result of the communication. Hence
the output specified using $\mathit{license}$ is not authorized and cannot take
place. We remark that an authorization for $\mathit{reply}$ is not required, 
hence communicating a name does not mean using it for the purpose of
authorizations. By separating name passing and authorization delegation we 
are then able to model systems where unauthorized intermediaries (e.g., brokers)
may be involved in forwarding names between authorized parties without ever 
being authorized to use them, for example $\scope{\mathit{comm}}\receive{\mathit{comm}}\role\msg{\mathit{x}}.\scope{\mathit{forward}}\send{\mathit{forward}}\role\msg{\mathit{x}}.\inact.$ 

So, after receiving a name
%\del{, except for the case when the authorization for the received name is granted implicitly,} 
a typical pattern can then be 
$$\scope{\mathit{comm}}\receive{\mathit{comm}}\role\msg{\mathit{x}}.
\scope{\mathit{auth}}\rauth{\mathit{auth}}\role\msg{\mathit{x}}.
\mathit{LicensedDylan}$$
where after the (authorized) reception on $\mathit{comm}$, an authorization
reception via (authorized) $\mathit{auth}$ is specified, upon which the 
authorization to use the received name is acquired. Another possibility 
for enabling authorizations for received names is to use the 
authorization scoping construct, e.g., 
$$\scope{\mathit{comm}}\receive{\mathit{comm}}\role\msg{\mathit{x}}.
\scope{\mathit{x}}\mathit{LicensedDylan}$$
where the authorization $\scope{x}$ is instantiated by the received name.
This example hints on the fact that the authorization scoping is a powerful 
mechanism that may therefore be reserved to the Trusted Computing Base, while 
resorting to the more controllable authorization delegation mechanism in general.

To introduce the last constructs of our language, consider the system
$$
\repreceive{{\mathit{license}}}\role\msg{x}.\scope{x}\sauth{\mathit{license}}\role\msg{x}.\inact
\parop \rest{\mathit{fresh}} \scope{\mathit{license}} 
\send{\mathit{license}}\role\msg{\mathit{fresh}}.
\rauth{\mathit{license}}\role\msg{\mathit{fresh}}. \inact
$$
where a licensing server is specified on the left hand side, used in the
specification given on the right hand side. By $\rest{\mathit{fresh}}
\mathit{Domain}$ 
we represent the creation of a new name which is private to 
$\mathit{Domain}$, so on the specification on the right hand side can be read 
as first create a name, then send it via (authorized) channel 
$\mathit{license}$, after which receive the authorization to 
use the $\mathit{fresh}$ name via channel $\mathit{license}$
and then terminate (the received authorization is not actually 
used in this simple example). On the left hand side we find a 
replicated (i.e., repeatably available) reception on (authorized)
channel $\mathit{license}$, after which an authorization for
the received name is specified that may be delegated via $\mathit{license}$.
In two communication steps the authorization for the newly created name
can therefore be transferred.

Returning to our university setting, consider example 
$$
\scope{exam}\scope{minitest}
\scope{\mathit{alice}}\receive{\mathit{alice}}\role\msg{x}.
\receive{x}\role\msg{\mathtt{Task}}. \mathit{DoTask}
%\parop 
%\scope{\mathit{bob}}\receive{\mathit{bob}}\role\msg{x}.
%\receive{x}\role\msg{\mathtt{Task}}. \mathit{DoTask})
$$
where two authorizations are available,
one for $\mathit{exam}$ and another for $\mathit{minitest}$,
and where $\mathit{alice}$ is waiting to receive the channel on 
which her assessment will be made.
Assuming that she can only take the $\mathit{exam}$ or the 
$\mathit{minitest}$ the authorizations specified are sufficient to 
carry out the reception of the $\mathtt{Task}$ (assuming an extension 
of the language considering other values which use is not
subject to authorizations). 
Which authorization will actually be used depends
on the received name, so the authorization is implicitly taken
directly when using the received channel. Naturally, if a name 
$\mathit{viva}$ is sent to the student then the prescribed
authorizations do not suffice. 

In order to capture the fact that the above configuration is safe, provided it is 
inserted in a context that matches the assumptions described previously, 
our types identify the names that can be safely communicated. 
For instance we may say that only names $\mathit{exam}$ and $\mathit{minitest}$ 
can be communicated in channel $\mathit{alice}$. Also, consider that $\mathit{alice}$ 
is a name not subject to instantiation and that $\mathit{exam}$ and $\mathit{minitest}$
can only receive values that are not subject to authorization control. We denote by
$\{\mathit{alice}\}( \{\mathit{exam},\mathit{minitest}\} ( \emptyset ))$ the type of channel
$\mathit{alice}$ in such circumstances, i.e., when it is not subject to replacement (we 
will return to this point),
that it can be used to communicate $\mathit{exam}$ and $\mathit{minitest}$ that in turn
cannot be used for communication (typed with $\emptyset$), reading from left to right. Using this 
information we
can ensure that the specification given for $\mathit{alice}$ above is safe, since all names
that will possibly be used in communications are authorized. 

To analyse the use of the input variable $\mathit{x}$ we then take into account that it can be 
instantiated by either $\mathit{exam}$ or $\mathit{minitest}$ (which cannot be used for
channel communication) so the type of $x$ is $\{\mathit{exam},\mathit{minitest}\} ( \emptyset )$.
Hence the need to talk about possible replacements of a name, allowing us to uniformly 
address names that are bound in inputs. Our types for channels are then built out of two parts, one 
addressing possible replacements of the channel identity itself ($\ttype$), and the other 
informing on the (type of the) names that may be exchanged in the channel ($T$), denoted
by $\ttype(T)$.

The typing assumption 
$\mathit{alice}: \{\mathit{alice}\}( \{\mathit{exam},\mathit{minitest}\} ( \emptyset ))$
informs on the possible contexts where the system above can be safely used. For instance
it is safe to compose with the system
$\scope{\mathit{alice}}\send{\mathit{alice}}\role\msg{minitest}$
%\scope{\mathit{minitest}}\send{minitest}\role\msg{\mathtt{Task}}$ 
where $\mathit{minitest}$ is sent to $\mathit{alice}$, since the name to be sent
belongs to the names expected on $\mathit{alice}$.
Instead, consider configuration 
$$
\scope{exam}\scope{minitest}
(\scope{\mathit{alice}}\receive{\mathit{alice}}\role\msg{x}.
\receive{x}\role\msg{\mathtt{Task}}. \mathit{DoTask}
\parop 
\scope{\mathit{bob}}\receive{\mathit{bob}}\role\msg{x}.
\receive{x}\role\msg{\mathtt{Task}}. \mathit{DoTask})
$$
which is also safe and addressed by our typing analysis considering
typing assumptions 
$\mathit{alice}: \{\mathit{alice}\}( \{\mathit{exam}\} ( \emptyset ))$ and
$\mathit{bob}: \{\mathit{bob}\}( \{\mathit{minitest}\} ( \emptyset ))$. Notice that which
authorization is needed by each student
is not statically specified in the system, which is safe when
both $\mathit{exam}$ and $\mathit{minitest}$ are sent given the authorization
scopes can be confined accordingly. 
Clearly, the typing specification already informs on the association and a symmetric
association is also admissible.

The typing analysis shown in Section~\ref{sec:Types} addresses 
such configurations where authorizations for received names may be 
provided by the context. In Section~\ref{sec:Calculus} we present
the operational semantics of our language considering two equivalent 
alternatives that inform on the specific nature of authorizations in our model.

%!TEX root =  auth.tex

\section{A Model of Floating Authorizations}\label{sec:Calculus}

In this section we present our process model, an extension of the $\pi$-calculus~\cite{pi_calculus} with
specialized constructs regarding authorizations adopted from a model for authorizations~\cite{clar:eke}. 
\begin{table}[t]
\begin{math}
\displaystyle
\begin{array}[t]{@{}rcl@{\quad}l@{}}
  \PP,\PQ & ::= & \inact & \text{(Inaction)} \\
          &  & \PP\parop\PQ & \text{(Parallel)}\\
          &  & \rest\NA \PP & \text{(Restriction)}\\
          &  & \send\NA\role\msg\NB.\PP & \text{(Output)} \\
          &  & \receive\NA\role\msg{x}.\PP & \text{(Input)}\\
\end{array}
\qquad \quad
\begin{array}[t]{@{}rcl@{\quad}l@{}}
            &    & \scope{\NA} \PP & \text{(Authorization)}\\
            &    & \sauth\NA\role\msg\NB.\PP & \text{(Send authorization)}\\
            &    & \rauth\NA\role\msg\NB.\PP & \text{(Receive authorization)}\\
            &    & \repreceive\NA\role\msg{x}.\PP & \text{(Replicated input)}\\
\end{array}
\end{math}
\caption{\label{tab:syntax}Syntax of processes.}
\end{table}
The syntax of the language is given in Table~\ref{tab:syntax}. It assumes a countable set of {\it names} $\N$, ranged over by $\NA, \NB, \NC, \ldots, \NX, \NY, \NZ, \ldots$ We briefly present the syntactic constructs adopted from the $\pi$-calculus. An inactive process is represented by $\inact$. $\PP\parop\PQ$ represents two processes simultaneously active, that may interact via synchronization in channels. $\rest\NA \PP$ is the name restriction construct, specifying the creation of a channel name $\NA$ that is known only to the process $\PP.$ The output prefixed process $\send\NA\role\msg\NB.\PP$ sends the name $\NB$ on channel $\NA$ and proceeds as $\PP,$ and the input prefixed process $\receive\NA\role\msg{x}.\PP$ receives on channel $\NA$ a name and substitutes the placeholder $\NX$ in $\PP$ with the received name.
We comment on the remaining constructs introduced to model authorizations in more detail:
\begin{itemize}
\item The term $\scope{\NA} \PP$ is another scoping mechanism for names, representing that process $\PP$ has one authorization to use channel $\NA.$
In contrast to the name restriction, name $\NA$ is not private to $\PP$.
\item The term $\sauth\NA\role\msg\NB.\PP$ represents the process that delegates one authorization for the name $\NB$ along the name $\NA$ and proceeds as $\PP.$ 
\item The term $\rauth\NA\role\msg\NB.\PP$ represents the process which receives one authorization for the name $\NB$ along the name $\NA$ and proceeds as $\PP.$ 
%\comment{H: Move the following sentence} Since $\NB$ is not bound in $\PP$, constructs send and receive authorization can only affect the possible change of the scope of authorization $\scope{\NB},$ and can not be used for  name passing.
\item The term $\repreceive\NA\role\msg{x}.\PP$ allows us to specify infinite behavior: the process receives the name along the (authorized) name $\NA$ and substitutes the placeholder $\NX$ in $\PP$ with the received name, activating in parallel a copy of the original process.
%$\repreceive\NA\role\msg{x}.\PP.$
\end{itemize} 

In $\rest\NX \PP,$ $\receive\NA\role\msg{x}.\PP$ and $\repreceive\NA\role\msg{x}.\PP$ the name $\NX$ is {\it binding} with scope $\PP.$ All occurrences of a name that are binding, or that are under the scope of it binding occurrence, are said to be {\it bound}. If the occurrence of the name is not bound in a term, it is said to be {\it free}. We use $\fn\PP$ and $\bn\PP$ to denote the sets of free and bound names in $\PP,$ respectively. 
Regarding language constructs for authorization manipulation, we have that in $\scope\NA\PP$  occurrence of the name $\NA$ is free and occurrences of names $\NA$ and $\NB$ in processes $\sauth\NA\role\msg\NB.\PP$ and $\rauth\NA\role\msg\NB.\PP$ are also free. We remark that in our model authorization scope extrusion is not applicable since a free name is specified, unlike name restriction (see Table~\ref{tab:structural}), and constructs to send and receive authorizations can only affect the possible changes of the scope of authorization $\scope{\NB},$ and do not involve name passing.

\subsection{Reduction Semantics}

As in the $\pi$-calculus, the essence of the behavior of processes can be seen as communication. Specific to our model is that two processes ready to synchronize on a channel must be authorized to use the channel. For example, $\scope\NA\send\NA\role\msg\NB.\PP \parop \scope\NA\receive\NA\role\msg\NX.\PQ$ can evolve to $\scope\NA\PP \parop \scope\NA\PQ\subst{\NB}{\NX},$ since both sending and receiving actions are authorized, while $\scope\NA\send\NA\role\msg\NB.\PP \parop \receive\NA\role\msg\NX.\PQ$ lacks the proper authorization on the receiving end, hence the synchronization cannot occur. Another specific aspect of our language is authorization delegation. For example, consider $\scope\NA\scope\NB\sauth\NA\role\msg\NB.\PP \parop \scope\NA\rauth\NA\role\msg\NB.\PQ.$ 
%\del{where the process on the left hand side is authorized on both $\NA$ and $\NB$.} 
Both actions along name $\NA$ are authorized, and the delegating process has the respective authorization on $\NB$, hence the authorization delegation can take place, leading to $\scope\NA\PP \parop \scope\NA\scope\NB\PQ$. Notice that the authorization  for $\NB$ changed to scope over to the process that received the authorization. If actions along name $\NA$ are not authorized or the process delegating authorization for $\NB$ is not authorized on $\NB$, like in $\scope\NA\sauth\NA\role\msg\NB.\PP \parop \scope\NA\rauth\NA\role\msg\NB.\PQ,$ then the synchronization is not possible.

We formally define the behavior of processes by means of a reduction semantics and afterwards by means of a labeled transition system.
\begin{table}[t]
\[
\begin{array}{@{}c@{}} 
  \inferrule[(sc-par-inact)]{}
  {\PP\parop\inact\equiv\PP}\qquad\quad
  \inferrule[(sc-par-comm)]{}
  {\PP\parop\PQ\equiv\PQ\parop\PP}\qquad\quad
  \inferrule[(sc-par-assoc)]{}
  {(\PP\parop\PQ)\parop\PR\equiv\PP\parop(\PQ\parop\PR)}
\vspace{2mm}\\
  {\inferrule[(sc-res-inact)]{}
  {\rest\NA\inact\equiv\inact}
  \qquad\quad}
  \inferrule[(sc-res-swap)]{}
  {\rest\NA\rest\NB\PP\equiv\rest\NB\rest\NA\PP}
      \qquad\quad
  \inferrule[(sc-rep)]{}
  {\repreceive\NA\role\msg{x}.\PP\equiv\; \repreceive\NA\role\msg{x}.\PP \parop 
  \scope{\NA}\receive\NA\role\msg{x}.\PP  }      
\vspace{2mm}\\
  \inferrule[(sc-res-extr)]{}
  {\PP\parop\rest\NA\PQ\equiv\rest\NA(\PP\parop\PQ)\quad (\NA\notin\fn\PP)}
    \qquad\quad
  \inferrule[(sc-alpha)]{}
  {\PP \equiv_{\alpha} \PQ \implies \PP \equiv \PQ}
\vspace{2mm} \\
  \inferrule[(sc-auth-swap)]{}
  {\scope{\NA}\scope{\NB}\PP \equiv \scope{\NB}\scope{\NA}\PP}
    \qquad
  \inferrule[(sc-auth-inact)]{}
  {\scope{\NA}\inact \equiv \inact}
\qquad
  \inferrule[(sc-scope-auth)]{}
  {\scope{\NA}\rest{\NB}\PP \equiv \rest{\NB}\scope{\NA}\PP \quad\text{if $\NA\neq \NB$}
  }
%\vspace{2mm}\\
%	\new{\inferrule[(sc-rep-in)]{}
%	{\repreceive\NA\role\msg{x}.\PP \equiv \scope\NA\receive\NA\role\msg{x}.\PP \parop \repreceive\NA\role\msg{x}.\PP}}
	
%  \qquad
%  \inferrule[]{}
%  {\PP \equiv_{\alpha} \PQ \implies \PP \equiv \PQ}
\end{array}
\]
%\vspace{-4mm}
\caption{\label{tab:structural}Structural congruence.}
\end{table}
%The operational semantic for a language usually could be given in a two ways, one describing possible computations within individual processes, called {\it reduction semantics}, and the other describing possible interactions between process and process environment, called {\it an action semantics}. This section presents the reduction semantic for our language.
%
Reduction is defined as a binary relation between processes, denoted $\red$, where $\PP \red \PQ$ specifies that process $\PP$ evolves to process $\PQ$ in one computational step. In order to identify processes which differ syntactically but have the same behavior, we introduce the {\it structural congruence} relation $\equiv$, which is the least congruence relation between processes satisfying the rules given in Table~\ref{tab:structural}. Most rules are standard considering structural congruence in the $\pi$-calculus. In addition we adopt some rules introduced previously~\cite{clar:eke} that manipulate authorization scoping, namely \rulename{(sc-auth-swap)}, \rulename{(sc-auth-inact)}, 
and \rulename{(sc-scope-auth)} . 
 
%\comment{H: I don't think the rule for replication is a novelty, it is a particular instance of the standard $\pi$-calculus one $!P \equiv P | !P$} \new{The only novelty is the rule \rulename{(sc-rep-in)} expressing the infinite behavior of replicated input.} 
Regarding authorization scoping, we remark there is no rule which relates authorization scoping and parallel composition, like \rulename{(sc-res-extr)}  for name restriction. This is due to the interpretation of authorization scoping, as adopting a rule of the sort $\scope{\NA}(\PP \parop \PQ)\equiv \scope{\NA}\PP\parop\scope{\NA}\PQ$ would represent introducing/discarding one authorization, thus interfering with authorization accounting. Hence we distinguish $\scope{\NA}(\PP \parop \PQ)$ where the authorization is shared between $\PP$ and $\PQ$ and $\scope{\NA}\PP\parop\scope{\NA}\PQ$ where two authorizations are specified, one for each process. Another approach could be a rule of the sort $\scope{\NA}(\PP \parop \PQ)\equiv \PP\parop \scope{\NA} \PQ$, which also may affect the computational power of a process. For example, two processes $\send\NA\role\msg\NB.\inact \parop \scope{\NA}\inact$ and $\scope{\NA}(\send\NA\role\msg\NB.\inact \parop  \inact)$ should not be considered equal since the first one is not authorized to perform the output, while the second one is. In contrast, notice that in $\scope{\NA}(\send\NA\role\msg\NB.\PP \parop \PQ)$ the output on channel $a$ is authorized, but if the action is carried out then the authorization is \emph{confined}
to $\PP$ and no longer available for $\PQ$, since one authorization can only be 
(effectively) used by a single thread. 

\begin{table}[t]
\[
\displaystyle
\begin{array}[t]{@{}r@{\quad}c@{\quad}l@{}}
  \context[\cdot] & ::= & \cdot \quad | \quad \PP\parop \context[\cdot] \quad
          | \quad \scope{\NA} \context[\cdot]\\
  \context[\cdot_1, \cdot_2] & ::= & \context[\cdot_1]\parop\context[\cdot_2]
\quad \parop \quad   \PP\parop \context[\cdot_1, \cdot_2] \quad
          | \quad \scope{\NA} \context[\cdot_1, \cdot_2] 
\end{array}
\]
\caption{\label{tab:Contexts} Contexts with one and two holes.}
\end{table}

\begin{table}[t]
\[
\begin{array}{c} %@{\quad}l}
\inferrule[(c-end)]{} 
{\ophole{\cdot}{\emptyset}{\tilde \ND}=\cdot}
\qquad
\inferrule[(c-rem)]{ \ophole{\context[\cdot]}{\tilde{\NA}}{ \tilde \ND, \NC}=\context'[\cdot]
} 
{\ophole{\scope\NC\context[\cdot]}{\tilde{\NA},\NC\, }{ \tilde \ND}=\context'[\cdot]}
\\\\
\inferrule[(c-skip)]{ \ophole{\context[\cdot]}{\tilde{\NA}}{ \tilde \ND  }=\context'[\cdot] \quad \NC \not \in \tilde \ND} 
{\ophole{\scope\NC\context[\cdot]}{\tilde{\NA}}{ \tilde \ND}=\scope\NC \context'[\cdot]}
\qquad
\inferrule[(c-par)]{ \ophole{\context[\cdot]}{\tilde{\NA}}{\tilde \ND}=\context'[\cdot]}  
{\ophole{\context[\cdot] \parop \PR}{\tilde{\NA}}{\tilde \ND}=\context'[\cdot] \parop \PR}
\\\\
\end{array}
\]
\caption{\label{tab:op_for_one_hole}Definition of ${\operator}$ on contexts with one hole}
\end{table}

\begin{table}[t]
\[
\begin{array}{c} %@{\quad}l}
 \inferrule[(c2-spl)]{
\ophole{\context_1[\cdot_1]}{ \tilde{\NA}}{ \tilde \ND} = \context_1'[\cdot] \quad 
\ophole{\context_2[\cdot_2]}{ \tilde{\NB}}{ \tilde \NE} = \context_2'[\cdot]} 
{
\opholes{\context_1[\cdot_1]\parop\context_2[\cdot_2]}{\tilde{\NA}}{\tilde{\NB}}{
\tilde \ND} { \tilde \NE}= \context_1'[\cdot_1]\parop\context_2'[\cdot_2]}
\\\\
\inferrule[(c2-rem-l)]{
\opholes{\context[\cdot_1,\cdot_2]}{ \tilde{\NA}}{ \tilde{\NB}}{ \tilde \ND, \NC}{ \tilde \NE} = \context'[\cdot_1,\cdot_2] } 
{
\opholes{\scope\NC\context[\cdot_1,\cdot_2]}{\tilde{\NA},\NC\,}{\tilde{\NB}}{\tilde \ND}{ \tilde{\NE}}= \context'[\cdot_1,\cdot_2]}
\qquad
\inferrule[(c2-rem-r)]{
\opholes{\context[\cdot_1,\cdot_2]}{ \tilde{\NA}}{ \tilde{\NB}}{ \tilde \ND}{ \tilde \NE, \NC} = \context'[\cdot_1,\cdot_2]
} 
{
\opholes{\scope\NC\context[\cdot_1,\cdot_2]}{\tilde{\NA}}{\tilde{\NB},\NC\, }{ \tilde \ND}{ \tilde{\NE}}= \context'[\cdot_1,\cdot_2]}
 \\\\
\inferrule[(c2-skip)]{
\opholes{\context[\cdot_1,\cdot_2]}{ \tilde{\NA}}{\tilde{\NB}}{\tilde \ND}{ \tilde{\NE}} = \context'[\cdot_1,\cdot_2] \quad \NC \not \in \tilde \ND, \tilde{\NE}} 
{
\opholes{\scope\NC\context[\cdot_1,\cdot_2]}{\tilde{\NA}}{\tilde{\NB}}{\tilde \ND}{ \tilde{\NE}}= \scope\NC\context'[\cdot_1,\cdot_2]}
 \qquad
 \inferrule[(c2-par)]{
\opholes{\context[\cdot_1,\cdot_2]}{ \tilde{\NA}}{\tilde{\NB}}{\tilde \ND}{ \tilde{\NE}} = \context'[\cdot_1,\cdot_2]} 
{
\opholes{\context[\cdot_1,\cdot_2]\parop\PR}{\tilde{\NA}}{ \tilde{\NB}}{ \tilde \ND}{ \tilde{\NE}}= \context'[\cdot_1,\cdot_2]\parop\PR}
 \\\\
\end{array}
\]
\caption{\label{tab:op_for_two_holes}Definition of ${\operator}$ on contexts with two holes}
\end{table}

Structural congruence is therefore not expressive enough to isolate two authorized processes willing to communicate on the same channel. For example process $\scope\NA(\scope\NA(\PQ \parop \send\NA\role\msg\NB.0) \parop \receive\NA\role\msg\NX.0)$ 
cannot be rewritten, using structural congruence rules, to  $\PQ\parop\scope\NA\send\NA\role\msg\NB.0 \parop \scope\NA\receive\NA\role\msg\NX.0$. However both processes are able to reduce to process $\PQ\parop\scope\NA\inact \parop \scope\NA \inact$
since the actions are under the scope of the proper authorizations.
To define the reduction relation we thus introduce an auxiliary notion of static {\it  contexts} with one and two holes and an operation that allow us to single out the 
configurations where communication can occur. Intuitively, if two processes have active prefixes ready for synchronization, and both are under the scope of the appropriate authorizations, then the reduction step is possible. 

Static contexts are defined in Table~\ref{tab:Contexts} following standard lines. We use $\cdot_1$ and $\cdot_2$ notation to avoid ambiguity (i.e., when
$ \context[\cdot_1, \cdot_2] = \context[\cdot_1]\parop\context[\cdot_2]$ then
$ \context[\PP, \PQ] = \context[\PP]\parop\context[\PQ]$).
Note that in Table~\ref{tab:Contexts} there is no case for name restriction construct $\rest\NA,$ which allows to identify specific names and avoid unintended name capture. Remaining cases specify holes can occur in parallel composition and underneath the authorization scope, the only other contexts underneath which processes are deemed active. We omit the symmetric cases for parallel 
composition since contexts will be considered up to structural congruence.

Operation $\operator$ plays a double role: on the one hand it is defined only when the hole/holes is/are under the scope of the appropriate number of authorizations in the context; on the other hand, when defined, it yields a context obtained from the original one by removing specific authorizations (so as to capture \emph{confinement}). In our model, the specific authorizations that are removed for the sake of confinement are the ones nearest to the occurrence of the hole.

The operator $\ophole{\context[\cdot]}{\tilde{\NA}}{ \tilde \ND }$ is defined inductively by the derivation rules shown in Table~\ref{tab:op_for_one_hole}. We present the rules reading from the conclusion to the premise. The operator takes as arguments a context with one hole and  lists of names $\tilde{\NA}$ and $\tilde \ND,$ in which the same name can appear more than once.  The first list of names represents the names of authorizations that are to be removed from the context and the second represents the names of authorizations that have already been removed by the same operation. 

We briefly comment on the rules shown in Table~\ref{tab:op_for_one_hole}.
The rule where the authorization is removed from the context \rulename{(c-rem)} 
specifies that the (same) name is passed from the first list to the second list,
hence from ``to be removed'' to ``has been removed''.
In the rule where the authorization is preserved in the context \rulename{(c-skip)} we check if the name specified in the authorization is not on the second list (has been removed), hence only authorizations that were not already removed proceeding towards the hole can be preserved. This ensures the removed authorizations  are the ones nearest to the hole. The rule for parallel composition \rulename{(c-par)} is straightforward and the base rule \rulename{(c-end)} is defined only if the first list is empty. This implies that the operator is defined only when all authorizations from the first list are actually removed from the context up to the point the hole is reached. It should be noted that the second list of names is only for internal use to the operation, so, top level, when defining the operator for some context $\context[\cdot]$ and some list of names $\tilde{\NA}$ that are to be removed from the context, no authorizations have been removed and the respective list is empty.
 For example,
\[
\begin{array}{l}
  \ophole{ \scope\NA \cdot}{ \NA }{ \emptyset }= \cdot                     \\
  \ophole{ \scope\NA (\scope\NA \cdot \parop R) }{ \NA }{ \emptyset }= \scope\NA (\cdot \parop R)  \\
   \ophole{ \cdot }{ \NA }{ \emptyset } \text{  is undefined }  \\
  \ophole{ \scope\NA \scope\NB \cdot}{ \NA,\NB } { \emptyset } = \cdot                    \\
  \ophole{ \scope\NA  \cdot}{ \NA,\NB } { \emptyset } \text{ is undefined }   \\
  \ophole{ \scope\NA \cdot \parop \scope \NB \inact }{ \NA,\NB } { \emptyset } \text{ is undefined.}
\end{array}
\]

For the sake of defining reduction, where a pair of interacting processes must be identified, we require a generalization of the operation to contexts with two holes. The operator $\opholes{\context[\cdot_1,\cdot_2]}{\tilde{\NA}}{ \tilde{\NB}}{ \tilde \ND}{ \tilde{\NE}}$, defined inductively by the rules shown in Table~\ref{tab:op_for_two_holes}, takes as arguments a context with two holes, two lists of names $\tilde {\NA}$ and $\tilde {\NB}$ representing the names of authorizations which are to be removed and two list of names $ \tilde \ND$ and $ \tilde{\NE}$ representing names of authorizations already removed. Lists 
$\tilde {\NA}$ and $\tilde{\ND}$ refer to the $\cdot_1$ hole while $\tilde{\NB}$ and $\tilde{\NE}$ refer to the $\cdot_2$ hole. 

We briefly describe the rules reading from conclusion to premise. 
Rule \rulename{(c2-spl)} describes the case for two contexts with one hole each, in which case the respective operation (for one-hole contexts) is used to obtain the resulting context, considering the name lists $\tilde \NA$ and $\tilde{\ND}$ for the context on the left hand side and $\tilde{\NB}$ and $\tilde{\NE}$ for the context on the right hand side. The remaining rules follow exactly the same lines of the ones shown in Table~\ref{tab:op_for_one_hole}, duplicating authorization removal so as to address the two pairs of lists in a dedicated way.
For example,
\[
  \begin{array}{l}
    \opholes{\scope\NB \scope \NA \scope \NA ( \cdot_1 \parop \cdot_2) }{ \NA,\NB}{\NA}{\emptyset}{\emptyset} = \cdot_1 \parop \cdot_2 \\
    \opholes{\scope\NB \scope \NA ( \cdot_1 \parop \scope\NA \cdot_2) }{ \NA,\NB}{\NA}{\emptyset}{\emptyset} = \cdot \parop \cdot \\
    \opholes{\scope\NA\scope\NB  \cdot_1 \parop \scope\NA \cdot_2 }{ \NA,\NB }{\NA}{\emptyset}{\emptyset} = \cdot_1 \parop \cdot_2 \\
    \opholes{\scope\NB ( \cdot_1 \parop \scope\NA \scope\NA \cdot_2)}{ \NA,\NB}{\NA}{\emptyset}{\emptyset} \text{ is undefined.} \\
  \end{array}
\]

Notice that the operation carried out for contexts with two holes relies at some point on the operators for contexts with one hole and the fact that the derivation is possible only if the axioms for empty contexts are true. Thus, the operator is undefined if the proper authorizations are lacking. As before, lists $\tilde{\ND}$ and $\tilde{\NE}$ are used only internally by the operator. For the rest of presentation we abbreviate $\opholes{\context[\cdot_1,\cdot_2]}{\tilde{\NA}}{ \tilde{\NB}}{ \emptyset}{ \emptyset}$ with $\optop{\context[\cdot_1,\cdot_2]}{\tilde{\NA}}{ \tilde{\NB}}$.

\begin{table}[t]
\[
\begin{array}[t]{@{}c@{\qquad}c@{}}
{\inferrule[(r-comm)]{\optop{\context[\cdot_1,\cdot_2]}{\NA}{\NA} = \context'[\cdot_1 ,\cdot_2 ]}
{\context [\send\NA\role\msg\NB.\PP,\receive\NA\role\msg{x}.\PQ]
\red
\context'[\scope{\NA}\PP,\scope{\NA}\PQ\subst{\NB}{x}]}
}
%\multicolumn{2}{c}
%{\inferrule[(r-rep-comm)]{\new{\context^-[\cdot_1,\cdot]=\operator(\context 
%[\cdot,\cdot]; \NA;\emptyset)}}
%{\context [\send\NA\role\msg\NB.\PP,\repreceive\NA\role\msg{x}.\PQ]
%\red
%\context^-[\scope{\NA}\PP, \scope{\NA}\PQ\subst{\NB}{x} \parop 
%\repreceive\NA\role\msg{x}.\PQ])
%}}
%\\\\
&
{\inferrule[(r-auth)]{
\optop{\context[\cdot_1,\cdot_2]}{\NA,\NB}{\NA} = \context'[\cdot_1 ,\cdot_2 ]}
%\new{\context^-[\cdot,\cdot]=\operator(\context [\cdot ,\cdot ]; \NA,\NB;\NA)}}
{\context [\sauth\NA\role\msg\NB.\PP,\rauth\NA\role\msg\NB.\PQ]
\red
\context' [\scope{\NA}\PP, \scope{\NA}\scope{\NB}\PQ)]
}}
\\\\
\inferrule[(r-stru)]
{\PP \equiv \PP' \red \PQ' \equiv \PQ}
{\PP \red \PQ}
&
%\inferrule[(r-parc)]
%{\PP \red \PQ}
%{\PP \parop \PR \red \PQ \parop \PR}
%&
\inferrule[(r-newc)]
{\PP \red \PQ}
{\rest\NA \PP \red \rest\NA \PQ}
%&
%\inferrule[(r-autc)]
%{\PP \red \PQ}
%{\scope\NA \PP \red \scope\NA \PQ}
\end{array}
\]
\caption{\label{tab:Reduction}Reduction rules.}
\end{table}
We may now present the reduction rules, shown in Table \ref{tab:Reduction}.
Rule \rulename{(r-comm)} states that two processes can synchronize on name $\NA,$ passing name $\NB$ from emitter to receiver, only if both processes are under the scope of, at least one per each process, authorizations for name ${\NA}$. The yielded process considers the context where the two authorizations have been removed by the $\operator$ operation, and specifies the \emph{confined} authorizations for $\NA$ which scope only over the continuations of the communication prefixes $\PP$ and $\PQ$.
%In rule \rulename{(r-rep-comm)} at least one authorization is required only for the process willing to output, replicated input is authorized by the definition. After the reduction another copy of the  process with replicated input is created in parallel, and the operator $\operator$ only deletes one appearance of authorization $\scope{\NA},$ closest to the output prefix.
Analogously to \rulename{(r-comm)}, rule \rulename{(r-auth)} states that two process can exchange authorization $\scope{\NB}$ on a name $\NA$ only if the first process is under the scope of, at least one, authorization $\NB$ and, again, if both processes are authorized to perform an action on name $\NA$. 
As before, the yielded process considers the context where the authorizations 
have been removed by the $\operator$ operation. Notice that the authorization for $\NB$ is removed for the delegating process and confined to the receiving process so as to model the exchange.
Finally, the rule \rulename{(r-stru)} closes reduction under structural congruence, and rule  \rulename{(r-newc)} closes reduction under the restriction construct $\rest\NA$. 
Note there are no rules that close reduction under parallel composition and authorization scoping, as these constructs are already addressed by the static contexts in \rulename{(r-comm)} and \rulename{(r-auth)}. There is also no rule dedicated to replicated input since, thanks to structural congruence rule \rulename{(sc-rep)}, a single copy of replicated process may be distinguished and take a part in a synchronization captured by \rulename{(r-comm)}.

%To illustrate the operator $^{-}$, recall the process and the its context representation
%\[
%\PP=\scope{\NA}(\scope{\NA}\scope{\NA}(\scope{\NB}\sauth\NA\role\msg\NB.\inact\parop\rauth\NA\role\msg\NB.\inact)\parop \PR)=\context _{\NA^2\NB;\NA} [\sauth\NA\role\msg\NB.\inact,\rauth\NA\role\msg\NB.\inact].
%\]
%Using the rule \rulename{(r-auth)} it reduces to 
%\[
%\PQ=\scope{\NA}((\scope{\NA}\inact\parop\scope{\NA}\scope{\NB}\inact)\parop \PR)=\context^{-} _{\NA^2\NB;\NA} [\scope{\NA}\inact,\scope{\NA}\scope{\NB}\inact],
%\] 
%where $\context^{-} _{\NA^2\NB;\NA} [\cdot_1,\cdot_2]=\scope{\NA}((\scope{\NA}[\cdot_1]\parop\scope{\NA}\scope{\NB}[\cdot_2] )\parop \PR),$ i.e. only counted authorizations with the closest appearance to the holes are removed. 

To illustrate the rules and the $\operator$ operation, consider process 
\[
\PP=\scope\NA(\scope\NA(\PQ \parop \send\NA\role\msg\NB.\PR_1) \parop \receive\NA\role\msg\NX.\PR_2).
\]
where we may say that $\PP=\context[\send\NA\role\msg\NB.\PR_1, \receive\NA\role\msg\NX.\PR_2]$ and
$
\context[\cdot_1, \cdot_2]= \scope\NA(\scope\NA(\PQ \parop \cdot_1) \parop \cdot_2)
$.
Applying $\operator$  to the context $\context[\cdot_1, \cdot_2]$ to remove the two authorizations for name $\NA$ we have 
$
\optop{\context [\cdot_1 ,\cdot_2 ]}{ \NA}{\NA} = (\PQ \parop \cdot_1) \parop \cdot_2.
$
Thus, $\PP\red (\PQ\parop \scope\NA \PR_1) \parop \scope\NA\PR_2\subst{\NB}{\NX}$, where we observe that the authorizations are confined to the continuations $\PR_1$ and $\PR_2\subst{\NB}{\NX}$.

Synchronizations in our model are tightly coupled with the notion of authorization, in the sense that in the absence of the proper authorizations the synchronizations cannot take place. We characterize such undesired configurations, referred to as \emph{error} processes, by identifying the 
redexes singled-out in the reduction semantics which are stuck due to the lack of the necessary authorizations. Roughly, this is the case when the premise of the reduction rules is not valid, hence when the $\operator$ operation is not defined. 
% Informally, process is not an authorization error if all of its active prefixes are under the scope of appropriate authorizations. A prefix in a process is active if it can be placed within an {\it active context}.  Active context $\ucontext [\cdot]$ is defined in a usual way
%\[
%\ucontext[\cdot] \quad ::= \quad \cdot \quad | \quad \PP\parop \ucontext[\cdot] \quad
%          | \quad \scope{\NA} \ucontext[\cdot] \quad \parop \quad \rest\NA\ucontext[\cdot].
%\]

We introduce some abbreviations useful for the remaining presentation: a prefix $\alpha_\NA$ stands for any communication prefix along name $\NA$, i.e. $\send\NA\role\msg\NB, \receive\NA\role\msg\NX, \sauth\NA\role\msg\NB$ or $\rauth\NA\role\msg\NB$ and $\rest{\tilde{\NA}}$ stands for $\rest{\NA_1}\ldots\rest{\NA_n}$ when $\tilde\NA =\NA_1,\ldots, \NA_n$.

\begin{definition}[Error]\label{d:error}
Process $\PP$ is an error if $P\equiv \rest{\tilde\NC} \context[\alpha_\NA.\PQ, \alpha_\NA'.\PR]$ and 
\begin{enumerate}
\item 
$\alpha_\NA = \send\NA\role\msg\NB$, 
$\alpha_\NA' = \receive\NA\role\msg\NX$
and
$ \optop{\context[\cdot_1,\cdot_2]}{\NA}{\NA}$  is undefined, or 
\item 
$\alpha_\NA = \sauth\NA\role\msg\NB$, 
$\alpha_\NA' = \rauth\NA\role\msg\NB$
and
$ \optop{\context[\cdot_1,\cdot_2]}{\NA,\NB}{\NA}$ is undefined.
\end{enumerate}
\end{definition}

Notice that the definition of errors is aligned with that of reduction, where structural congruence is used to identify a configuration (possibly in the scope of a number of restrictions) that directly matches one of the redexes given for reduction, but where the respective application of $\operator$ is undefined.
The type analysis presented afterwards singles out processes that never incur in   errors, but first we show an alternative characterization of the operational semantics.

\subsection{Action Semantics}

In this section we introduce a labeled transition system (LTS) that provides an equivalent (as shown later) alternative representation of the operational semantics of our model. As usual the LTS is less compact, albeit more informative, with respect to the reduction semantics. The basic notion is that of observable actions, ranged over by $\alpha$, which are identified as follows: 
\[
\alpha ::= \scope\NA^i\send\NA\role\msg\NB \; \parop \; \scope\NA^i\receive\NA\role\msg\NB \; \parop \; \scope\NA^i\scope\NB^j\sauth\NA\role\msg\NB \; \parop \; \scope\NA^i\rauth\NA\role\msg\NB \; \parop \;  \rest\NB\scope\NA^i\send\NA\role\msg\NB \; \parop \; \tau_\omega
\]
where $\omega$ is of the form 
$\scope\NA^{i+j}\scope\NB^k$ and $i,j,k\in\{0,1\}$. 
We may recognize the communication action prefixes together with some annotations that capture carried/lacking authorizations and bound names. Intuitively, a communication action tagged with 
$\scope{\NA}^0$ represents the action is not carrying an authorization on $\NA$,
while $\scope{\NA}^1$ represents the action is carrying an authorization on $\NA$. Notice in the case for authorization delegation two such annotations are present, one for each name involved. As usual $\rest \NB$ is used to denote the name in the object of the communication is bound (cf. $\pi$-calculus bound output). In the case of internal steps, the $\omega$ annotation identifies the authorizations lacking for the synchronization to take place, and we use $\tau$ to abbreviate $\tau_{\scope\NA^{0}\scope\NB^0}$ where no authorizations are lacking.
For a given label $\alpha$, $\n\alpha,$ $\fn\alpha$ and $\bn\alpha$ represents the set of label names, free names and bound names, respectively, all defined in expected lines. 

\begin{table}[t]
\[
\begin{array}[t]{@{}c@{}}
\inferrule[(l-out)]{}
{\send\NA\role\msg\NB.\PP\lts{\send\NA\role\msg\NB}\scope{\NA}\PP}
\quad
\inferrule[(l-in)]{}
{\receive\NA\role\msg\varx.\PP\lts{\receive\NA\role\msg\NB}\scope{\NA}\PP\subst\NB\varx}
\quad
\inferrule[(l-out-a)]{}
{\sauth\NA\role\msg\NB.\PP\lts{\sauth\NA\role\msg\NB}\scope{\NA}\PP}
\quad
\inferrule[(l-in-a)]{}
{\rauth\NA\role\msg\NB.\PP\lts{\rauth\NA\role\msg\NB}\scope{\NA}\scope{\NB}\PP}
\vspace{3mm}\\
\inferrule[(l-in-rep)]{}
{\repreceive\NA\role\msg{x}.\PP\lts{\scope{\NA}\receive\NA\role\msg\NB}\scope{\NA}\PP\subst\NB\varx\parop\repreceive\NA\role\msg{x}.\PP}
\qquad
\inferrule[(l-par)]
{\PP\lts{\alpha}\PQ \quad \bn\alpha\cap\fn\PR=\emptyset}
{\PP\parop\PR\lts{\alpha}\PQ\parop\PR}
\vspace{3mm}\\
\inferrule[(l-res)]
{\PP\lts{\alpha}\PQ \quad \NA\notin n(\alpha)}
{\rest\NA\PP\lts{\alpha}\rest\NA\PQ}\qquad
\inferrule[(l-open)]
{\PP\lts{\scope\NA^i\send\NA\role\msg\NB}\PQ \quad a\not= b}
{\rest\NB\PP\lts{\rest\NB\scope\NA^i\send\NA\role\msg\NB}\PQ }
\vspace{3mm}\\
\inferrule[(l-scope-int)]
{\PP\lts{\tau_{\omega\scope{\NA}}}\PQ}
{\scope\NA\PP\lts{\tau_\omega}\PQ}
\qquad
\inferrule[(l-scope-ext)]
{\PP\lts{\prefix_\NA}\PQ }
{\scope\NA\PP\lts{\scope\NA\prefix_\NA}\PQ}
\qquad
\inferrule[(l-scope)]
{\PP\lts{\alpha}\PQ \quad \tau_{\omega\scope{\NA}}\neq \alpha\neq \prefix_\NA}
{\scope\NA\PP\lts{\alpha}\scope{\NA}\PQ}
\vspace{3mm}\\
\inferrule[(l-comm)]
{\PP\lts{\scope\NA^i\send\NA\role\msg\NB}\PP' \quad \PQ\lts{\scope\NA^j\receive\NA\role\msg\NB}\PQ' \quad \omega=\scope\NA^{2-i-j}}
{\PP\parop\PQ\lts{\tau_\omega}\PP'\parop\PQ'}
\vspace{3mm}\\
\inferrule[(l-close)]
{\PP\lts{\rest\NA\scope\NB^i\send\NB\role\msg\NA}\PP' \quad \PQ\lts{\scope\NB^j\receive\NB\role\msg\NA}\PQ' \quad \omega=\scope\NB^{2-i-j}\quad \NA\notin fn(\PQ)}
{\PP\parop\PQ\lts{\tau_\omega}(\nu \NA)(\PP'\parop\PQ')}
\vspace{3mm}\\
\inferrule[(l-auth)]
{\PP\lts{\scope\NB^k\scope\NA^i\sauth\NA\role\msg\NB}\PP' \quad \PQ\lts{\scope\NA^j\rauth\NA\role\msg\NB}\PQ' \quad \omega=\scope\NA^{2-i-j}\scope\NB^{1-k}}
{\PP\parop\PQ\lts{\tau_\omega}\PP'\parop\PQ'}
\end{array}
\]
\caption{\label{tab:Transition}The transition rules.}
\end{table}

The transition relation is the least relation included in ${\cal P} \times {\cal A} \times {\cal P}$, where $\cal P$ is the set of all processes and $\cal A$ is the set of all actions, that satisfies the rules in Table~\ref{tab:Transition}, which we now briefly describe.
The rules \rulename{(l-out)}, \rulename{(l-in)}, \rulename{(l-out-a)}, \rulename{(l-in-a)} capture the actions that correspond to the communication prefixes. Notice that in each rule the continuation is activated under the scope of the authorization required for the action (and provided in case of authorization reception), so as to capture confinement. Notice the labels are not decorated with the corresponding authorizations, which represents that the actions are not carrying any authorizations, omitting $\scope{\NA}^0$ annotations.
In contrast, replicated input is authorized by construction, which is why in \rulename{(l-in-rep)} the label is decorated with the corresponding authorization. 
Rule \rulename{(l-par)} is adopted from the $\pi$-calculus, lifting the actions of one of the branches (the symmetric rule is omitted) while avoiding unintended name capture. 

The rules for restriction \rulename{(l-res)} and \rulename{(l-open)} follow the lines of the ones given for the $\pi$-calculus. Rule \rulename{(l-res)} says that actions of $\PP$ are also actions of $\rest\NA\PP,$ provided that the restricted name is not specified in the action, and \rulename{(l-open)} captures the bound output case, opening the scope of the restricted name $\NA$, thus allowing for scope extrusion. 
The rule \rulename{(l-scope-int)} shows the case of a synchronization that lacks an authorization on $\NA$, so at the level of the authorization scope the action exhibited no longer lacks the respective authorization and leads to a state which no longer specifies the authorization scope. We use ${\omega\scope\NA}$ to abbreviate $\scope\NA^{2}\scope\NB^k$, $\scope\NA^{1}\scope\NB^k$, and $\scope\NB^{i+j}\scope\NA^1$ in which case $\omega$ is obtained by the respective exponent decrement. We remark that in contrast to the extrusion of a restricted name via bound output, where the scope floats \emph{up} to the point a synchronization (rule \rulename{(l-close)} explained below), authorization scopes actually float \emph{down} to the level of communication prefixes (cf. rules \rulename{(l-out)}, \rulename{(l-in)}, \rulename{(l-out-a)}, \rulename{(l-in-a)}), so as to capture confinement. Rule \rulename{(l-scope-ext)} follows similar lines as \rulename{(l-scope-int)} as it also refers to lacking authorizations, specifically for the case of an (external) action that is not carrying a necessary authorization. We use $\prefix_\NA$ to denote both an action that specifies $\NA$ as communication subject (cf. $\alpha_\NA$) and is annotated with $\scope\NA^0$ (including bound output), and of the form
$\scope\NB^i\sauth\NB\role\msg\NA$ where $i \in \{0,1\}$ (which includes $ \scope{\NA}^1\sauth\NA\role\msg\NA$ where a second authorization on $\NA$ is lacking). We also use $\scope{\NA}\prefix_\NA$ to denote the respective annotation exponent increase. Rule \rulename{(l-scope)} captures the case of an action that is not lacking an authorization on $\NA$, in which case the action crosses seamlessly the authorization scope for $\NA$.

The synchronization of parallel processes is expressed by the last three rules, omitting the symmetric cases. In rule \rulename{(l-comm)} one process is able to send and other to receive a name $\NB$ along name $\NA$ so the synchronization may take place. Notice that if the sending and receiving actions are not carrying the appropriate authorizations, then the transition label $\tau_\omega$ specifies the lacking authorizations (the needed two minus the existing ones).
In rule \rulename{(l-close)} the scope of a bound name is closed. One process is able to send a bound name $\NA$ and the other to receive it, along name $\NB$, so the synchronization may occur leading to a configuration where the restriction scope is specified (avoiding unintended name capture) so as to finalize the scope extrusion. 
The authorization delegation is expressed by rule \rulename{(l-auth)}, where an extra annotation for $\omega$ is considered given the required authorization for the delegated authorization. Carried authorization annotations, considered here up to permutation, thus identify,  in a compositional way, the requirements for a synchronization to occur.

To illustrate the transition system, let us consider process
\[
\PP=\scope{\NA}(\scope{\NA}\scope{\NA}(\scope{\NB}\sauth\NA\role\msg\NB.\inact\parop\rauth\NA\role\msg\NB.\inact)\parop \PR).
\]
Using the rule \rulename{(l-out-a)} we obtain $\sauth\NA\role\msg\NB.\inact\lts{\sauth\NA\role\msg\NB}\scope{\NA}0$ and using \rulename{(l-scope-ext)} we get $\scope{\NB}\sauth\NA\role\msg\NB.\inact\lts{\scope{\NB}\sauth\NA\role\msg\NB}\scope{\NA}0.$ In parallel, by rule \rulename{(l-in-a)} $\rauth\NA\role\msg\NB.\inact\lts{\rauth\NA\role\msg\NB}\scope{\NA}\scope{\NB}0,$ and using the rule \rulename{(l-auth)} we obtain 
\[
\scope{\NB}\sauth\NA\role\msg\NB.\inact\parop\rauth\NA\role\msg\NB.\inact\lts{\tau_{\scope{\NA}\scope{\NA}}}\scope{\NA}0\parop\scope{\NA}\scope{\NB}0.
\] 
Since the action is pending on two authorizations $\scope{\NA}$ we now apply \rulename{(l-scope-int)} twice
\[
\scope{\NA}\scope{\NA}(\scope{\NB}\sauth\NA\role\msg\NB.\inact\parop\rauth\NA\role\msg\NB.\inact)\lts{\tau}\scope{\NA}0\parop\scope{\NA}\scope{\NB}0.
\] 
Applying \rulename{(l-par)} and \rulename{(l-scope)} we get

\[
\PP\lts{\tau}\scope{\NA}((\scope{\NA}0\parop\scope{\NA}\scope{\NB}0)\parop\PR).
\]

For the sake of showing the equivalence between the semantics induced by reduction and by the labelled transition system we must focus on $\tau$ transitions where no authorizations are lacking.

\begin{theorem}[Harmony]\label{cor:red=tau}
%\comment{H: Sometimes the notation $\PP \lts{\tau}\equiv Q$ is used in this circumstances, feel free to consider it} 
$\PP\red\PQ$ if and only if $\PP \lts{\tau}\equiv Q.$ 
\end{theorem}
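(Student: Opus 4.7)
The plan is to prove both directions by induction, relying first on the standard auxiliary preservation lemma that if $\PP \equiv \PP'$ and $\PP' \lts{\alpha} \PQ'$, then some $\PQ \equiv \PQ'$ satisfies $\PP \lts{\alpha} \PQ$. This lemma I would establish by induction on the derivation of $\PP \equiv \PP'$, inspecting each rule in Table~\ref{tab:structural}; the authorization-specific rules \rulename{(sc-auth-swap)}, \rulename{(sc-auth-inact)} and \rulename{(sc-scope-auth)} are handled by the scope rules \rulename{(l-scope)}, \rulename{(l-scope-int)} and \rulename{(l-scope-ext)} together with a routine symmetric argument.

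For the $\Rightarrow$ direction, I would induct on the derivation of $\PP \red \PQ$. The case \rulename{(r-stru)} follows from the preservation lemma applied on both sides, and \rulename{(r-newc)} is immediate by \rulename{(l-res)} since $\NA \notin \n{\tau}$. The cases \rulename{(r-comm)} and \rulename{(r-auth)} require a nested induction on the derivation of the $\operator$ judgment (equivalently, on the structure of the two-hole context). In the base situation reached via \rulename{(c2-spl)} followed by \rulename{(c-end)} on each sub-context, the two interacting prefixes sit at the top of their respective one-hole contexts, and one applies \rulename{(l-out)} and \rulename{(l-in)} (or the authorization variants) together with \rulename{(l-comm)} or \rulename{(l-auth)}, with the exponents $i,j$ (and $k$ in the delegation case) set by the carried authorizations already peeled off via \rulename{(l-scope-ext)}. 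Each inductive application of \rulename{(c2-rem-l)} or \rulename{(c2-rem-r)} corresponds to one use of \rulename{(l-scope-int)} that decrements $\omega$; each \rulename{(c2-skip)} corresponds to \rulename{(l-scope)} preserving the scope; and each \rulename{(c2-par)} corresponds to \rulename{(l-par)}. The confined authorizations introduced in the continuations by \rulename{(l-out)}, \rulename{(l-in)}, \rulename{(l-out-a)}, \rulename{(l-in-a)} match exactly the $\scope{\NA}\PP$ and $\scope{\NA}\PQ\subst{\NB}{x}$ (respectively $\scope{\NB}\PQ$) prescribed by the reduction rules.

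For the $\Leftarrow$ direction, I would induct on the derivation of $\PP \lts{\tau} \PQ$. Since the label is $\tau$, namely $\tau_{\scope{\NA}^0\scope{\NB}^0}$, all exponents in $\omega$ must have been decremented to zero by applications of \rulename{(l-scope-int)}. Hence the derivation contains exactly one application of \rulename{(l-comm)}, \rulename{(l-close)} or \rulename{(l-auth)} at its synchronization core: the sub-derivations above it produce the two communication-action labels using \rulename{(l-out)}, \rulename{(l-in)}, \rulename{(l-out-a)}, \rulename{(l-in-a)}, possibly threaded through \rulename{(l-scope-ext)}, \rulename{(l-res)}, and \rulename{(l-open)}, while the sub-derivations below it consume the outstanding $\scope{\NA}^i$ annotations via \rulename{(l-scope-int)} and propagate through \rulename{(l-par)}, \rulename{(l-scope)}, \rulename{(l-res)}. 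Reading this derivation, I would reconstruct a two-hole context $\context[\cdot_1,\cdot_2]$ whose $\operator$ judgment is derivable with the dual shape, thereby enabling \rulename{(r-comm)} or \rulename{(r-auth)}; any restrictions accumulated via \rulename{(l-close)} are reintroduced by \rulename{(sc-res-extr)} combined with \rulename{(r-newc)}.

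The main obstacle will be the bookkeeping between the exponents $\scope{\NA}^i$ carried by LTS labels and the name lists tracked by $\operator$, in particular matching the order in which \rulename{(l-scope-int)} fires above the synchronization core against the nearest-first discipline enforced by the interaction of \rulename{(c-rem)} and \rulename{(c-skip)}. Handling \rulename{(l-close)} and bound output adequately will also require care to normalize restrictions to the top using \rulename{(sc-res-extr)} before invoking \rulename{(r-comm)}, which is the one place where structural congruence plays an essential rather than cosmetic role.
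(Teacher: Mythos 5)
Your proposal is correct and follows essentially the same route as the paper: both directions by induction on the respective derivations, with the same auxiliary closure of the LTS under structural congruence, and the same correspondence between the $\operator$ rules and the scope rules (\rulename{(l-scope-ext)} for authorizations carried inside a branch, \rulename{(l-scope-int)} for those consumed in the shared context, \rulename{(l-scope)}/\rulename{(l-par)} for the skip and parallel cases). The only difference is presentational: the paper packages the ``bookkeeping'' you flag as the main obstacle into two explicit inversion results (Lemma~\ref{lemm:inv_on_lts} characterizing processes up to $\equiv$ for each label, and Proposition~\ref{prop:Cases_for_contexts} enumerating where the $\scope{\NA}$ occurrences may sit relative to the parallel split), whereas you inline that analysis into the inductions.
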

\begin{proof}
We get one implication by induction on the derivation $\PP\red\PQ$ and the other by induction on the derivation on  $\PP\lts{\tau}\PQ$
(See Appendix~\ref{app:semproofs}).
\end{proof}

{Both presentations of the semantics inform on the particular nature of 
authorizations in our model. As usual, the labeled transition system is 
more directly explicit, but the more compact reduction semantics 
allows for a more global view of authorization manipulation.
In what follows we present the type system that allows to statically identify processes that never incur in authorization errors.}
%!TEX root =  auth.tex 

\section{Type System}\label{sec:Types}

In this section we present a typing discipline that allows to statically identify
\emph{safe} processes, i.e., that never incur in an error (cf.~Definition \ref{d:error}), hence
that do not exhibit actions lacking the proper authorizations. As mentioned in the Introduction, our typing analysis addresses configurations where authorizations can be granted contextually.
 %\comment{H: Maybe move the following to the introduction, if }
Before presenting the typing language which talks about the names that can be safely communicated on channels, we introduce auxiliary notions that cope with name generation, namely \emph{symbol} annotations and \emph{well-formedness}. 

Since the process model includes name restrictions and our types contain name identities, we require a symbolic handling of such bound names when they are included in type specifications. Without loss of generality, we refine the process model for the purpose of the type analysis adding an explicit symbolic representation of name restrictions. In this way we avoid a more involved treatment of bound names in typing environments. 

Formally, we introduce a countable set of symbols $\cal{S}$  ranged over by $\NR, \NS, \NT, \ldots,$ disjoint with the set of names $\N,$ and symbol $\nu$ not in $\N\cup\cal{S}.$ Also, in order to introduce a unique association of restricted names and symbols, we refine the syntax of the name creation construct $\rest \NA\PP$ in two possible ways: $\rest{\NA: \NR}\PP$ and $\rest{\NA: \nu}\PP,$ decorated with symbol from $\cal{S}$ or with symbol $\nu,$ respectively. We use $\sym\PP$ to denote a set of all symbols from $\cal{S}$ in process $\PP.$ Names associated with symbols from $\cal{S}$ may be subject to contextual authorizations, while names associated with symbol $\nu$ are not subject to contextual authorizations. The latter can be communicated in a more flexible way since on the receiver side there can be no expectation of relying on contextual authorizations.

For the purpose of this section we adopt the reduction semantics, adapted here considering refined definitions of structural congruence and reduction. In particular for structural congruence, we omit rule \rulename{(sc-res-inact)} 
$(\rest\NA\inact\equiv\inact)$
and we decorate name restriction accordingly in rules \rulename{(sc-res-swap)}, \rulename{(sc-res-extr)} and \rulename{(sc-scope-auth)}---e.g., $\PP\parop\rest{\NA: \NR}\PQ\equiv\rest{\NA: \NR}(\PP\parop\PQ)$ and $\PP\parop\rest{\NA: \nu}\PQ\equiv\rest{\NA: \nu}(\PP\parop\PQ)$ keeping the side condition $\NA\notin\fn\PP$. We remark that the omission of axiom \rulename{(sc-res-inact)} is not new in process models where name restriction is decorated with typing information (cf.~\cite{Boreale2010}).

The annotations with symbols from $\cal{S}$ allow to yield a unique identification of the respective restricted names. The processes we are interested in have unique occurrences of symbols from $\cal{S}$ and do not contain occurrences of such symbols in replicated input. We say that process $\PP$ is \emph{well-formed} if it does not contain two occurrences of the same symbol from $\cal{S}$ and for any subprocess $\PP'$ of $\PP$ that is prefixed by replicated input $\sym{\PP'}=\emptyset.$ As shown later, any typable process is well-formed, and we may show that well-formedness is preserved by structural congruence and reduction.

%\begin{proposition}[Preservation of Well-formedness]\label{lemm:well-formed}
%If $\PP$ is well-formed and $\PP\equiv\PQ$ or $\PP\red\PQ$ then $\PQ$ is also well-formed and $\sym{\PP}=\sym{\PQ}$.
%\end{proposition}
%\begin{proof}
%The proof is by induction on the size of the derivation structural congruence or reduction rule (See Appendix~\ref{app:typesproofs}). 
%\end{proof}

We may now introduce the type language, which syntax is given in Table~\ref{tab:syntax_type}, that allows to identify safe instantiations of channel names that are subject to contextual authorizations. By $\varphi$ we represent a set of names from $\N$ and of symbols from $\cal{S}$. We use $\ttype$, which stands for a set $\varphi$ or the symbol $\nu$, to characterize either names that may be instantiated (with a name in $\varphi$) or that are not subject to contextual authorizations ($\nu$). In a type $\ttype(T)$ the carried type $(T)$ characterizes the names that can be communicated in the channel. Type $\emptyset$ represents (ground) names that cannot be used for communication.
As usual, we define typing environments, denoted with $\Delta$, as a set of typing assumptions each associating a type to a name $ \NA: T$. We represent by $\mathit{names}(T)$ the set of names that occur in type $T$ and by $\mathit{names}(\Delta)$ the set of names that occur in all entries of $\Delta$.

\begin{table}[t]
$$ \ttype ::= \;  \varphi  \; \parop \; \nu
\qquad \qquad 
T ::= \; \ttype(T) \; \parop \; \emptyset
$$
\caption{Syntax of types}\label{tab:syntax_type}
\end{table}

%%%%%
We may now present the type system, defined inductively in the
structure of processes, by the rules given in Table \ref{tab:Typing rules}. A typing judgment
$
    \Delta\vdash_\rho \PP
$
%where $\rho$ denotes a multiset of names and $\rrec::= \rec \parop \nrec$. 
 states that $\PP$ uses channels as prescribed by typing environment $\Delta$ and that $\PP$ can only be placed in contexts that provide the authorizations given in $\rho$, which is a multiset of names (from $\N$, including their multiplicities).
The use of a multiset can be motivated by considering process 
 %$\sauth\NA\role\msg\NA.0$ and 
 $ \send\NA\role\msg\NB.0 \parop \receive\NA\role\msg\NX.0$ that can be typed as
 $
  % a: \{a\} (\{a\}) \vdash_{a,a } \sauth\NA\role\msg\NA.0 \qquad 
   a:\{a\}(\{b\}(\emptyset))\vdash_{\rho} \send\NA\role\msg\NB.0 \parop \receive\NA\role\msg\NX.0
 $
 where necessarily $\rho$ contains $\{a,a\}$ which specifies that the process can only be placed in contexts that offer two authorizations on name $\NA$ (one is required per each communicating prefix).

\begin{table}[t]
\[
\begin{array}[t]{@{}c@{\qquad}c@{\qquad}c@{}}

\inferrule[(t-stop)]
	{}
	{ \Delta\vdash_\rho \inact } 
&
\inferrule[(t-par)]
	{ \Delta\vdash_{\rho_1} \PP_1  \quad  \Delta\vdash_{\rho_2} \PP_2 \quad \sym{\PP_1}\cap\sym{\PP_2}=\emptyset }
	{ \Delta\vdash_{\rho_1\uplus\rho_2}  \PP_1 \parop \PP_2 }
&	
\inferrule[(t-auth)]
	{ \Delta\vdash_{\rho\uplus\{\NA\}} \PP }
	{ \Delta\vdash_{\rho} \scope{\NA}\PP }
\vspace{2mm}\\
\multicolumn{3}{c}
{
\inferrule[(t-new)]
	{ \Delta, \NA: \{\NA\}(T)\vdash_\rho \PP  \quad  \Delta'=\Delta\subst{\NR}{\NA} \quad \NR\notin\sym{\PP} \quad   \NA\notin\rho,\mathit{names} (T)}
	{ \Delta'\vdash_\rho \rest{\NA: \NR}\PP }
}
\vspace{2mm}\\
\multicolumn{3}{c}
{
\inferrule[(t-new-rep)]
	{ \Delta, \NA:\nub (T)\vdash_\rho \PP  \quad   \NA\notin\rho, \mathit{names} (T, \Delta) }
	{ \Delta\vdash_\rho \rest{\NA: \nub}\PP }
}
\vspace{2mm}\\
\multicolumn{3}{c}
{
\inferrule[(t-out)]
	{ \Delta\vdash_\rho \PP  \quad \Delta(\NA)=\ttype(\ttype'(T)) \quad \Delta(\NB)=\ttype''(T) 	 \quad  \ttype''\subseteq \ttype' \quad  \NA\notin\rho \Rightarrow \omega \subseteq\rho}
	{ \Delta\vdash_{\rho} \send\NA\role\msg\NB.\PP }
}
\vspace{2mm}\\
\multicolumn{3}{c}
{
\inferrule[(t-in)]
	{ \Delta,\NX:T\vdash_\rho \PP  \quad \Delta(\NA)=\omega(T) \quad \NX\notin\rho,\mathit{names}( \Delta) \quad \NA\notin\rho \Rightarrow \omega\subseteq\rho}
	{ \Delta\vdash_{\rho} \receive\NA\role\msg\NX.\PP }
}
\vspace{2mm}\\
\multicolumn{3}{c}
{
\inferrule[(t-rep-in)]
	{ \Delta,\NX:T\vdash_{\{\NA\}} \PP  \quad \Delta(\NA)=\ttype(T) \quad 
	  \NX \notin \rho, \mathit{names}(\Delta)\quad \sym{\PP}=\emptyset }
	{ \Delta\vdash_\rho\; \repreceive\NA\role\msg\NX.\PP }
}
\vspace{2mm}\\
\multicolumn{3}{c}
{
\inferrule[(t-deleg)]
	{ \Delta\vdash_\rho \PP \quad \Delta(\NA)=\ttype(T) \quad \NA\notin\rho \Rightarrow \ttype \subseteq\rho}
	{ \Delta\vdash_{\rho\uplus\{\NB\}} \sauth\NA\role\msg\NB.\PP }
}
\vspace{2mm}\\
\multicolumn{3}{c}
{
\inferrule[(t-recep)]
	{ \Delta\vdash_{\rho\uplus\{\NB\}} \PP  \quad \Delta(\NA)=\ttype(T) \quad \NA\notin\rho \Rightarrow \ttype\subseteq\rho}
	{ \Delta\vdash_{\rho} \rauth\NA\role\msg\NB.\PP }
}
\end{array}
\]
\caption{\label{tab:Typing rules} Typing rules.}
\end{table}

We comment on the salient points of the typing rules:
\begin{itemize}
\item \rulename{(t-stop)}:
 The inactive process is typable using any $\Delta$ and $\rho$.
\item  \rulename{(t-par)}:
 If processes $\PP_1$ and $\PP_2$ are typed under the same environment $\Delta,$ then $\PP_1 \parop \PP_2$ is typed under $\Delta$ also.  
 Consider that $\PP_1$ and $\PP_2$ own enough authorizations when placed in contexts providing authorizations $\rho_1$ and $\rho_2,$ respectively. Then, the process $\PP_1 \parop \PP_2$ will have enough authorizations
if it is placed in a context providing the sum of authorizations from $\rho_1$ and $\rho_2$. By $\rho_1\uplus\rho_2$ we represent the addition operation for multisets which sums the frequencies of the elements. The side condition $\sym{\PP_1}\cap\sym{\PP_2}=\emptyset$ is necessary to ensure the unique association of symbols and names.
\item  \rulename{(t-auth)}:
 Processes $\scope{\NA}\PP$ and $\PP$ are typed under the same environment $\Delta,$ due to the fact that scoping is a non-binding operator.
If process $\PP$ owns enough authorizations when placed in a context that provides authorizations $\rho\uplus \{\NA\},$ then $\scope{\NA}\PP$ owns enough authorizations when placed in a context  that provides authorizations $\rho$.
\item \rulename{(t-new)}:
 If process $\PP$ is typable under an environment that contains an entry for $\NA,$ then the process with restricted name $\rest{\NA: \NR}\PP$ is typed under the environment removing the entry for $\NA$ and where each occurrence of name $\NA$ in $\Delta$ is substituted by the symbol $\NR.$
By $\Delta\subst{\NR}{\NA}$ we represent the environment obtained by replacing every occurrence of $\NA$ by $\NR$ in every typing assumption in $\Delta$, hence in every type, where we exclude the case when $\NA$ has an entry in $\Delta$. The side condition $\NR\notin\sym\PP$ is necessary for the uniqueness of the symbol and name pairings and $\NA\notin\rho, \mathit{names}(T)$ says that the context cannot provide authorization for the private name and for ensuring consistency of the typing assumption.

The symbolic representation of a bound name in the typing environment enables us to avoid the case when a restricted (unforgeable) name could be sent to a process that expects to provide a contextual authorization for the received name. 
For example consider process $\rest{\NB: \NR}\scope\NA\send\NA\role\msg\NB.\inact \parop \scope\NA\scope\ND\receive\NA\role\msg\NX.\send\NX\role\msg\NC.\inact$ where a contextual authorization for $\ND$ is specified, a configuration excluded by our type analysis since the assumption for the type of channel $\NA$ carries a symbol (e.g., $\NA: \{\NA\}(\{\NR\}(\emptyset))$) for which no contextual authorizations can be provided. Notice that the typing of the process in the scope of the restriction uniformly handles the name, which leaves open the possibility of considering contextual authorizations for the name within the scope of the restriction.
 %If process is typed under assumption that it can be used as sub-term of a body of replicated input ($\rrec=\rec$) then $\NA:\rest\NA(T),$ ensuring that bound names prefixed with replicated input behaves as bound name even in the part of the process where it is free. (See Example~\ref{example_typable_rep_input} below.)
\item \rulename{(t-new-rep)}: The difference with respect to rule \rulename{(t-new)} is that no substitution is performed since the environment must already refer to symbol $\nu$ in whatever pertains to the restricted name (notice the side condition). For example to type process $\rest{\NB: \nu}\scope\NA\send\NA\role\msg\NB.\inact$ the type of $\NA$ must be $\ttype(\nu(T))$ for some $\ttype$ and $T$ where $\nu$ identifies the names communicated in $\NA$ are never subject to contextual authorizations.
\item  \rulename{(t-out)}:
  Process $\PP$ is typed under an environment 
where types of names $\NA$ and $\NB$ are such that all possible replacements for name $\NB$ (which are given by $\ttype''$) are safe to be communicated along name $\NA$ (which is formalized by $\ttype''\subseteq\ttype'$, where $\ttype'$ is the carried type of $\NA$), and also that $T$ (the carried type of $\NB$) matches the specification given in the type of $\NA$. In such case the process $\send\NA\role\msg\NB.\PP$ is typed under the same environment.
% $\Delta$ and it proceeds in the context providing the same authorizations, 
There are two possibilities to ensure name $\NA$ is authorized, namely the context may provide directly the authorization for name $\NA$ or it may provide authorizations for all replacements of name $\NA$, formalized as $\NA\notin\rho \Rightarrow \omega \subseteq\rho$. Notice this latter option is crucial to address contextual authorizations and that in such case $\omega$ does not contain symbols (since $\rho$ by definition does not).
\item \rulename{(t-in)}:
This rule is readable under the light of principles explained in the previous rule.
\item \rulename{(t-rep-in)}:
The continuation is typed considering an assumption for the input bound name $\NX$ and that $\rho=\{\NA\}$, which specifies that the expected context provides authorizations only for name $\NA$. In such a case, the replicated input is typable considering the environment obtained by removing the entry for $\NX$, which must match the carried type of $\NA$, provided that $\NX\notin\rho,\mathit{names}(\Delta)$ since it is bound to this process, and that process $\PP$ does not contain any symbols from $\cal{S}$, necessary to ensure the unique association of symbols and names when copies of the replicated process are  activated
 (see the discussion on the example shown in (\ref{example:error.rep}) at the end of this Section).  In that case, process $\repreceive\NA\role\msg\NX.\PP$ can be placed in any context that conforms with $\Delta$ and (any) $\rho$.
\item  \rulename{(t-deleg)}, \rulename{(t-recep)}: 
In these rules the typing environment is the same in premises and conclusion.
The handling of the subject of the communication ($\NA$) is similar  to, e.g., rule \rulename{(t-out)}. The way in which the authorization is addressed in rule \rulename{(t-recep)} follows the lines of rule \rulename{(t-auth)}. In rule \rulename{(t-deleg)} the authorization for $\NB$ is added to the ones expected from the context. Notice that in such way no contextual authorizations can be provided for delegation, but generalizing the rule is direct following the rules for other prefixes.
\end{itemize}

We say  process $\PP$ is well-typed if $\Delta\vdash_\emptyset \PP$ and $\Delta$ only contains assumptions of the form $\NA: \{\NA\}(T)$ or $\NA: \nu(T)$. At top level the typing assumptions address the free names of the process, which are not subject to instantiation. Free names are either characterized by $\NA: \{\NA\}(T)$ which says that $\NA$ cannot be replaced, or by $\NA: \nu(T)$ which says that $\NA$ is not subject to contextual authorizations. For example, process $\scope\NA\send\NA\role\msg\NB.\inact \parop \scope\NA\scope\NB\receive\NA\role\msg\NX.\send\NX\role\msg\NC.\inact$ is typable under the assumption that name $\NB$ has type $\{\NB\}(\{\NC\}(\emptyset))$, while it is not typable under the assumption $\nu(\{\NC\}(\emptyset))$.
%:\rrec.$ 
The fact that no authorizations are provided by the context ($\rho=\emptyset$) means that the process $\PP$ is self-sufficient in terms of authorizations.

We may now present our results, starting by mentioning some fundamental properties. We may show that typing derivations enjoy standard properties (Weakening and Strengthening) and that typing is preserved under structural congruence (Subject Congruence). 
As usual, to prove typing is preserved under reduction we need an auxiliary result that talks about name substitution.

\begin{lemma}[Substitution]\label{lemm:Substitution_lemma}
Let $ \Delta,\NX:\ttype(T)\vdash_\rho \PP$ and $\NX\notin\mathit{names}(\Delta)$. %Then 
\begin{itemize}
\item[1.] If $\Delta(\NA)=\{\NA\}(T)$  and $\NA\in \ttype$ then $\Delta\vdash_{\rho\subst{\NA}{\NX}} \PP\subst{\NA}{\NX}.$
\item[2.] If $\Delta(\NA)=\nub(T)$  and $\nub=\ttype$ then $\Delta\vdash_{\rho\subst{\NA}{\NX}} \PP\subst{\NA}{\NX}.$
\end{itemize} 
\end{lemma}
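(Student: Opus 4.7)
The plan is to proceed by induction on the typing derivation of $\Delta, \NX:\ttype(T) \vdash_\rho \PP$, with case analysis on the last typing rule applied. The two statements are proved in parallel since the proof structure is identical and only the precise form of the type of $\NA$ differs (replacing $\ttype$ by $\{\NA\}$ in Case 1 versus keeping $\nu$ in Case 2). Throughout, I would implicitly alpha-convert whenever a binder in $\PP$ clashes with $\NA$ or $\NX$, which is safe because both the hypothesis $\NX \notin \mathit{names}(\Delta)$ and the freshness side conditions in \rulename{(t-new)}, \rulename{(t-new-rep)}, \rulename{(t-in)}, \rulename{(t-rep-in)} make the bound names flexible.

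The easy cases are \rulename{(t-stop)} (trivial), \rulename{(t-par)} (two applications of the induction hypothesis, using $(\rho_1 \uplus \rho_2)\subst{\NA}{\NX} = \rho_1\subst{\NA}{\NX} \uplus \rho_2\subst{\NA}{\NX}$), \rulename{(t-auth)} (apply IH noting that if the scoped name is $\NX$ then it is turned into $\NA$ and the rule reapplies cleanly), and \rulename{(t-new)}, \rulename{(t-new-rep)} (alpha-rename so the restricted name differs from $\NA$ and $\NX$, then apply IH). The interesting cases are the communication prefixes \rulename{(t-out)}, \rulename{(t-in)}, \rulename{(t-deleg)}, \rulename{(t-recep)} and the replicated input \rulename{(t-rep-in)}, each split into subcases depending on whether $\NX$ occurs as the channel subject, as the object, or as neither. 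When $\NX$ is only the object (e.g.\ in $\send{\NC}{\role}{\msg}{\NX}.\PQ$), the carried type equality $\Delta(\NX) = \ttype(T)$ together with $a \in \ttype$ (Case 1) or $\ttype = \nu$ (Case 2) provides exactly the containment needed on the right-hand side of the carried type condition $\ttype'' \subseteq \ttype'$.

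The main obstacle, and the reason the lemma carries the exact side conditions it does, is the subcase where $\NX$ occurs as a \emph{communication subject} and the authorization side condition $\NX\notin\rho \Rightarrow \ttype\subseteq\rho$ must be transported to $\NA\notin\rho\subst{\NA}{\NX} \Rightarrow \ttype^{\mathit{new}}\subseteq\rho\subst{\NA}{\NX}$, where $\ttype^{\mathit{new}}$ is $\{\NA\}$ in Case 1 and $\nu$ in Case 2. I would argue as follows: either $\NX \in \rho$, in which case $\NA$ appears in $\rho\subst{\NA}{\NX}$ at least as many times as $\NX$ did in $\rho$ and the implication is vacuous; or $\NX \notin \rho$, in which case the original premise forces $\ttype \subseteq \rho$ (so $\NA \in \rho$ in Case 1 by the assumption $\NA \in \ttype$, and in Case 2 the original premise already forces $\NX \in \rho$ since $\nu \not\subseteq \rho$, a contradiction eliminating this branch). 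Either way $\NA \in \rho\subst{\NA}{\NX}$, so the new side condition holds vacuously.

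After verifying each rule preserves typability under substitution, the conclusion $\Delta \vdash_{\rho\subst{\NA}{\NX}} \PP\subst{\NA}{\NX}$ follows by reassembling the inductive hypotheses with the same rule used in the original derivation, using that the freshness side conditions (e.g.\ $\NX \notin \mathit{names}(\Delta)$ and the various $\sym{\cdot}$ conditions) are preserved because substituting a name for another name does not introduce new symbols from $\mathcal{S}$ or alter the entries of $\Delta$.
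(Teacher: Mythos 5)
Your proposal is correct and follows essentially the same route as the paper's own proof: induction on the depth of the typing derivation, with the decisive work done in the prefix cases split by whether $\NX$ is the communication subject or object, and the subject case resolved exactly as you describe (either $\NX\in\rho$ directly, or the side condition forces $\NA\in\ttype\subseteq\rho$ in Case~1 and yields $\NX\in\rho$ by contradiction in Case~2, so that $\NA\in\rho\subst{\NA}{\NX}$ and the new side condition is vacuous). The paper only writes out the two \rulename{(t-out)} subcases, so your sketch is in fact slightly more explicit about the remaining rules, but there is no substantive difference in method.
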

\begin{proof}
The proof is by induction on the depth of the derivation $ \Delta\vdash_\rho\PP$
(See Appendix~\ref{app:typesproofs}).  
\end{proof}

We remark that the name ($\NA$) must be contained in the set of possible instantiations for the name that $\NA$ is replacing ($\NX$), and the two names must have the same carried type ($T$). Even though subtyping is not present, the inclusion principle used in Lemma~\ref{lemm:Substitution_lemma} already hints on a substitutability notion. 

Our first main result says that typing is preserved under reduction. 

\begin{theorem}[Subject Reduction]\label{theorem:Subject_reduction}
If $\PP$ is well-typed, $ \Delta\vdash_\emptyset \PP$ and $\PP\red\PQ$ then $ \Delta\vdash_\emptyset \PQ.$

\end{theorem}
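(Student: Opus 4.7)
The plan is to prove the theorem by induction on the derivation of $\PP \red \PQ$, with one case per reduction rule from Table~\ref{tab:Reduction}. Two of the four cases are essentially bookkeeping. For \rulename{(r-stru)}, applying Subject Congruence (assumed established earlier) on $\PP \equiv \PP'$ and $\PQ' \equiv \PQ$, together with the induction hypothesis on $\PP' \red \PQ'$, immediately yields $\Delta \vdash_\emptyset \PQ$. For \rulename{(r-newc)}, inverting either \rulename{(t-new)} or \rulename{(t-new-rep)} on the hypothesis $\Delta \vdash_\emptyset \rest{\NA : \NR} \PP_0$ exposes a typing of $\PP_0$ under an extended environment and the same empty $\rho$; the induction hypothesis on the inner reduction provides the required judgment for the reduct, and the same rule can be reapplied. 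The side conditions on symbols and names are preserved because reduction only removes subterms or substitutes names for input-bound variables.

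The substance lies in \rulename{(r-comm)} and \rulename{(r-auth)}. For both I would first establish two auxiliary ``context'' lemmas. A decomposition lemma: whenever $\Delta \vdash_\rho \context[\PP_1, \PP_2]$, one can extract $\Delta \vdash_{\rho_1} \PP_1$ and $\Delta \vdash_{\rho_2} \PP_2$, where each $\rho_i$ arises from $\rho$ by adding, with multiplicity, exactly those authorizations $\scope{\NC}$ that $\context[\cdot_1, \cdot_2]$ places above the corresponding hole. A recombination lemma: if $\optop{\context[\cdot_1, \cdot_2]}{\tilde{\NA}}{\tilde{\NB}} = \context'[\cdot_1, \cdot_2]$ and processes $\PP'_1, \PP'_2$ are typable under multisets containing the authorizations $\tilde{\NA}$ and $\tilde{\NB}$ respectively, then $\context'[\PP'_1, \PP'_2]$ is typable with the same $\Delta$ and $\rho$ as the original $\context[\cdot_1, \cdot_2]$. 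Intuitively the rules defining $\operator$ mirror, on the context side, the effect of \rulename{(t-auth)} and \rulename{(t-par)} on the multiset $\rho$.

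With these in hand, case \rulename{(r-comm)} proceeds as follows: decomposition extracts typings of the two prefixed processes; inversion of \rulename{(t-out)} and \rulename{(t-in)} yields $\Delta \vdash_{\rho_1} \PP$, $\Delta, x{:}T \vdash_{\rho_2} \PQ$, and the compatibility $\Delta(\NB) = \ttype''(T)$ with $\ttype'' \subseteq \ttype'$ where $\Delta(\NA) = \ttype(\ttype'(T))$, which is precisely what the Substitution Lemma consumes to derive $\Delta \vdash_{\rho_2\subst{\NB}{x}} \PQ\subst{\NB}{x}$. Each continuation is then wrapped in $\scope{\NA}$ via \rulename{(t-auth)}, restoring the two $\NA$ authorizations that $\operator$ removed, and recombination closes the case. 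Case \rulename{(r-auth)} is analogous, inverting \rulename{(t-deleg)} and \rulename{(t-recep)}: the delegator contributes the extra $\NB$ to its multiset, matching the additional $\NB$ supplied to $\operator$ in \rulename{(r-auth)}, while the receiver's continuation acquires $\NB$, which is reintroduced as the $\scope{\NB}$ wrapper on the right.

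The main obstacle I anticipate is formalizing the decomposition and recombination lemmas and, in particular, getting the bookkeeping of $\rho$ right across the context. The subtlety is that \rulename{(t-out)}, \rulename{(t-in)}, \rulename{(t-deleg)}, and \rulename{(t-recep)} allow the authorization on the communication subject to come either directly from $\rho$ or from the condition $\ttype \subseteq \rho$ ranging over possible replacements, so the ``nearest authorizations to the hole'' discipline enforced by the second internal list in $\operator$ must be tracked in lockstep with the order in which \rulename{(t-auth)} applications discharge $\scope{\NC}$-wrappers in the derivation. Once that alignment is proved to commute with typing, the two communication cases reduce to routine applications of the Substitution Lemma and \rulename{(t-auth)}.
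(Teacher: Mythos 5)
Your proposal follows essentially the same route as the paper: the \rulename{(r-stru)} and \rulename{(r-newc)} cases are discharged exactly as you describe (via Subject Congruence and inversion of \rulename{(t-new)}/\rulename{(t-new-rep)} plus preservation of well-formedness), and the \rulename{(r-comm)}/\rulename{(r-auth)} cases are packaged in the paper as an ``Interaction Safety'' lemma whose proof is precisely your decomposition (iterated inversion of \rulename{(t-par)} and \rulename{(t-auth)} across the context), the Substitution Lemma, re-wrapping the continuations with \rulename{(t-auth)}, and recombination guided by an ``Inversion on Drift'' proposition that characterizes where $\operator$ removed the $\scope{\NA}$ occurrences --- exactly the alignment issue you flag as the main obstacle. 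The proof is correct in approach and matches the paper's.
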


\begin{proof}
The proof is by case analysis on last reduction step (See Appendix~\ref{app:typesproofs}).
\end{proof}

Not surprisingly, since errors involve redexes, the proof of Theorem~\ref{theorem:Subject_reduction} is intertwined with the proof of the error absence property included in our second main result.
Proposition~\ref{lemm:Error_Freedom} captures 
the soundness of our typing analysis, i.e., that well-typed processes are well-formed and are not stuck due to the lack of proper authorizations, hence are not errors (cf. Definition~\ref{d:error}).

\begin{proposition}[Typing Soundness]\label{lemm:Error_Freedom}
\begin{itemize}
\item[1.] If $\Delta\vdash_\rho \PP$ then $\PP$ is well-formed.
\item[2.] If $\PP$ is well-typed then $\PP$ is not an error.
\end{itemize}
%\begin{itemize}
%\item[(i)] If $\PP\equiv \rest{\tilde\NC}\context[\send\NA\role\msg\NB.\PP_1, \receive\NA\role\msg\NX.\PP_2]$ and $ \Delta\vdash_\emptyset \PP$ where for each $\NC\in\mathit{dom}(\Delta),$ $\Delta(\NC)=\NC(T)$ or $\Delta(\NC)=\nub(T)$ then $ \optop{\context[\cdot_1,\cdot_2]}{\NA}{\NA}$ is defined and if $\context'[\cdot_1, \cdot_2]= \optop{\context[\cdot_1,\cdot_2]}{\NA}{\NA}$ and $\PQ\equiv \rest{\tilde\NC}\context'[\scope\NA\PP_1, \scope\NA\PP_2\subst{\NB}{\NX}]$ then $ \Delta\vdash_\emptyset \PQ.$
%\item[(ii)] If $\PP\equiv \rest{\tilde\NC}\context[\sauth\NA\role\msg\NB.\PP_1, \rauth\NA\role\msg\NB.\PP_2]$ and $ \Delta\vdash_\emptyset \PP$ where for each $\NC\in\mathit{dom}(\Delta),$ $\Delta(\NC)=\NC(T)$ or $\Delta(\NC)=\nub(T)$ then $ \optop{\context[\cdot_1,\cdot_2]}{\NA, \NB}{\NA}$ is defined and if $\context'[\cdot_1, \cdot_2]= \optop{\context[\cdot_1,\cdot_2]}{\NA, \NB}{\NA}$ and $\PQ\equiv \rest{\tilde\NC}\context'[\scope\NA\PP_1, \scope\NA\scope\NB\PP_2]$ then $ \Delta\vdash_\emptyset \PQ.$
%\end{itemize}
\end{proposition}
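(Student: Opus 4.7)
The plan is to prove the two clauses of the Proposition separately.

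Clause~1 follows by a straightforward induction on the derivation of $\Delta\vdash_\rho\PP$. The base case \rulename{(t-stop)} is immediate. For the inductive step, the side conditions $\sym{\PP_1}\cap\sym{\PP_2}=\emptyset$ in \rulename{(t-par)}, $\NR\notin\sym{\PP}$ in \rulename{(t-new)}, and $\sym{\PP}=\emptyset$ in \rulename{(t-rep-in)} yield precisely the two clauses of well-formedness; the remaining rules just propagate the inductive hypothesis.

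For Clause~2 we argue by contradiction. Assume $\Delta\vdash_\emptyset\PP$ yet $\PP$ is an error, so $\PP\equiv\rest{\tilde\NC}\context[\alpha_\NA.\PQ,\alpha_\NA'.\PR]$ where $\alpha_\NA,\alpha_\NA'$ form a matching communication or delegation pair and the relevant $\operator$ is undefined (Definition~\ref{d:error}). By Subject Congruence, the congruent form is still typable, and inverting \rulename{(t-new)}/\rulename{(t-new-rep)} repeatedly yields $\Delta'\vdash_\emptyset\context[\alpha_\NA.\PQ,\alpha_\NA'.\PR]$ in an extended environment. The heart of the argument is an auxiliary lemma: whenever a typing derivation of $\Delta'\vdash_\emptyset\context[\PP_1,\PP_2]$ employs sub-derivations $\Delta'\vdash_{\rho_1}\PP_1$ and $\Delta'\vdash_{\rho_2}\PP_2$ at the holes, the operation $\opholes{\context[\cdot_1,\cdot_2]}{\tilde\NA}{\tilde\NB}{\emptyset}{\emptyset}$ is defined for every multiset pair with $\tilde\NA\subseteq\rho_1$ and $\tilde\NB\subseteq\rho_2$.

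The lemma is proved by induction on the typing derivation of the context, mirroring each \rulename{(t-par)} step by \rulename{(c2-spl)} or \rulename{(c2-par)} and each \rulename{(t-auth)} step by \rulename{(c2-rem-l)}, \rulename{(c2-rem-r)}, or \rulename{(c2-skip)} according to whether the scoped name is demanded by $\tilde\NA$, $\tilde\NB$, or neither. To then apply the lemma to the error redex, observe that the Substitution Lemma forces any concrete name that is the subject of an active prefix to have a type of the form $\{\NA\}(T)$ or $\nub(T)$ in $\Delta'$. Hence the side condition $\NA\notin\rho\Rightarrow\omega\subseteq\rho$ of \rulename{(t-out)}/\rulename{(t-in)}/\rulename{(t-deleg)}/\rulename{(t-recep)} collapses to $\NA\in\rho_i$, and \rulename{(t-deleg)} additionally places $\NB\in\rho_1$. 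Choosing $\tilde\NA,\tilde\NB$ as the demand lists appearing in the error's $\operator$ invocation, the lemma delivers definability of $\operator$ on the context, contradicting the error hypothesis.

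The principal obstacle lies in the auxiliary lemma, specifically in the ``nearest-scope'' discipline enforced by the interplay of \rulename{(c-skip)} and \rulename{(c-rem)}: once a $\scope{\NC}$ has been removed, any subsequent $\scope{\NC}$ closer to the hole must also be removed since \rulename{(c-skip)} becomes unavailable. The inductive schedule given by the typing derivation supplies the correct order: when a \rulename{(t-auth)} step decrements a demand of hole~$i$, the mirror $\operator$ step removes the corresponding $\scope{\NC}$; surplus scopes lying farther from the holes are skipped beforehand, while below the parallel branching point the two-hole operator splits via \rulename{(c2-spl)} into two independent one-hole invocations matching the compositional decomposition of the typing derivation.
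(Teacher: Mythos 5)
Your proposal is correct and follows essentially the same route as the paper: invert the typing derivation (via Subject Congruence and the inversion lemmas) down to the two communicating prefixes, observe that the prefix rules' side conditions force the subject (and, for delegation, the object) into the respective $\rho_i$ because free names carry types of the form $\{\NA\}(T)$ or $\nub(T)$, and then obtain definedness of $\operator$ from a lemma relating typability of the context to the authorizations available at the holes --- the paper's Authorization Safety lemma, proved by induction on the context, which coincides with your induction on the typing derivation since typing is syntax-directed. Your auxiliary lemma is stated slightly more generally (arbitrary demand multisets contained in $\rho_1,\rho_2$ rather than a single shared name), which is in fact what the delegation case requires; the one imprecision is attributing the form $\{\NA\}(T)$ or $\nub(T)$ of the subjects' types to the Substitution Lemma rather than to the definition of well-typedness together with inversion of \rulename{(t-new)}/\rulename{(t-new-rep)}.
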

\begin{proof}
Immediate from auxiliary result (see Appendix~\ref{app:typesproofs}). 
\end{proof}

%\comment{H: see comment above}

%\begin{lemma}[Error Absence]\label{lemm:Error_Free}
%If $ \Delta\vdash_\emptyset \PP$ and for each $\NA\in\mathit{dom}(\Delta),$ $\Delta(\NA)=\NA(T)$ or $\Delta(\NA)=\nub(T)$ and $P\equiv \rest{\tilde\NC} \context[\alpha_\NA.\PQ, \alpha_\NA'.\PR]$ where 
%\begin{enumerate}
%\item 
%$\alpha_\NA = \send\NA\role\msg\NB$, 
%$\alpha_\NB = \receive\NA\role\msg\NX$
%then
%$ \optop{\context[\cdot_1,\cdot_2]}{\NA}{\NA}$  is defined, or 
%\item 
%$\alpha_\NA = \sauth\NA\role\msg\NB$, 
%$\alpha_\NB = \rauth\NA\role\msg\NB$
%then
%$ \optop{\context[\cdot_1,\cdot_2]}{\NA,\NB}{\NA}$ is defined.
%\end{enumerate}
%Therefore, $\PP$ is not an error.
%\end{lemma}

%\begin{proof}

%Directly from Lemma~\ref{lemm:Pre_Subject_reduction} and Lemma~\ref{lemm:Subject_congruence}. 
%\end{proof}
As usual, the combination of Theorem~\ref{theorem:Subject_reduction} and Proposition \ref{lemm:Error_Freedom} yields type safety.

\begin{corollary}[Type Safety]\label{cor:Type_Safety}
If $\PP$ is well-typed and $\PP\red^*\PQ$ then $\PQ$ is not an error.
\end{corollary}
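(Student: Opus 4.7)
The plan is a straightforward induction on the length $n$ of the reduction sequence $\PP \red^* \PQ$, using Subject Reduction (Theorem~\ref{theorem:Subject_reduction}) to carry well-typedness through each step and then invoking Typing Soundness (Proposition~\ref{lemm:Error_Freedom}, part 2) at the endpoint.

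For the base case $n=0$, we have $\PQ=\PP$, which is well-typed by hypothesis; hence Proposition~\ref{lemm:Error_Freedom}.2 immediately gives that $\PQ$ is not an error.

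For the inductive step, decompose the sequence as $\PP \red^* \PP' \red \PQ$ where the initial segment has length $n-1$. By the definition of well-typed there is some environment $\Delta$, containing only assumptions of the form $\NA:\{\NA\}(T)$ or $\NA:\nub(T)$, with $\Delta \vdash_\emptyset \PP$. Applying Theorem~\ref{theorem:Subject_reduction} step-by-step along the first $n-1$ reductions yields $\Delta \vdash_\emptyset \PP'$, and one more application to $\PP' \red \PQ$ gives $\Delta \vdash_\emptyset \PQ$. Since the environment and the empty $\rho$ are preserved verbatim, $\PQ$ remains well-typed in the strict sense, and Proposition~\ref{lemm:Error_Freedom}.2 concludes that $\PQ$ is not an error.

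With both main results already established, there is essentially no obstacle: the corollary is a routine combination of Subject Reduction and Typing Soundness. The one detail worth checking is that Theorem~\ref{theorem:Subject_reduction} preserves not merely typability but the specific judgement $\Delta \vdash_\emptyset \cdot$, so the technical side-conditions defining ``well-typed'' (shape of $\Delta$ and emptiness of $\rho$) survive each reduction step; this is immediate from how the theorem is stated.
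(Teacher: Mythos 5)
Your proof is correct and follows essentially the same route as the paper: the paper's own argument is precisely the combination of Theorem~\ref{theorem:Subject_reduction} applied along the reduction sequence and Proposition~\ref{lemm:Error_Freedom}(2) at the end, with the induction on the length of $\PP\red^*\PQ$ left implicit. Your additional check that the judgement $\Delta\vdash_\emptyset\cdot$ (and hence well-typedness in the strict sense) is preserved verbatim at each step is a worthwhile detail to make explicit, but it does not change the argument.
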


The type safety ensures that well-typed processes will never lack the necessary authorizations to carry out their communications.  

To illustrate the typing rules recall the example from the Introduction
\begin{equation}\label{example:minitest}
{\scope{\mathit{exam}}}\scope{\mathit{minitest}}
\scope{\mathit{alice}}\receive{\mathit{alice}}\role\msg{x}.
\receive{x}\role\msg{\mathtt{Task}}%. \mathit{DoTask}
\end{equation}
and the type $\{\mathit{alice}\}( \{\mathit{exam},\mathit{minitest}\} ( \emptyset ))$ assigned to channel name ${\mathit{alice}}$. 
Following typing rule \rulename{(t-in)}, we see that name $\NX$ is typed with $ \{\mathit{exam},\mathit{minitest}\} ( \emptyset )$, i.e., the carried type of ${\mathit{alice}}$. Now, knowing that name $\NX$ can be instantiated by $\mathit{exam}$ or $\mathit{minitest}$ in order to apply rule \rulename{(t-in)}  considering contextual authorizations we must check whether the authorizations to use both names are provided (contained in $\rho$). Furthermore, consider that the process shown in~(\ref{example:minitest}) is composed in parallel with another process willing to send a name along $\mathit{alice}$, specifically
\begin{equation}\label{example:sending.exam}
\scope{\mathit{alice}}\send{\mathit{alice}}\role\msg{exam}. \inact \parop\scope{\mathit{exam}}\scope{minitest}
\scope{\mathit{alice}}\receive{\mathit{alice}}\role\msg{x}.
\receive{x}\role\msg{\mathtt{Task}}. %\mathit{DoTask}.
\end{equation}
Considering the same typing assumption for $\mathit{alice}$ (with type $\{\mathit{alice}\}( \{\mathit{exam}, \mathit{minitest}\} ( \emptyset ))$), and assuming that name $\mathit{exam}$ has type $ \{\mathit{exam}\} ( \emptyset )$, by rule \rulename{(t-out)} we can conclude that it is safe to send name $\mathit{exam}$ along $\mathit{alice}$, since $\{\mathit{exam}\}$ is contained in $\{\mathit{exam},\mathit{minitest}\}$, the (only) instantiation of $\mathit{exam}$  is contained in the carried type of $\mathit{alice}$.

Now consider that the latter process is placed in the context where name $\mathit{exam}$ is restricted 
\begin{equation}\label{example:restricted.name}
\rest{\mathit{exam}:\NR}(\scope{\mathit{alice}}\send{\mathit{alice}}\role\msg{exam}. \inact \parop\scope{\mathit{exam}}\scope{minitest}
\scope{\mathit{alice}}\receive{\mathit{alice}}\role\msg{x}.
\receive{x}\role\msg{\mathtt{Task}}). %\mathit{DoTask})
\end{equation}

In order to type this process, the assumption for $\mathit{alice}$ considers type $\{\mathit{alice}\}( \{\NR, \mathit{minitest}\} ( \emptyset ))$, representing that in $\mathit{alice}$ a restricted name can be communicated. Hence, the process shown in~(\ref{example:restricted.name}) cannot be composed with others that rely on contextual authorizations for names exchanged in $\mathit{alice}$. This follows from the fact that the multiset of provided authorizations $\rho$ by definition can contain only names and not symbols. To use names received along $\mathit{alice}$ one has to specify an authorization for all possible receptions (e.g. $\scope{\mathit{alice}}\receive{\mathit{alice}}\role\msg{\NX}.\scope\NX\receive\NX\role\msg{\mathtt{Task}}$), or rely on authorization delegation.

Let us also consider process 
\begin{equation}\label{example:rep.in.with.r}
\repreceive{{\mathit{license}}}\role\msg\NX.\rest{\mathit{exam}: \NR}(\scope\NX\send\NX\role\msg{\mathit{exam}}.\inact \parop \scope\NX\scope{\mathit{exam}}\receive\NX\role\msg\NY.\send\NY\role\msg{\mathit{task}}.\inact)
\end{equation}
that models a server that receives a name and afterwards it is capable of both receiving (on the lhs) and sending (a fresh name, on the rhs) along the received name. Our typing analysis excludes this process since it contains a symbol ($\NR$) in a replicated input (see rule \rulename{(t-rep-in)}). To show why names bound inside a replicated input cannot be subject to contextual authorizations, even inside the scope of the restriction (like in~(\ref{example:restricted.name})), consider that the process shown in~(\ref{example:rep.in.with.r}) may evolve by receiving name  $\mathit{alice}$ twice, to  

\begin{equation}\label{example:error.rep}
\begin{array}{c}

\rest{\mathit{exam_1}: \NR}(\scope{\mathit{alice}}\send{\mathit{alice}}\role\msg{\mathit{exam_1}}.\inact \parop \scope{\mathit{exam_1}}\scope{\mathit{alice}}\receive{\mathit{alice}}\role\msg\NY.\send\NY\role\msg{\mathit{task}}.\inact) 
\\
\parop
\rest{\mathit{exam_2}: \NR}(\scope{\mathit{alice}}\send{\mathit{alice}}\role\msg{\mathit{exam_2}}.\inact \parop \scope{\mathit{exam_2}}\scope{\mathit{alice}}\receive{\mathit{alice}}\role\msg\NY.\send\NY\role\msg{\mathit{task}}.\inact). 
\end{array}
\end{equation}
where two copies of the replicated process are active in parallel, and where two different restricted names can be sent on $\mathit{alice}$, hence leading to an error when the contextual authorization does not match the received name (e.g., $\scope{\mathit{exam_2}}\scope{\mathit{alice}}\send{\mathit{exam_1}}\role\msg{\mathit{task}}.\inact$).

In order to address name generation within replicated input we use distinguished symbol $\nu$ that captures the fact that such names are never subject to contextual authorizations. This means that names typed with $\nu$ both within their scope and outside of it cannot be granted contextual authorizations. Hence the process obtained by replacing the $\NR$ annotation by $\nu$ in~(\ref{example:rep.in.with.r}), concretely
\begin{equation}\label{example:rep.in.with.nu}
\repreceive{{\mathit{license}}}\role\msg\NX.\rest{\mathit{exam}: \nu}(\scope\NX\send\NX\role\msg{\mathit{exam}}.\inact \parop \scope\NX\scope{\mathit{exam}}\receive\NX\role\msg\NY.\send\NY\role\msg{\mathit{task}}.\inact)
\end{equation}
is also not typable since a contextual authorization is expected for name $\mathit{exam}$. Notice this configuration leads to an error like the one shown in~(\ref{example:error.rep}). However process
\begin{equation}\label{example:rep.in.ok}
\repreceive{{\mathit{license}}}\role\msg\NX.\rest{\mathit{exam}: \nu}\scope{\mathit{alice}}\send{\mathit{alice}}\role\msg{\mathit{exam}}.\inact
\end{equation}
is typable, hence may be safely composed with contexts that comply with assumption $\mathit{alice}: \{\mathit{alice}\}(\nu(T))$, i.e., do not rely on contextual authorizations for names received on $\mathit{alice}$.

We remark that our approach can be generalized to address contextual authorizations for name 
generation in the context of certain forms of infinite behavior, namely considering recursion 
together with linearity constraints that ensure race freedom (like in the setting of behavioral types~\cite{bettyreport}).

%\input{types}
%\input{checking}
%!TEX root =  auth.tex

\section{Concluding remarks}\label{sec:Conclusions}

In the literature we find a plethora of techniques that address resource usage control, ranging 
from locks that guarantee mutual exclusion in critical code blocks to communication protocols 
(e.g., token ring). Several typing disciplines have been developed to ensure proper resource usage, such
as~\cite{DBLP:journals/corr/abs-1712-08310,DBLP:journals/jlp/GorlaP09,DBLP:conf/esop/SwamyCC10}, 
where capabilities are specified in the type language, not as a first class entity in the model. 
Therefore in such approaches it is not possible to separate resource and capability like we 
do---cf. the ``unauthorized'' intermediaries such as brokers mentioned in the Introduction. We distinguish 
an approach that considers accounting~\cite{DBLP:journals/corr/abs-1712-08310}, in the sense that 
the types specify the number of messages that may be exchanged, which therefore relates to the notion 
of accounting presented here. 

We also find proposals of models that include capabilities as first class entities, addressing usage
of channels and of resources as communication objects, such 
as~\cite{DBLP:journals/scp/BodeiDF17,Giunti,DBLP:journals/lmcs/KobayashiSW06,VivasYoshida}. 
More specifically, constructs for restricting (hiding and filtering) the behaviors allowed on channels~\cite{Giunti,VivasYoshida}, usage specification in a (binding) name scope 
construct~\cite{DBLP:journals/lmcs/KobayashiSW06}, and authorization scopes for resources based 
on given access policies~\cite{DBLP:journals/scp/BodeiDF17}. We distinguish our 
approach from~\cite{Giunti,VivasYoshida} since the proposed constructs are static and not able to 
capture our notion of a floating resource capability. As for~\cite{DBLP:journals/lmcs/KobayashiSW06}, 
the usage specification directly embedded in the model resembles a type and is given in a binding 
scoping operator which contrasts with our non-binding authorization scoping.
Also in~\cite{DBLP:journals/scp/BodeiDF17} detailed usage policies 
are provided associated to the authorization scopes for resources. 
In both~\cite{DBLP:journals/scp/BodeiDF17,DBLP:journals/lmcs/KobayashiSW06} the models seem less 
adequate to capture our notion of floating authorizations as access is granted explicitly and 
controlled via the usage/policy specification, and for instance our notion of confinement does not 
seem to be directly representable.

We believe our approach can be extended by considering some form of usage specifications like
the ones mentioned above~\cite{DBLP:journals/scp/BodeiDF17,DBLP:journals/lmcs/KobayashiSW06}, 
by associating to each authorization scoping more precise capabilities in the form of behavioral 
types~\cite{bettyreport}. It would also be interesting to resort to refinement 
types~\cite{DBLP:conf/pldi/FreemanP91} to carry out our typing analysis, given that our types 
can be seen to some extent as refinements on the domain of names, an investigation we leave
to future work. Perhaps even more relevant would be to convey our principles to the licensing
domain where we have identified related patents~\cite{armstrong2005management,baratti2003license} for the purpose of 
certifying license usage. At this level it would be important to extend our work considering
also non-consumptive authorizations in the sense of authorizations that can be placed back
to their original scope after they have been used.

We have presented a model that addresses floating authorizations, a notion we believe is unexplored
in the existing literature. We based our development on previous work~\cite{clar:eke} by extending the
model in a minimal way so to carry out our investigation, even though the required technical changes 
revealed themselves to be far from straightforward. We left out non-determinism in the form of choice 
since our focus is on the interplay between parallel composition and authorization scope, and we 
believe adding choice to our development can be carried out in expected lines. We remark that 
in~\cite{clar:eke} a certain form of accounting inconsistency when handling authorization delegation was 
already identified, while in this work we believe accounting is handled consistently throughout the model. 
We intend to study the behavioral theory of our model, also for the sake of illuminating
our notion of floating authorizations and their accounting, where for instance an axiomatization of 
the behavioral semantics would surely be informative on the authorization scoping construct.

We also presented a typing analysis that addresses contextual authorizations, which we also believe
is unexplored in the literature in the form we present it here. Our typing rules induce a decidable 
type-checking procedure, since rules are syntax directed, provided as usual that a (carried) type 
annotation is added to name restrictions. However, we have already started working on a 
type-checking procedure nevertheless based on our typing rules but where the focus is on efficiency, 
namely at the level of distributing authorizations provided by the environment to (parallel) subsystems.
This allows for fine-grained information on authorizations actually required by processes, which will 
hopefully lead to identifying principles that may be used for the sake of type inference.
A notion of substitutability naturally arises in our typing analysis and we leave to future work a detailed 
investigation of a subtyping relation that captures such notion, but we may mention that our preliminary 
assessment actually hinted on some non standard features with respect to variance and covariance of 
carried types.

%
% ---- Bibliography ----
%
\bibliographystyle{plain}
\bibliography{bib}

\newpage
\appendix
%!TEX root =  auth.tex
%\setcounter{lemma}{section}
\section{Appendix}

\subsection{Proofs from Section~\ref{sec:Calculus}}
\label{app:semproofs}

\begin{lemma}[Inversion on Labelling]{\label{lemm:free_names}}
Let $\PP\lts{\alpha}\PQ.$
\begin{itemize}
\item[1.] If $\alpha=\scope\NA^i\send\NA\role\msg\NB$ then $\NA,\NB\in\fn\PP.$
\item[2.] If $\alpha=\rest\NB\scope\NA^i\send\NA\role\msg\NB$ then $\NA\in\fn\PP$ and $\NB\in\bn\PP.$
\item[3.] If $\alpha=\scope\NA^i\receive\NA\role\msg\NB$ then $\NA\in\fn\PP.$
\item[4.] If $\alpha=\scope\NA^i\scope\NB^j\sauth\NA\role\msg\NB$ then $\NA,\NB\in\fn\PP.$
\item[5.] If $\alpha=\scope\NA^i\rauth\NA\role\msg\NB$ then $\NA,\NB\in\fn\PP.$
\end{itemize}
\end{lemma}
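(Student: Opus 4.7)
The proof proceeds by induction on the depth of the derivation $\PP \lts{\alpha} \PQ$, case-splitting on the last rule applied. The base cases are the prefix axioms \rulename{(l-out)}, \rulename{(l-in)}, \rulename{(l-out-a)}, \rulename{(l-in-a)}, and \rulename{(l-in-rep)}: in each, the shape of the source process directly exposes the relevant subject and object names as free, matching items 1, 3, 4, 5 with exponent $i=0$, plus item 3 with $i=1$ for the replicated-input axiom.

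For the inductive cases, the rules \rulename{(l-par)}, \rulename{(l-scope-int)}, \rulename{(l-scope-ext)}, and \rulename{(l-scope)} are routine: parallel composition and the non-binding authorization scope $\scope{\NA}(\cdot)$ preserve the free-name and bound-name sets of their subprocesses, and the label rewriting performed by \rulename{(l-scope-int)} and \rulename{(l-scope-ext)} only adjusts the authorization annotation counts $\scope{\NA}^i$ without introducing or removing subject/object names. Thus the induction hypothesis applied to the premise transfers directly to the conclusion. The synchronization rules \rulename{(l-comm)}, \rulename{(l-close)}, and \rulename{(l-auth)} produce labels of the form $\tau_\omega$, none of which match items 1--5, so these cases are vacuous.

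The crux is the restriction-based rules. For \rulename{(l-res)}, the side condition $\NA \notin \n{\alpha}$ guarantees that any name mentioned in $\alpha$ is distinct from the newly bound $\NA$, so its free/bound status in $\rest\NA\PP$ agrees with that in $\PP$, and each of items 1--5 follows directly from the induction hypothesis. For \rulename{(l-open)}, the key case for item 2, the premise $\PP \lts{\scope\NA^i\send\NA\role\msg\NB} \PQ$ yields by induction (item 1) that both $\NA, \NB \in \fn\PP$. In $\rest\NB\PP$ the restriction binds $\NB$, giving $\NB \in \bn{\rest\NB\PP}$, while the side condition $a \neq b$ ensures $\NA$ remains free. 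This is exactly what item 2 requires.

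No genuine obstacle is expected; the only point requiring care is bookkeeping across rules that rewrite labels (the scope-traversing rules and \rulename{(l-open)}), to confirm that label annotations never cause a name mentioned in items 1--5 to be accidentally bound by an enclosing $\scope{\NA}$ (which is non-binding) or misclassified as free after an \rulename{(l-open)} step.
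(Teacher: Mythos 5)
Your proposal is correct and follows essentially the same route as the paper: induction on the derivation of $\PP\lts{\alpha}\PQ$, with the prefix axioms as base cases, the restriction rule handled via the side condition $\NA\notin\n\alpha$, and item 2 obtained from item 1 through \rulename{(l-open)} together with $\NA\neq\NB$. The only cosmetic difference is that you list \rulename{(l-scope-int)} among the routine inductive cases when it is in fact vacuous here (it only produces $\tau_\omega$ labels), which does not affect the argument.
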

\begin{proof}
The proof is by induction on the inference of $\PP\lts{\alpha}\PQ.$ We comment just the first and the second assertions.

$\mathit{1.}$ The base case for  $\PP\lts{\send\NA\role\msg\NB}\PQ$ is the rule \rulename{(l-in)}, then $\PP=\send\NA\role\msg\NB.\PQ$ and $\NA,\NB\in\fn\PP.$ We comment only the case when the last applied rule is \rulename{(l-res)}: $\rest\NC\PP'\lts{\send\NA\role\msg\NB}\rest\NC\PQ'$ is derived from $\PP'\lts{\send\NA\role\msg\NB}\PQ'$, where $\NC\notin\{\NA,\NB\}$ and by the induction hypothesis $\NA,\NB\in\fn{\PP'},$ which implies $\NA,\NB\in\fn{\PP'}\setminus\{\NC\}=\fn{\rest\NC\PP'}.$  The base case for $\alpha=\scope\NA\send\NA\role\msg\NB$ is when $\scope\NA\PP'\lts{\scope\NA\send\NA\role\msg\NB}\PQ$ is derived from $\PP'\lts{\send\NA\role\msg\NB}\PQ$ by \rulename{(l-scope-ext)}. Then, by the first part of the proof $\NA,\NB\in\fn{\PP'},$ thus $\NA,\NB\in\fn{\scope\NA\PP'}.$

$\mathit{2.}$ We only comment the base case: $\rest\NB\PP\lts{\rest\NB\scope\NA^i\send\NA\role\msg\NB}\PQ$ is derived from $\PP\lts{\scope\NA^i\send\NA\role\msg\NB}\PQ,$ where $\NA\not=\NB,$ by the rule \rulename{(l-open)}, then from the first assertion of this Lemma we get $\NA,\NB\in\fn\PP.$ From this it directly follows $\NA\in\fn{\rest\NB\PP}$ and $\NB\in\bn{\rest\NB\PP}.$
\end{proof}

\begin{lemma}[LTS Closure Under Structural Congruence] {\label{lemm:lts_struct_congruence}}
If $\PP\equiv\PP'$ and $\PP\lts{\alpha}\PQ,$ then there exists some $\PQ'$ such that $\PP'\lts{\alpha}\PQ'$ and $\PQ\equiv\PQ'.$
\end{lemma}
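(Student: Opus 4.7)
The plan is to prove the lemma by induction on the derivation of $\PP \equiv \PP'$, treating both directions of each axiom of structural congruence. The base cases correspond to the axioms in Table~\ref{tab:structural}, and the inductive cases cover reflexivity, transitivity, symmetry, and compatibility with the process-forming constructors. For each base case I would perform an inner case analysis (or a side induction) on the derivation of $\PP \lts{\alpha} \PQ$, exhibit an explicit transition $\PP' \lts{\alpha} \PQ'$, and check $\PQ \equiv \PQ'$. For the compatibility cases, I would apply the induction hypothesis to the appropriate subderivation and close using the same LTS rule (e.g.\ \rulename{(l-par)}, \rulename{(l-res)}, \rulename{(l-open)}, one of the scope rules, or \rulename{(l-comm)}/\rulename{(l-close)}/\rulename{(l-auth)} in the case of parallel congruence that yields a $\tau_\omega$).

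The axioms sc-par-inact, sc-par-comm, sc-par-assoc, sc-res-inact, sc-res-swap, sc-res-extr, and sc-alpha follow standard $\pi$-calculus lines. The only subtle point is the side condition $\NA \notin \fn{\PP}$ for sc-res-extr: here I would invoke Lemma~\ref{lemm:free_names} to argue that if the transition of $\rest{\NA}(\PP \parop \PQ)$ observes a free name, that name appears in the subderivation that moves, so the scope extrusion or its reverse can be mirrored on $\PP' = \PP \parop \rest{\NA}\PQ$ without name capture (and symmetrically for bound output, using \rulename{(l-open)}). Similarly sc-auth-inact is vacuous in both directions because $\inact$ has no transitions and $\scope{\NA}\inact$ can only derive transitions through the scope rules, all of which require a premise transition from $\inact$.

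The cases specific to this calculus require more care. For sc-rep, the LHS transition must be \rulename{(l-in-rep)}, producing label $\scope{\NA}\receive{\NA}\role\msg{\NB}$; on the RHS I compose \rulename{(l-in)} applied to the inner input with \rulename{(l-scope-ext)} (to carry the needed authorization annotation) and \rulename{(l-par)}. Conversely, the right-to-left direction requires analysing which branch of the parallel composition moves and, when the second branch is chosen, noting that the resulting process is structurally congruent to the one produced by \rulename{(l-in-rep)} up to commutativity of $\parop$. For sc-auth-swap and sc-scope-auth I would match the transitions of $\scope{\NA}\scope{\NB}\PP$ (resp.\ $\scope{\NA}\rest{\NB}\PP$) by commuting the applications of \rulename{(l-scope-int)}, \rulename{(l-scope-ext)}, and \rulename{(l-scope)} between the two wrapping constructs: for instance, if the outer scope applies \rulename{(l-scope-int)} while the inner applies \rulename{(l-scope)}, then after swapping the inner scope can instead apply \rulename{(l-scope-int)} because the label $\tau_{\omega\scope{\NA}}$ is unaffected by the order in which $\scope{\NA}$ and $\scope{\NB}$ are consumed. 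The side condition $\NA \neq \NB$ in sc-scope-auth ensures the labels do not clash with the restricted name when passing through \rulename{(l-res)}.

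The main obstacle will be the sc-auth-swap case together with its combination with the three scope rules, since one must enumerate how each $(i,j)$-combination of exponents in $\omega = \scope{\NA}^{i}\scope{\NB}^{j}\cdots$ is produced by the two adjacent scope applications and check that swapping them yields exactly the same residual label and an $\equiv$-equal continuation (this is where I expect to rely on \rulename{(sc-auth-swap)} itself to close the congruence). The restriction-versus-scope case sc-scope-auth is analogous but also needs Lemma~\ref{lemm:free_names} to rule out accidental captures when $\alpha$ carries $\NB$.
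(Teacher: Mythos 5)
Your proposal is correct and follows essentially the same route as the paper's proof: induction on the derivation of $\PP\equiv\PP'$, an inner case analysis on the derivation of $\PP\lts{\alpha}\PQ$, and an appeal to the inversion-on-labelling lemma (Lemma~\ref{lemm:free_names}) to discharge the free-name side conditions in the \rulename{(sc-res-extr)} case. The paper only writes out the \rulename{(sc-res-extr)} case in detail, whereas you additionally sketch the calculus-specific cases (\rulename{(sc-rep)}, \rulename{(sc-auth-swap)}, \rulename{(sc-scope-auth)}), and your handling of them is consistent with the semantics (in particular, the $\omega$ annotations being taken up to permutation is what makes the scope-swap case go through).
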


\begin{proof}
The proof is by induction on the length of the derivation of $\PP\equiv\PP'.$ 
We detail only the case when the last applied rule is \rulename{(sc-res-extr)} $\PP_1\parop\rest\NA\PP_2\equiv\rest\NA(\PP_1\parop\PP_2)$ if $\NA\notin\fn{\PP_1}.$ 
We have two possible transitions for $\rest\NA(\PP_1\parop\PP_2)$, by \rulename{(l-res)} and \rulename{(l-open)}:
\begin{itemize}
\item \rulename{(l-res)}: Assume that $\rest\NA(\PP_1\parop\PP_1)\lts{\alpha} \rest\NA\PR,$ where $\NA\notin\n\alpha$ is derived from $(\PP_1\parop\PP_2)\lts{\alpha} \PR.$ Then, possible transitions for $\PP_1\parop\PP_2$ are:
	\begin{itemize}
	\item \rulename{(l-par)}: $\PP_1\parop\PP_2\lts{\alpha}\PP_1\parop\PP_2',$ where $\bn{\alpha}\cap\fn{\PP_1}=\emptyset$ is derived from $\PP_2\lts{\alpha}\PP_2'.$ Since $\NA\notin\n\alpha,$ by \rulename{(l-res)} we get $\rest\NA\PP_2\lts{\alpha}\rest\NA\PP_2'$ and by \rulename{(l-par)} $\PP_1\parop\rest\NA\PP_2\lts{\alpha}\PP_1\parop\rest\NA\PP_2'.$ If  $\PP_1\parop\PP_2\lts{\alpha}\PP_1'\parop\PP_2,$ where $\bn{\alpha}\cap\fn{\PP_2}=\emptyset$ and $\PP_1\lts{\alpha}\PP_1'$, then, since $\fn{\rest\NA\PP_2}\subseteq\fn{\PP_2},$ by \rulename{(l-par)} we get $\PP_1\parop\rest\NA\PP_2\lts{\alpha}\PP_1'\parop\rest\NA\PP_2.$
	\item \rulename{(l-comm)}: $\PP_1\parop\PP_2\lts{\alpha}\PP_1'\parop\PP_2'$ and $\alpha=\tau_{\omega},$ where $\omega=\scope\NB^{2-i-j}$ is derived from  $\PP_1\lts{\alpha_1}\PP_1'$ and $\PP_2\lts{\alpha_2}\PP_2',$ for $\alpha_1,\alpha_2\in\{\scope\NB^i\send\NB\role\msg\NC, \scope\NB^j\receive\NB\role\msg\NC\}.$ By Lemma~\ref{lemm:free_names} we get $\NB\in\fn{\PP_1,\PP_2}.$ Thus from $\NA\notin\fn{\PP_1}$ we get $\NB\not=\NA.$ Then, we have two cases:
		\begin{itemize}
		\item if $\NC=\NA$ and  $\alpha_2=\scope\NB^i\send\NB\role\msg\NA.$ Then, by \rulename{(l-open)} $\rest\NA\PP_2\lts{\rest\NA\alpha_2}\PP_2'$ and by \rulename{(l-close)} $\PP_1\parop\rest\NA\PP_2\lts{\tau_\omega}\rest\NA(\PP_1'\parop\PP_2');$
		\item if $\NC\not=\NA,$ then by \rulename{(l-res)} $\rest\NA\PP_2\lts{\alpha_2}\rest\NA\PP_2'$ and by \rulename{(l-comm)} $\PP_1\parop\rest\NA\PP_2 \lts{\tau_\omega}\PP_1'\parop\rest\NA\PP_2'.$ 
		\end{itemize}
	\item \rulename{(l-close)}: $\PP_1\parop\PP_2\lts{\alpha}\rest\NC(\PP_1'\parop\PP_2')$ and $\alpha=\tau_{\omega},$ where $\omega=\scope\NB^{2-i-j}$ is derived from  $\PP_1\lts{\alpha_1}\PP_1'$ and $\PP_2\lts{\alpha_2}\PP_2',$ for $\alpha_1,\alpha_2\in\{\rest\NC\scope\NB^i\send\NB\role\msg\NC, \scope\NB^j\receive\NB\role\msg\NC\}.$ By Lemma~\ref{lemm:free_names} we get $\NB\in\fn{\PP_1,\PP_2}$ and $\NC\in\bn{\PP_1,\PP_2}.$ Since $\NA\notin\fn{\PP_1}$ and assuming all bound names are different, we get $\NA\notin\{\NB,\NC\}.$ Thus, by \rulename{(l-res)} $\rest\NA\PP_2\lts{\alpha_2}\rest\NA\PP_2'$ and by \rulename{(l-close)} $\PP_1\parop\rest\NA\PP_2 \lts{\tau_\omega}\rest\NC(\PP_1'\parop\rest\NA\PP_2').$
	\item \rulename{(l-auth)}:  $\PP_1\parop\PP_2\lts{\alpha}\PP_1'\parop\PP_2'$ and $\alpha=\tau_{\omega},$ where $\omega=\scope\NB^{2-i-j}\scope\NC^{1-k}$ is derived from $\PP_1\lts{\alpha_1}\PP_1'$ and $\PP_2\lts{\alpha_2}\PP_2',$ for $\alpha_1,\alpha_2\in\{\scope\NB^i\scope\NC^k\sauth\NB\role\msg\NC, \scope\NB^j\rauth\NB\role\msg\NC\}.$ By Lemma~\ref{lemm:free_names} we get $\NC,\NB\in\fn{\PP_1,\PP_2}.$ Since $\NA\notin\fn{\PP_1}$ we get $\NA\notin\{\NB,\NC\}.$ Thus, by \rulename{(l-res)} $\rest\NA\PP_2\lts{\alpha_2}\rest\NA\PP_2'$ and by \rulename{(l-auth)} $\PP_1\parop\rest\NA\PP_2 \lts{\tau_\omega}\PP_1'\parop\rest\NA\PP_2'.$
	\end{itemize}
\item \rulename{(l-open)}: Assume that $\rest\NA(\PP_1\parop\PP_2)\lts{\rest\NA\scope\NB^i\send\NB\role\msg\NA} \PR,$ is derived from\\
 $\PP_1\parop\PP_2\lts{\scope\NB^i\send\NB\role\msg\NA} \PR,$ where $\NA\not=\NB.$ Since $\NA\notin\fn{\PP_1}$ and by Lemma~\ref{lemm:free_names} we get that $\PP_1\parop\PP_2$ could be only derived using \rulename{(l-par)} $\PP_1\parop\PP_2\lts{\scope\NB^i\send\NB\role\msg\NA} \PP_1\parop \PP_2'$ from $\PP_2\lts{\scope\NB^i\send\NB\role\msg\NA}\PP_2'$. By \rulename{(l-open)} $\rest\NA\PP_2\lts{\rest\NA\scope\NB^i\send\NB\role\msg\NA}\PP_2'$ and by \rulename{(l-par)} we get $\PP_1\parop\rest\NA\PP_2\lts{\rest\NA\scope\NB^i\send\NB\role\msg\NA}\PP_1\parop\PP_2'.$
\end{itemize}
\end{proof}

\begin{lemma}[Inversion on LTS]{\label{lemm:inv_on_lts}}
Let $\PP$ and $\PQ$ be processes.
\begin{itemize}
\item[1.] If $\PP\lts{\send\NA\role\msg\NB}\PQ$ then\\
 $\PP\equiv\rest{\tilde {\ND}}\context[\send\NA\role\msg\NB.\PP']$ and $\PQ\equiv\rest{\tilde {\ND}}\context[\scope{\NA}\PP']$ and $\operator(\context[\cdot]; \NA)$ is undefined;
\item[2.] If $\PP\lts{\scope{\NA}\send\NA\role\msg\NB}\PQ$ then\\
 $\PP\equiv\rest{\tilde {\ND}}\context[\send\NA\role\msg\NB.\PP']$ and $\PQ\equiv\rest{\tilde {\ND}}\context^-[\scope{\NA}\PP'],$ for $\context^-[\cdot]=\operator(\context[\cdot]; \NA);$
\item[3.] If $\PP\lts{\rest\NB\send\NA\role\msg\NB}\PQ$ then \\
$\PP\equiv\rest{\tilde {\ND}}\rest\NB\context[\send\NA\role\msg\NB.\PP']$ and 
$\PQ\equiv\rest{\tilde {\ND}}\context[\scope{\NA}\PP']$ and $\operator(\context[\cdot]; \NA)$ is undefined;
\item[4.] If $\PP\lts{\rest\NB\scope{\NA}\send\NA\role\msg\NB}\PQ$ then \\
$\PP\equiv\rest{\tilde {\ND}}\rest\NB\context[\send\NA\role\msg\NB.\PP']$ and 
$\PQ\equiv\rest{\tilde {\ND}}\context^-[\scope{\NA}\PP'],$ for $\context^-[\cdot]=\operator(\context[\cdot]; \NA);$
\item[5.] If $\PP\lts{\receive\NA\role\msg\NB}\PQ$ then 
$\PP\equiv\rest{\tilde {\ND}}\context[\receive\NA\role\msg\NX.\PP']$ and 
$\PQ\equiv\rest{\tilde {\ND}}\context[\scope{\NA}\PP'\subst{\NB}{\NX}]$ and $\operator(\context[\cdot]; \NA)$ is undefined; 
\item[6.] If $\PP\lts{\scope{\NA}\receive\NA\role\msg\NB}\PQ$ then 
$\PP\equiv\rest{\tilde {\ND}}\context[\receive\NA\role\msg\NX.\PP']$ and 
$\PQ\equiv\rest{\tilde {\ND}}\context^-[\scope{\NA}\PP'\subst{\NB}{\NX}],$ for $\context^-[\cdot]=\operator(\context[\cdot]; \NA),$ 
\item[7.] If $\PP\lts{\scope{\NA}\sauth\NA\role\msg\NB}\PQ$ then 
$\PP\equiv\rest{\tilde {\ND}}\context[\sauth\NA\role\msg\NB.\PP']$ and  
$\PQ\equiv\rest{\tilde {\ND}}\context^-[\scope\NA\PP'],$ for $\context^-[\cdot]=\operator(\context[\cdot]; \NA)$ and
 $\operator(\context[\cdot]; \NA,\NB)$ is undefined;
\item[8.] If $\PP\lts{\scope{\NB}\sauth\NA\role\msg\NB}\PQ$ then 
$\PP\equiv\rest{\tilde {\ND}}\context[\sauth\NA\role\msg\NB.\PP']$ and  
$\PQ\equiv\rest{\tilde {\ND}}\context^-[\scope\NA\PP'],$ for $\context^-[\cdot]=\operator(\context[\cdot]; \NB)$ and $\operator(\context[\cdot]; \NA,\NB)$ is undefined;
\item[9.]  If $\PP\lts{\scope{\NA}\scope{\NB}\sauth\NA\role\msg\NB}\PQ$ then 
$\PP\equiv\rest{\tilde {\ND}}\context[\sauth\NA\role\msg\NB.\PP']$ and  
$\PQ\equiv\rest{\tilde {\ND}}\context^-[\scope\NA\PP'],$ for $\context^-[\cdot]=\operator(\context[\cdot]; \NA,\NB);$
\item[10.] If $\PP\lts{\rauth\NA\role\msg\NB}\PQ$ then 
$\PP\equiv\rest{\tilde {\ND}}\context[\rauth\NA\role\msg\NB.\PP']$ and 
$\PQ\equiv\rest{\tilde {\ND}}\context[\scope\NA\scope{\NB}\PP']$ and \\
$\operator(\context[\cdot]; \NA)$ is undefined;
\item[11.] If $\PP\lts{\scope{\NA}\rauth\NA\role\msg\NB}\PQ$ then 
$\PP\equiv\rest{\tilde {\ND}}\context[\rauth\NA\role\msg\NB.\PP']$ and 
$\PQ\equiv\rest{\tilde {\ND}}\context^-[\scope\NA\scope{\NB}\PP'],$ for $\context^-[\cdot]=\operator(\context[\cdot]; \NA);$
\item[12.] If $\PP\lts{\tau}\PQ$ then either
	\begin{itemize}
	\item $\PP\equiv \rest{\tilde {\ND}}\context[\send\NA\role\msg\NB.\PP', \receive\NA\role\msg\NX.\PP'']$ and \\
	$\PQ\equiv\rest{\tilde {\ND}}\context^-[\scope\NA\PP', \scope\NA\PP''\subst{\NB}{\NX}],$ for $\context^-[\cdot_1,\cdot_2]=\operator(\context[\cdot_1, \cdot_2 ]; \NA;\NA),$ or 
	\item $\PP\equiv \rest{\tilde {\ND}}\context[\sauth\NA\role\msg\NB.\PP', \rauth\NA\role\msg\NB.\PP'']$ and \\
	$\PQ\equiv \rest{\tilde {\ND}}\context^-[\scope\NA\PP', \scope\NA\scope\NB\PP''],$ for $\context^-[\cdot_1,\cdot_2]=\operator(\context[\cdot_1, \cdot_2 ]; \NA,\NB;\NA);$
	\end{itemize}
\item[13.] If $\PP\lts{\tau_{\scope\NA}}\PQ$ then either
	\begin{itemize}
	\item  $\PP\equiv \rest{\tilde {\ND}}\context[\send\NA\role\msg\NB.\PP', \receive\NA\role\msg\NX.\PP'']$ and $\PQ\equiv\rest{\tilde {\ND}}\context^-[\scope\NA\PP', \scope\NA\PP''\subst{\NB}{\NX}],$ for \\
	$\context^-[\cdot_1,\cdot_2]=\operator(\context[\cdot_1, \cdot_2 ];  \NA; \emptyset),$ or $\context^-[\cdot_1,\cdot_2]=\operator(\context[\cdot_1, \cdot_2 ]; \emptyset; \NA),$ and \\
	$\operator(\context[\cdot_1, \cdot_2 ]; \NA;\NA)$ is undefined, or
	\item $\PP\equiv \rest{\tilde {\ND}}\context[\sauth\NA\role\msg\NB.\PP', \rauth\NA\role\msg\NB.\PP'']$ and $\PQ\equiv \rest{\tilde {\ND}}\context^-[ \scope\NA\PP', \scope\NA\scope\NB\PP''],$ for \\
	 $\context^-[\cdot_1,\cdot_2]=\operator(\context[\cdot_1, \cdot_2 ];\NA,\NB; \emptyset),$ or $\context^-[\cdot_1,\cdot_2]=\operator(\context[\cdot_1, \cdot_2 ]; \NB; \NA),$ and\\
	 $\operator(\context[\cdot_1, \cdot_2 ]; \NA,\NB;\NA)$ is undefined, or
	\item $\PP\equiv \rest{\tilde {\ND}}\context[\sauth\NB\role\msg\NA.\PP', \rauth\NB\role\msg\NA.\PP'']$ and $\PQ\equiv \rest{\tilde {\ND}}\context^-[\scope\NB\PP', \scope\NB\scope\NA\PP''],$ for \\
	$\context^-[\cdot_1,\cdot_2]=\operator(\context[\cdot_1, \cdot_2 ]; \NB;\NB)$  and $\operator(\context[\cdot_1, \cdot_2 ]; \NB,\NA;\NB)$ is undefined.
	\end{itemize}
	\end{itemize}
\end{lemma}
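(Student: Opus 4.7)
The plan is to proceed by induction on the derivation of $\PP \lts{\alpha} \PQ$, with a case split on the last rule applied. Although the lemma bundles together thirteen statements, they share a uniform shape: each one reconstructs, from the transition label and target, a structural decomposition of $\PP$ as a restriction wrapper around a static context plugged with the active prefix(es), and simultaneously characterises the behaviour of $\operator$ on that context. So the strategy is to maintain, as an invariant of the induction, the correspondence between (i) the number and nature of authorization decorations carried by $\alpha$, (ii) the $\scope\NA$ wrappers encountered while traversing the derivation, and (iii) the removal/skip pattern in $\operator$.

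The base cases are the prefix rules \rulename{(l-out)}, \rulename{(l-in)}, \rulename{(l-out-a)}, \rulename{(l-in-a)}. In each, the context is the trivial hole $\cdot$, the restriction list $\tilde{\ND}$ is empty, and the label carries no $\scope{\cdot}$ annotations; since $\operator(\cdot;\NA)$ is undefined by \rulename{(c-end)} restricted to an empty first list (no axiom accepts a nonempty list), the ``undefined'' clauses in cases 1, 5, 7 (first assertion), 8 (first assertion), 10 are immediate. The inductive step over \rulename{(l-par)} appends a parallel component to the context, matching rule \rulename{(c-par)}; \rulename{(l-res)} extends $\tilde{\ND}$, using Lemma~\ref{lemm:free_names} to ensure the restricted name is fresh for the subject and thus disjoint from the authorizations being tracked. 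The \rulename{(l-open)} case furnishes the leading $\rest\NB$ of cases 3 and 4. The three scope rules are the heart of the analysis: \rulename{(l-scope-ext)} wraps another $\scope\NA$ around the context (so the exponent of $\NA$ in the label rises), matching \rulename{(c-skip)} since $\NA$ has not yet been removed; \rulename{(l-scope-int)} corresponds to \rulename{(c-rem)}, where $\NA$ is transferred from the ``to remove'' list to the ``has been removed'' list and the scope disappears from both the context and the exponent of $\omega$; \rulename{(l-scope)} simply commutes the scope past the context, again aligning with \rulename{(c-skip)}.

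For the $\tau$ cases (12 and 13) the induction branches on \rulename{(l-comm)}, \rulename{(l-close)}, \rulename{(l-auth)}, each producing a two-hole context by pairing the one-hole contexts extracted from the two premises via the single-hole cases already established; this is exactly what \rulename{(c2-spl)} expresses. Subsequent applications of \rulename{(l-scope-int)}, \rulename{(l-scope-ext)}, \rulename{(l-scope)} on the outside then lift to the two-hole operator through \rulename{(c2-rem-l)}, \rulename{(c2-rem-r)}, \rulename{(c2-skip)}, depending on whether the authorization being discharged belongs to the left or the right premise, which is determined by tracking the exponent bookkeeping $\omega=\scope\NA^{2-i-j}\scope\NB^{1-k}$ back through the derivation. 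Closure under structural congruence (Lemma~\ref{lemm:lts_struct_congruence}) is invoked implicitly to line up the restrictions $\rest{\tilde\ND}$ at the top. For case 13, the three sub-bullets correspond to whether the lacking authorization $\scope\NA$ pertains to the subject of the communication (first bullet), to the object-authorization in a delegation (second bullet), or to the subject of a delegation when $\NA$ is the object (third bullet); each possibility matches a specific way the outer \rulename{(l-scope-int)} could have consumed one of the two $\operator$ arguments.

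The main obstacle will be keeping the accounting of $\operator$ precisely synchronised with the derivation traversal, in particular distinguishing between an authorization that was carried up by \rulename{(l-scope-ext)} (and therefore appears in the label, and must not be removed by $\operator$) and one that was consumed by \rulename{(l-scope-int)} (and therefore must appear in the first argument of $\operator$ and not on the output side). A subtle point, where care is needed, is that the side list $\tilde\ND$ of ``already removed'' names in the inductive definition of $\operator$ is internal and must not be confused with the top-level list: the inductive hypothesis must be stated so that it still holds when the operator is partially evaluated, which is why the abbreviation $\optop{\cdot}{\cdot}{\cdot}$ is restricted to empty internal lists. Once this invariant is carried through, all thirteen statements fall out uniformly as read-offs from the final derivation shape.
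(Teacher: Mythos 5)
Your overall strategy coincides with the paper's: an induction on the derivation of $\PP\lts{\alpha}\PQ$ with a case analysis on the last rule, reading the context off the prefix rules, \rulename{(l-par)}, \rulename{(l-res)}, \rulename{(l-open)} and the scope rules, while tracking $\operator$ alongside. However, there is a concrete error in the correspondence you set up at what you yourself call the heart of the analysis: you match \rulename{(l-scope-ext)} with \rulename{(c-skip)}. This is backwards, and carried out literally it makes the annotated cases (2, 4, 6, 7, 8, 9, 11) fail. When \rulename{(l-scope-ext)} fires, the authorization $\scope\NA$ it consumes moves from the process into the label, and the conclusion of the lemma in those cases is precisely that $\operator(\context[\cdot];\NA)$ is \emph{defined} and that $\context^-$ is obtained by \emph{removing} that occurrence of $\scope\NA$. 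So the scope introduced by \rulename{(l-scope-ext)} must be the one discharged by \rulename{(c-rem)} in the $\operator$ derivation; this is coherent because \rulename{(l-scope-ext)} necessarily fires at the innermost $\scope\NA$ on the path to the prefix (once the label is annotated, only \rulename{(l-scope)} applies to further occurrences of $\scope\NA$), and $\operator$ removes exactly the authorization nearest to the hole. Under your mapping, with both \rulename{(l-scope-ext)} and \rulename{(l-scope)} aligned with \rulename{(c-skip)}, no authorization would ever be removed for an external action, so $\operator(\context[\cdot];\NA)$ would remain undefined and your constructed $\context^-$ would retain a $\scope\NA$ that the label already carries, double-counting it. The correct dictionary is: \rulename{(l-scope-ext)} and \rulename{(l-scope-int)} both correspond to removal (\rulename{(c-rem)}, resp.\ \rulename{(c2-rem-l)}/\rulename{(c2-rem-r)}), while only \rulename{(l-scope)} corresponds to \rulename{(c-skip)}.

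The rest of your plan is sound and agrees with the paper's proof (which details only assertions 1 and 2 and leaves the remaining cases to analogous reasoning): the prefix rules give the trivial hole with $\operator$ undefined because no axiom accepts a nonempty to-remove list, \rulename{(l-par)} matches \rulename{(c-par)}, the authorization scopes are commuted past the restrictions $\rest{\tilde\ND}$ using \rulename{(sc-scope-auth)}, and the two-hole cases are assembled from the one-hole ones via \rulename{(c2-spl)}. Once the scope-rule dictionary is corrected, the invariant you describe does carry through.
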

\begin{proof}
The proof is by induction on the inference of $\PP\lts{\alpha}\PQ.$ We comment just the first two assertions.

$\mathit{1.}$ Suppose $\PP\lts{\send\NA\role\msg\NB}\PQ$ and let us show that  $\PP\equiv\rest{\tilde {\ND}}\context[\send\NA\role\msg\NB.\PP']$ and $\PQ\equiv\rest{\tilde {\ND}}\context[\scope\NA\PP']$ and $\operator(\context[\cdot]; \NA)$ is undefined.
 We get the base case by rule \rulename{(l-out)} $\send\NA\role\msg\NB\PP\lts{\send\NA\role\msg\NB}\scope\NA\PP.$ Here, $\send\NA\role\msg\NB\PP=\context[\send\NA\role\msg\NB\PP]$ and $\scope\NA\PP=\context[\scope\NA\PP],$ where $\context[\cdot]=\cdot.$ The operator $\operator(\context[\cdot]; \NA)$ is undefined, since the second parameter of the operator is not an empty list.
For induction steps we have next cases of the last applied rule: \rulename{(l-res)}, \rulename{(l-scope)} and \rulename{(l-par)}. If the last applied rule is \rulename{(l-res)}, we immediately get the result from the induction hypothesis. If the last applied rule is \rulename{(l-scope)}, we get $\PP\lts{\send\NA\role\msg\NB}\PQ$ from the $\scope\NC\PP\lts{\send\NA\role\msg\NB}\scope\NC\PQ$, where $\NC\not=\NA.$ By induction hypothesis $\PP\equiv\rest{\tilde {\ND}}\context[\send\NA\role\msg\NB.\PP']$ and $\PQ\equiv\rest{\tilde {\ND}}\context[\scope\NA\PP']$ and $\operator(\context[\cdot]; \NA)$ is undefined. Considering all free and bound names are different, by \rulename{(sc-scope-auth)} we get $\scope\NC\PP\equiv\rest{\tilde {\ND}}\scope\NC\context[\send\NA\role\msg\NB.\PP']$ and $\scope\NC\PQ\equiv\rest{\tilde {\ND}}\scope\NC\context[\scope\NA\PP'].$ For $\context'[\cdot]=\scope\NC\context[\cdot]$ we get that  $\scope\NC\PP\equiv\rest{\tilde {\ND}}\context'[\send\NA\role\msg\NB.\PP']$ and $\scope\NC\PQ\equiv\rest{\tilde {\ND}}\context'[\scope\NA\PP']$ and $\operator(\context'[\cdot]; \NA)$ is undefined since $\NC\not=\NA.$ Case  \rulename{(l-par)} we get by similar reasoning.

$\mathit{2.}$ Consider now $\PP\lts{\scope\NA\send\NA\role\msg\NB}\PQ.$
We get the base case by rule \rulename{(l-scope-int)}: $\scope\NA\PP\lts{\scope\NA\send\NA\role\msg\NB}\PQ$ is derived from $\PP\lts{\send\NA\role\msg\NB}\PQ.$ 
By the first part of the proof we get $\PP\equiv\rest{\tilde {\ND}}\context[\send\NA\role\msg\NB.\PP']$ and $\PQ\equiv\rest{\tilde {\ND}}\context[\scope\NA\PP']$ and $\operator(\context[\cdot]; \NA)$ is undefined. 
By Lemma~\ref{lemm:free_names} we get $\NA\in \fn\PP$, thus $\NA\not\in\tilde {\ND}.$ 
Then, by \rulename{(sc-scope-auth)} we get $\scope\NA\PP\equiv\rest{\tilde {\ND}}\scope\NA\context[\send\NA\role\msg\NB.\PP'].$ 
For $\context'[\cdot]=\scope\NA\context[\cdot],$ we get  $\operator(\context'[\cdot]; \NA)=\context[\cdot],$ which implies $\PQ\equiv\rest{\tilde {\ND}}\context[\scope{\NA}\PP'].$ Again, for the inductions steps we have same possibilities for the last applied rule : \rulename{(l-res)}, \rulename{(l-scope)} and \rulename{(l-par)}, all three cases are similar as in the first part of the proof.
\end{proof}

\begin{proposition}[Inversion on Drift]{\label{prop:Cases_for_contexts}}
\begin{itemize}
\item[1.] If $\optop{\context[\cdot_1, \cdot_2]}{\NA}{\NA}$ is defined and $\context[\cdot_1, \cdot_2]= \context'[\context_1[\cdot] \parop \context_2[\cdot]]$ then:
\begin{itemize}
\item {\bf(case $\scope \NA$ in $\context_1$ and $\context_2$):} $\context_1[\cdot]=\context_1'[\scope\NA \context_1''[\cdot]]$ and $\context_2[\cdot]=\context_2'[\scope\NA \context_2''[\cdot]],$ where $\operator(\context_1''[\cdot]; \NA)$ and $\operator(\context_2''[\cdot]; \NA)$ are undefined, or
\item {\bf(case $\scope \NA$ in $\context_1$ and not in $\context_2$):} $\context_1[\cdot]=\context_1'[\scope\NA \context_1''[\cdot]]$ and $\context'[\cdot]=\context_3'[\scope \NA\context_3''[\cdot]],$ where $\operator(\context_1''[\cdot]; \NA),$  $\operator(\context_2[\cdot]; \NA)$ and $\operator(\context_3''[\cdot]; \NA)$ are undefined, or
\item {\bf( case $\scope \NA$ not in $\context_1$ and it is in $\context_2$):} $\context_2[\cdot]=\context_2'[\scope\NA \context_2''[\cdot]]$ and $\context'[\cdot]=\context_3'[\scope \NA\context_3''[\cdot]],$ where $\operator(\context_2''[\cdot]; \NA),$  $\operator(\context_1[\cdot]; \NA)$ and $\operator(\context_3''[\cdot]; \NA)$ are undefined, or
\item {\bf(case $\scope \NA$ not in $\context_1$ and not in $\context_2$):} $\context'[\cdot]=\context_3[\scope\NA \context_3'[\cdot]]$ and $\context_3'[\cdot]=\context_4[\scope\NA \context_4'[\cdot]],$ where $\operator(\context_1[\cdot]; \NA),$ $\operator(\context_2[\cdot]; \NA),$ $\operator(\context_4[\cdot]; \NA, \NA)$ and \\
 $\operator(\context_4'[\cdot]; \NA)$ are undefined.
\end{itemize}

\item[2.] If $\optop{\context[\cdot_1, \cdot_2]}{\NA,\NB}{\NA}$ is defined and $\context[\cdot_1, \cdot_2]= \context'[\context_1[\cdot] \parop \context_2[\cdot]]$ then:
\begin{itemize}
\item {\bf(case $\scope \NA, \scope\NB$ in $\context_1$ and $\scope\NA$ in $\context_2$):} $\context_1[\cdot]=\context_1^{1}[\scope\NA \context_1^{2}[\cdot]]$ and $\context_1^{2}[\cdot]=\context_1^{3}[\scope\NB \context_1^{4}[\cdot]],$ and $\context_2[\cdot]=\context_2'[\scope\NA \context_2''[\cdot]],$ where $\operator(\context_1^2[\cdot]; \NA),$ $\operator(\context_1^4[\cdot]; \NB)$ and $\operator(\context_2''[\cdot]; \NA)$ are undefined, or
\item {\bf(case $\scope \NA, \scope\NB$ in $\context_1$ and $\scope\NA$ in $\context_2$):} $\context_1[\cdot]=\context_1^{1}[\scope\NB \context_1^{2}[\cdot]]$ and $\context_1^{2}[\cdot]=\context_1^{3}[\scope\NA \context_1^{4}[\cdot]],$ and $\context_2[\cdot]=\context_2'[\scope\NA \context_2''[\cdot]],$ where $\operator(\context_1^2[\cdot]; \NB),$ $\operator(\context_1^4[\cdot]; \NA)$ and $\operator(\context_2''[\cdot]; \NA)$ are undefined, or
\item {\bf(case $\scope \NA, \scope\NB$ in $\context_1$ and $\scope\NA$ not in $\context_2$):} $\context_1[\cdot]=\context_1^{1}[\scope\NA \context_1^{2}[\cdot]]$ and $\context_1^{2}[\cdot]=\context_1^{3}[\scope\NB \context_1^{4}[\cdot]],$ and $\context'[\cdot]=\context_3'[\scope \NA\context_3''[\cdot]],$ where $\operator(\context_1^2[\cdot]; \NA),$ $\operator(\context_1^4[\cdot]; \NB),$\\
  $\operator(\context_2[\cdot]; \NA)$ and $\operator(\context_3''[\cdot]; \NA)$ are undefined, or
\item {\bf(case $\scope \NA, \scope\NB$ in $\context_1$ and $\scope\NA$ not in $\context_2$):} $\context_1[\cdot]=\context_1^{1}[\scope\NB \context_1^{2}[\cdot]]$ and $\context_1^{2}[\cdot]=\context_1^{3}[\scope\NA \context_1^{4}[\cdot]]$ and $\context'[\cdot]=\context_3'[\scope \NA\context_3''[\cdot]],$ where $\operator(\context_1^2[\cdot]; \NB),$ $\operator(\context_1^4[\cdot]; \NA),$\\
  $\operator(\context_2[\cdot]; \NA)$ and $\operator(\context_3''[\cdot]; \NA)$ are undefined, or
\item the rest of 10 cases are analog.
\end{itemize}
\end{itemize}
\end{proposition}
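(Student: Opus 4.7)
The plan is to prove both parts by induction on the structure of the wrapping context $\context'[\cdot]$ (or equivalently, on the height of the derivation tree that shows the drift operation is defined via the rules of Table~\ref{tab:op_for_two_holes}), performing case analysis on the outermost constructor of $\context'$ at each step. Part 2 follows the same pattern as part 1, so I will describe the plan for part 1 in detail and indicate what changes for part 2.

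For the base case $\context'[\cdot] = \cdot$, the two-hole context coincides with $\context_1[\cdot] \parop \context_2[\cdot]$, and the only rule of Table~\ref{tab:op_for_two_holes} that applies is \rulename{(c2-spl)}. Hence both $\ophole{\context_1[\cdot]}{\NA}{\emptyset}$ and $\ophole{\context_2[\cdot]}{\NA}{\emptyset}$ must be defined. An inversion on the one-hole rules of Table~\ref{tab:op_for_one_hole} then shows that each of $\context_1$ and $\context_2$ must contain at least one $\scope\NA$ on the path to its hole, with the nearest $\scope\NA$ identifying the decomposition $\context_i[\cdot] = \context_i'[\scope\NA\context_i''[\cdot]]$. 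Moreover, the side-condition $\NC\notin\tilde{\ND}$ of \rulename{(c-skip)} forces $\operator(\context_i''[\cdot];\NA)$ to be undefined (since $\NA$ is in the already-removed list at this point), yielding case~1 of the proposition.

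For the inductive step, I case-split on the outermost constructor of $\context'$. If $\context'[\cdot] = \PR\parop\context'''[\cdot]$, rule \rulename{(c2-par)} strips $\PR$ and I apply the IH to $\context'''$, then lift each of the four resulting cases by prepending $\PR\parop$ to the relevant sub-context. If $\context'[\cdot] = \scope\NC\context'''[\cdot]$ with $\NC\neq\NA$, only \rulename{(c2-skip)} applies, and I lift by $\scope\NC$. If $\context'[\cdot] = \scope\NA\context'''[\cdot]$, then either \rulename{(c2-rem-l)}, \rulename{(c2-rem-r)}, or \rulename{(c2-skip)} may be the last rule used; I case-split on which: \rulename{(c2-rem-l)} corresponds to this $\scope\NA$ being consumed for the left hole (so the left hole's $\scope\NA$ lives in $\context'$, placing us in case~2 or case~4 of the conclusion), \rulename{(c2-rem-r)} is symmetric (case~3 or case~4), and \rulename{(c2-skip)} may apply only if this particular $\scope\NA$ is retained because nearer ones have already been marked — which is ruled out initially but possible in deeper inductive states.

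To make the induction go through cleanly I would strengthen the statement to an auxiliary lemma that characterizes what $\opholes{\context[\cdot_1,\cdot_2]}{\tilde\NA}{\tilde\NB}{\tilde\ND}{\tilde\NE}$ being defined says about the location of authorizations for names in $\tilde\NA$ and $\tilde\NB$, allowing non-empty $\tilde\ND$ and $\tilde\NE$. This generalized statement is needed because after an application of \rulename{(c2-rem-l)} or \rulename{(c2-rem-r)} the accumulator lists are no longer empty. For part 2, the same scheme applies but with $\tilde\NA = \NA,\NB$ and $\tilde\NB = \NA$: each time we cross a $\scope\NA$ or $\scope\NB$ in $\context'$ we choose which hole's accounting it feeds, producing the larger family of cases listed (including the two orderings of $\scope\NA$ above $\scope\NB$ versus $\scope\NB$ above $\scope\NA$ inside $\context_1$, reflecting the nearest-authorization discipline for two distinct names on the same side).

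The main obstacle is bookkeeping: tracking through every inductive step how the pair of accumulators evolves, formulating the strengthened IH so that the four (respectively fourteen) conclusion cases remain distinguishable, and verifying in each case that the "undefined below" clauses follow correctly from the $\NC\notin\tilde\ND,\tilde\NE$ side-condition of \rulename{(c2-skip)} and the exhaustion of the to-remove lists required by \rulename{(c-end)}. The combinatorics in part 2 is the most delicate: with two distinct names on the left and the nearest-authorization property applied to each, one must show that the relative order of $\scope\NA$ and $\scope\NB$ on any left-side path is actually determined by the order in which \rulename{(c-rem)} (or \rulename{(c2-rem-l)}) fires during the drift derivation, giving the two symmetric sub-cases that appear in the statement.
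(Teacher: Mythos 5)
The paper states Proposition~\ref{prop:Cases_for_contexts} without any proof, so there is no argument of record to compare against. Your overall plan --- induction on the structure of the outer context $\context'$ (equivalently, on the drift derivation of Table~\ref{tab:op_for_two_holes}), with a strengthened induction hypothesis allowing non-empty accumulator lists $\tilde \ND,\tilde \NE$ --- is the natural way to establish the result, and your base case via \rulename{(c2-spl)} followed by inversion on the one-hole rules, including the observation that the \rulename{(c-skip)} side condition is what forces $\operator(\context_i''[\cdot];\NA)$ to be undefined below the removed occurrence, is correct.

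There is, however, one concrete error in the inductive step that would derail the case analysis if carried out as written. For $\context'[\cdot]=\scope\NA\context'''[\cdot]$ you claim that \rulename{(c2-skip)} ``may apply only if \ldots\ nearer ones have already been marked --- which is ruled out initially.'' This is backwards: the side condition of \rulename{(c2-skip)} is $\NC\notin\tilde\ND,\tilde\NE$, so an authorization may be \emph{preserved} only as long as no occurrence of $\NA$ has yet been removed on the corresponding side; skipping becomes unavailable, not available, once $\NA$ enters an accumulator. In particular, with the initial empty accumulators \rulename{(c2-skip)} is applicable to an outermost $\scope\NA$, and is in fact forced in configurations such as $\scope\NA(\scope\NA\cdot_1\parop\scope\NA\cdot_2)$: applying \rulename{(c2-rem-l)} there puts $\NA$ into $\tilde\ND$, after which the $\scope\NA$ over $\cdot_1$ can be neither removed (the left to-remove list is exhausted) nor skipped (the side condition fails), so only the skip-at-the-top derivation succeeds. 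Getting this direction right is not cosmetic: it is exactly what encodes the nearest-to-the-hole discipline and hence the ``undefined'' clauses in the conclusion, and with the direction inverted your case split for a leading $\scope\NA$ in $\context'$ omits the subcase that lands in the first bullet of part~1 (both removed occurrences lying strictly inside $\context_1$ and $\context_2$ even though $\context'$ also mentions $\scope\NA$). The remainder of the plan --- transparency of \rulename{(c2-par)} and of \rulename{(c2-skip)} for $\NC\neq\NA$, and the bookkeeping for the two-name combinatorics of part~2 --- is sound once this is corrected.
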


\begin{lemma}[Harmony - LTS]{\label{lemm:reduction_tau}}
If $\PP\red\PQ$ then there is $\PQ'$ such that $\PQ\equiv\PQ'$ and $\PP\lts{\tau}\PQ'.$ 
\end{lemma}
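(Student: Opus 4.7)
The plan is to proceed by induction on the derivation of $\PP \red \PQ$, with a case analysis on the last reduction rule applied. The four cases are \rulename{(r-comm)}, \rulename{(r-auth)}, \rulename{(r-stru)}, and \rulename{(r-newc)}. The easy cases are the last two: for \rulename{(r-newc)}, if $\PP \red \PQ$ is derived from $\PP' \red \PQ'$ with $\PP = \rest{\NA}\PP'$ and $\PQ = \rest{\NA}\PQ'$, the induction hypothesis gives $\PP' \lts{\tau} \PQ''$ with $\PQ'' \equiv \PQ'$, and rule \rulename{(l-res)} immediately lifts this to $\rest{\NA}\PP' \lts{\tau} \rest{\NA}\PQ''$ with $\rest{\NA}\PQ'' \equiv \rest{\NA}\PQ'$. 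For \rulename{(r-stru)}, i.e., $\PP \equiv \PP' \red \PQ' \equiv \PQ$, the induction hypothesis yields $\PP' \lts{\tau} \PQ''$ with $\PQ'' \equiv \PQ'$, and Lemma~\ref{lemm:lts_struct_congruence} (LTS Closure Under Structural Congruence) transports the transition across the congruence $\PP \equiv \PP'$ to obtain $\PP \lts{\tau} \PQ'''$ with $\PQ''' \equiv \PQ'' \equiv \PQ' \equiv \PQ$.

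The core of the argument is the case \rulename{(r-comm)} (and symmetrically \rulename{(r-auth)}). Here $\PP = \context[\send\NA\role\msg\NB.\PP_1, \receive\NA\role\msg{x}.\PP_2]$ and $\PQ = \context'[\scope{\NA}\PP_1, \scope{\NA}\PP_2\subst{\NB}{x}]$, with $\optop{\context[\cdot_1,\cdot_2]}{\NA}{\NA} = \context'[\cdot_1,\cdot_2]$. The goal is to build, bottom-up, an LTS derivation of $\PP \lts{\tau} \PQ'$ with $\PQ' \equiv \PQ$. At the leaves, rules \rulename{(l-out)} and \rulename{(l-in)} produce transitions $\send\NA\role\msg\NB.\PP_1 \lts{\send\NA\role\msg\NB} \scope{\NA}\PP_1$ and $\receive\NA\role\msg{x}.\PP_2 \lts{\receive\NA\role\msg\NB} \scope{\NA}\PP_2\subst{\NB}{x}$, and these can be combined via \rulename{(l-comm)} (or \rulename{(l-par)} beforehand) into a $\tau_\omega$ transition whose annotation $\omega$ records the two authorizations still lacking. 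The key step is then to show that as we climb the context from the holes toward the top, each application of the scoping construct $\scope{\NA}$ can either be crossed by \rulename{(l-scope)} (when it refers to a name not relevant to $\omega$) or consumed by \rulename{(l-scope-int)} (when it discharges one of the outstanding $\scope{\NA}$ annotations in $\omega$), and that by the time we reach the top the $\omega$ annotation has been reduced to the empty decoration, yielding a plain $\tau$ transition.

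The main obstacle is matching the step-by-step consumption of authorizations in the LTS with the global removal performed by the $\operator$ operation; that is, showing that the particular authorizations that the LTS is forced to consume (via \rulename{(l-scope-int)}) are exactly those removed by $\operator$ from $\context$ to produce $\context'$. The bridge is Proposition~\ref{prop:Cases_for_contexts} (Inversion on Drift), which describes precisely where the two removed authorizations for $\NA$ must lie in $\context$ relative to the two holes: either one appears inside each one-hole sub-context that contains a hole (in which case the derivation uses \rulename{(l-scope-ext)} on each branch before \rulename{(l-comm)}, and each branch yields an outgoing action already carrying its $\scope{\NA}^1$), or one (or both) of the authorizations lives in the common part above the splitting, in which case they are discharged after \rulename{(l-comm)} via \rulename{(l-scope-int)} as we climb toward the root. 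In all four sub-cases of the proposition, the extraction rules \rulename{(c-rem)}, \rulename{(c-skip)}, \rulename{(c-par)}, and \rulename{(c-end)} of Table~\ref{tab:op_for_two_holes} correspond, scope-by-scope, with the LTS rules applied, and crossing a scope $\scope{\NC}$ with $\NC \neq \NA$ is handled uniformly by \rulename{(l-scope)}, mirroring \rulename{(c2-skip)}. Parallel composition with unrelated contextual processes $\PR$ is handled by \rulename{(l-par)}, mirroring \rulename{(c2-par)}. Along the way one needs side conditions on freshness/capture-avoidance, which follow from the usual Barendregt-style convention on bound names used throughout the paper.

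The case \rulename{(r-auth)} follows the same scheme, replacing \rulename{(l-out)}/\rulename{(l-in)}/\rulename{(l-comm)} by \rulename{(l-out-a)}/\rulename{(l-in-a)}/\rulename{(l-auth)}, and using the authorization-delegation clauses of Proposition~\ref{prop:Cases_for_contexts}: now three authorizations ($\NA$, $\NA$, and $\NB$) must be discharged, two directly inside the delegating and receiving sub-contexts or above the splitting (giving the two $\scope{\NA}$'s required for the communication), and one ($\scope{\NB}$) located on the delegating side or above, giving the authorization carried by the delegation. The structural matches between $\operator$-derivations and LTS-derivations are identical in spirit. In the resulting process on the LTS side the confined authorization for $\NB$ ends up in front of the continuation $\PP_2$ thanks to \rulename{(l-in-a)}, matching the shape $\scope{\NA}\scope{\NB}\PQ$ on the reduction side up to structural congruence (in particular \rulename{(sc-auth-swap)} when needed).
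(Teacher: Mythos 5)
Your plan matches the paper's proof essentially step for step: induction on the reduction derivation, with \rulename{(r-newc)} handled by \rulename{(l-res)}, \rulename{(r-stru)} by Lemma~\ref{lemm:lts_struct_congruence}, and the redex cases \rulename{(r-comm)}/\rulename{(r-auth)} resolved by locating the removed authorizations via Proposition~\ref{prop:Cases_for_contexts} and assembling the LTS derivation bottom-up with \rulename{(l-out)}/\rulename{(l-in)} (resp.\ \rulename{(l-out-a)}/\rulename{(l-in-a)}), \rulename{(l-par)}, \rulename{(l-scope)}, \rulename{(l-scope-ext)}/\rulename{(l-scope-int)} and \rulename{(l-comm)}/\rulename{(l-auth)}. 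The approach is correct and coincides with the paper's, which likewise only details one of the sub-cases of the drift-inversion proposition and declares the others analogous.
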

\begin{proof}
The proof is by induction on the derivation $\PP\red\PQ.$ We obtain two base cases:
\begin{itemize}
\item \rulename{(r-comm)}:\\
 $\context [\send\NA\role\msg\NB.\PP',\receive\NA\role\msg\NX.\PQ']\red\context^- [\scope{\NA}\PP', \scope{\NA}\PQ'\subst{\NB}{\NX} ],$ for $\context^-[\cdot_1, \cdot_2]=\operator(\context[\cdot_1, \cdot_2]; \NA;\NA).$ By \rulename{(l-out)} and \rulename{(l-in)} we get $\send\NA\role\msg\NB.\PP'\lts{\send\NA\role\msg\NB}\scope\NA\PP'$ and $\receive\NA\role\msg\NX.\PQ'\lts{\receive\NA\role\msg\NB}\scope{\NA}\PQ'\subst{\NB}{\NX}.$ 
 %We use that $\context[\cdot,\cdot]=\context'[\context_1[\cdot]\parop\context_2[\cdot]],$ for some contexts $\context'[\cdot], \context_1[\cdot]$ and $\context_2[\cdot].$ We proceed the proof by induction on the structure of the context $\context[\cdot,\cdot]$ starting with $\context_1[\cdot]$ and $\context_2[\cdot].$
%If $\context_1[\cdot]=\context_1'[\scope\NA\context_1''[\cdot]]$, and $\context_1''[\send\NA\role\msg\NB.\PP']\lts{\send\NA\role\msg\NB}\context_1''[\scope\NA\PP'].$ Then by \rulename{(l-scope1)} we get  $\scope\NA\context_1''[\send\NA\role\msg\NB.\PP']\lts{\scope\NA\send\NA\role\msg\NB}\context_1''[\scope\NA\PP'].$ Since process $\send\NA\role\msg\NB.\PP'$ is not under the scope of $\scope\NA$ in the context $\context_1''[\cdot]$ we have $\operator(\context_1''[\cdot];\NA)$ is undefined and $\operator(\scope\NA\context_1''[\cdot];\NA)=\context_1''[\cdot].$ 
%If $\context_1[\cdot]=\context_1'[\scope\NA\context_1''[\cdot]]$, and $\context_1''[\send\NA\role\msg\NB.\PP']\lts{\scope\NA\send\NA\role\msg\NB}\context_1'''[\scope\NA\PP'].$ Then by \rulename{(l-scope2)} we get  $\scope\NA\context_1''[\send\NA\role\msg\NB.\PP']\lts{\scope\NA\send\NA\role\msg\NB}\scope\NA\context_1'''[\scope\NA\PP'].$ But now since process $\send\NA\role\msg\NB.\PP'$ is under the scope of $\scope\NA$ in the context $\context_1''[\cdot]$ we have $\operator(\context_1''[\cdot];\NA)=\context_1'''[\cdot]$ and $\operator(\scope\NA\context_1''[\cdot];\NA)=\scope\NA\context_1'''[\cdot].$ 
By Proposition~\ref{prop:Cases_for_contexts} we distinguish four cases for the structure of the context $\context[\cdot_1, \cdot_2]=\context'[\context_1[\cdot] \parop \context_2[\cdot]].$ We comment only the case when $\context_1[\cdot]=\context_1'[\scope\NA \context_1''[\cdot]]$ and $\context_2[\cdot]=\context_2'[\scope\NA \context_2''[\cdot]],$ where $\operator(\context_1''[\cdot]; \NA)$ and $\operator(\context_2''[\cdot]; \NA)$ are undefined, thus in contexts $\context_1''[\cdot]$ and $\context_2''[\cdot]$ the holes are not in the scope of authorizations $\scope\NA.$ By induction on contexts $\context_1''[\cdot]$ and $\context_2''[\cdot]$ using rules \rulename{(l-par)} and \rulename{(l-scope)} we get $\context_1''[\send\NA\role\msg\NB.\PP']\lts{\send\NA\role\msg\NB}\context_1''[\scope\NA\PP']$ and $\context_2''[\receive\NA\role\msg\NX.\PQ']\lts{\receive\NA\role\msg\NB}\context_2''[\scope{\NA}\PQ'\subst{\NB}{\NX}].$ By \rulename{(l-scope-ext)} we get $\scope\NA\context_1''[\send\NA\role\msg\NB.\PP']\lts{\scope\NA\send\NA\role\msg\NB}\context_1''[\scope\NA\PP']$ and $\scope\NA\context_2''[\receive\NA\role\msg\NX.\PQ']\lts{\scope\NA\receive\NA\role\msg\NB}\context_2''[\scope{\NA}\PQ'\subst{\NB}{\NX}].$ Proceeding by induction on contexts $\context_1'[\cdot]$ and $\context_2'[\cdot]$ using rules \rulename{(l-par)} and \rulename{(l-scope)} we get $\context_1'[\scope\NA\context_1''[\send\NA\role\msg\NB.\PP']]\lts{\scope\NA\send\NA\role\msg\NB}\context_1'[\context_1''[\scope\NA\PP']]$ and $\context_2'[\scope\NA\context_2''[\receive\NA\role\msg\NX.\PQ']]\lts{\scope\NA\receive\NA\role\msg\NB}\context_2'[\context_2''[\scope{\NA}\PQ'\subst{\NB}{\NX}]].$ By \rulename{(l-comm)} 
\[
\context_1'[\scope\NA\context_1''[\send\NA\role\msg\NB.\PP']] \parop \context_2'[\scope\NA\context_2''[\receive\NA\role\msg\NX.\PQ']]\lts{\tau}\context_1'[\context_1''[\scope\NA\PP']] \parop\context_2'[\context_2''[\scope{\NA}\PQ'\subst{\NB}{\NX}]].
\]
By induction on the structure of context $\context'[\cdot]$ and using rules \rulename{(l-par)} and \rulename{(l-scope)} again, we get 
\[
\context[\send\NA\role\msg\NB.\PP', \receive\NA\role\msg\NX.\PQ']\lts{\tau}
\context'[\context_1'[\context_1''[\scope\NA\PP']] \parop\context_2'[\context_2''[\scope{\NA}\PQ'\subst{\NB}{\NX}]]].
\]
Now we just need to note that $\optop{\context[\cdot_1, \cdot_2]}{\NA}{\NA}=\context'[\context_1'[ \context_1''[\cdot]]\parop\context_2'[ \context_2''[\cdot]].$ 

\item \rulename{(r-auth)} apply similar reasoning.
\end{itemize}

For the induction step we have two possible last applied rules:

\rulename{(r-newc)}: Assume that $\rest\NA\PP\red\rest\NA\PQ$ is derived from $\PP\red\PQ.$ By induction hypothesis $\PP\lts{\tau}\PQ'$ where $\PQ\equiv\PQ'$. By \rulename{(l-res)} we get $\rest\NA\PP\lts{\tau}\rest\NA\PQ'$. By contextually of $\equiv,$ from $\PQ\equiv \PQ'$ we get $\rest\NA\PQ\equiv \rest\NA\PQ'.$

\rulename{(r-stru)}: Assume that $\PP\red\PQ$ is derived from $\PP\equiv\PP'\red\PQ'\equiv\PQ.$ From the induction hypothesis $\PP'\lts{\tau}\PQ'',$ where $\PQ'\equiv\PQ''.$  By Lemma \ref{lemm:lts_struct_congruence} we get $\PP\lts{\tau}\PQ''',$ where $\PQ'''\equiv\PQ''.$ Since the structural congruence is equivalence relation we get $\PQ'''\equiv\PQ''\equiv \PQ'\equiv \PQ.$
\end{proof}

\begin{lemma}[Harmony - Reduction]{\label{lemm:tau_reduction}}
If $\PP\lts{\tau}\PQ$ then $\PP\red\PQ.$
\end{lemma}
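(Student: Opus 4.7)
The plan is to proceed by induction on the derivation of $\PP\lts{\tau}\PQ$, doing case analysis on the last rule applied. The last rule must be one of \rulename{(l-comm)}, \rulename{(l-close)}, \rulename{(l-auth)} (where the label $\tau_\omega$ already has $\omega$ empty, i.e.\ all exponents are zero), \rulename{(l-scope-int)} (which absorbs a missing authorization), or one of the contextual rules \rulename{(l-par)}, \rulename{(l-res)}, \rulename{(l-scope)}.

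For the base cases \rulename{(l-comm)}, \rulename{(l-close)}, \rulename{(l-auth)}, the premises give sub-derivations of the two parallel participants carrying the appropriate authorizations in their labels (so that $2-i-j = 0$, and $1-k = 0$ in the delegation case). I would then invoke Lemma~\ref{lemm:inv_on_lts} on those sub-derivations to obtain the structural shape of each participant as $\rest{\tilde{\ND}}\context_i[\prefix.\PP_i']$ with $\operator$ applied to the one-hole contexts yielding well-defined results consistent with the authorizations being carried. Combining the two static contexts into a single two-hole context and using structural congruence (\rulename{(sc-res-extr)} and \rulename{(sc-par-comm)}) to push all restrictions to the outside, the combined $\operator$ on the two-hole context will be defined, so \rulename{(r-comm)} or \rulename{(r-auth)} fires, concluding via \rulename{(r-stru)} and \rulename{(r-newc)}.

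For the contextual cases \rulename{(l-par)}, \rulename{(l-scope)}, \rulename{(l-res)}, the induction hypothesis applies directly to the $\tau$-transition of the sub-process, yielding a reduction $\PP'\red\PQ'$. For \rulename{(l-par)} and \rulename{(l-scope)}, the reduction is lifted by enlarging the static context of the underlying \rulename{(r-comm)}/\rulename{(r-auth)} instance with the extra $\parop\PR$ or $\scope\NA$, which remain valid static contexts by Table~\ref{tab:Contexts}; structural congruence may be used to restore the required shape. For \rulename{(l-res)} the reduction is lifted via \rulename{(r-newc)}.

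The main obstacle is the \rulename{(l-scope-int)} case. Here $\scope\NA\PP\lts{\tau}\PQ$ is derived from $\PP\lts{\tau_{\scope\NA}}\PQ$, and the induction hypothesis does not apply since the sub-derivation label is not $\tau$. I would instead appeal to Lemma~\ref{lemm:inv_on_lts}(13), which characterizes exactly the $\tau_{\scope\NA}$ transitions: $\PP$ decomposes as $\rest{\tilde{\ND}}\context[\prefix_1.\PP',\prefix_2.\PP'']$ where one specific two-hole $\operator$ is defined while the fully-authorized one (\emph{with} one more $\scope\NA$ to consume) is undefined. Adding the outer $\scope\NA$ to the context yields a two-hole context whose $\operator$ \emph{is} defined, because the outer $\scope\NA$ supplies the missing authorization; this is exactly the structural situation required by \rulename{(r-comm)} or \rulename{(r-auth)}. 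Care is required to track which of the three sub-cases of Lemma~\ref{lemm:inv_on_lts}(13) applies and to verify, using the inductive structure of $\operator$ in Tables~\ref{tab:op_for_one_hole} and~\ref{tab:op_for_two_holes}, that wrapping with $\scope\NA$ indeed transforms an undefined $\operator$ (missing one $\scope\NA$) into a defined one; this is where the most detailed bookkeeping happens and constitutes the principal technical difficulty of the proof.
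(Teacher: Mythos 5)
Your proposal is correct and follows essentially the same route as the paper's proof: induction on the derivation of $\PP\lts{\tau}\PQ$, with the base cases handled by Lemma~\ref{lemm:inv_on_lts} plus \rulename{(sc-res-extr)} and the reduction rules, the contextual cases by the induction hypothesis and context enlargement, and the \rulename{(l-scope-int)} case resolved exactly as you describe, via clause 13 of the inversion lemma and the observation that prefixing the two-hole context with $\scope\NA$ turns the previously undefined $\operator$ into a defined one. You have correctly identified the principal technical difficulty and its resolution, which is where the paper also concentrates its bookkeeping.
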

\begin{proof}
The proof is by induction on the derivation $\PP\lts{\tau}\PQ$. The base cases are obtained by:
\begin{itemize}
\item \rulename{(l-comm)}: If $\PP_1\parop\PQ_1\lts{\tau}\PP_2\parop\PQ_2$ is derived from $\PP_1\lts{\alpha}\PP_2$ and $\PQ_1\lts{\dual\alpha}\PQ_2,$ where $\alpha,\dual\alpha\in\{\scope\NA\send\NA\role\msg\NB, \scope\NA\receive\NA\role\msg\NB\}.$ By Lemma~\ref{lemm:inv_on_lts}, one possibility is, up to the structural congruence 
\[
\PP_1,\PQ_1\in\{\rest{\tilde {\ND_1}}\context_1[\send\NA\role\msg\NB.\PP_1'], \rest{\tilde {\ND_2}}\context_2[\receive\NA\role\msg\NB.\PQ_1'] \}
\]
and
\[
\PP_2,\PQ_2\in\{\rest{\tilde {\ND}}\context_1^-[\scope{\NA}\PP_1'], \rest{\tilde {\ND_2}}\context_2^-[\scope{\NA}\PQ_1'\subst{\NB}{\NX}]\}.
\]
where $\context_1^-=\operator(\context_1[\cdot]; \NA)$ and $\context_2^-=\operator(\context_2[\cdot]; \NA).$

By \rulename{(sc-res-extr)} we have\\
 $\rest{\tilde {\ND_1}}\context_1[\send\NA\role\msg\NB.\PP_1']\parop \rest{\tilde {\ND_2}}\context_2[\receive\NA\role\msg\NB.\PQ_1'] \equiv \rest{\tilde {\ND_1},\tilde {\ND_2}}(\context[\send\NA\role\msg\NB.\PP_1', \receive\NA\role\msg\NB.\PQ_1'])$ and\\
 $\rest{\tilde {\ND_1}}\context_1^-[\scope{\NA}\PP_1']\parop \rest{\tilde {\ND_2}}\context_2^-[\scope{\NA}\PQ_1'\subst{\NB}{\NX}]$
 $\equiv \rest{\tilde {\ND_1},\tilde {\ND_2}}\context^-[\scope{\NA}\PP_1',\scope{\NA}\PQ_1'\subst{\NB}{\NX}],$\\
 where $\context[\cdot_1, \cdot_2]=\context_1[\cdot_1]\parop\context_2[\cdot_2]$  and $\context^-[\cdot_1, \cdot_2]=\operator(\context'[\cdot_1, \cdot_2];\NA;\NA).$\\
  By \rulename{(r-comm)} and \rulename{(r-newc)} we get \\
 $$\rest{\tilde {\ND_1},\tilde {\ND_2}}\context[\send\NA\role\msg\NB.\PP_1', \receive\NA\role\msg\NB.\PQ_1']\red \rest{\tilde {\ND_1},\tilde {\ND_2}}\context^-[\scope{\NA}\PP_1',\scope{\NA}, \PQ_1'\subst{\NB}{\NX}].$$
 
 The proof is analogous if  
 \[
\PP_1,\PQ_1\in\{\rest{\tilde {\ND_1}}\context_1[\send\NA\role\msg\NB.\PP_1'], \rest{\tilde {\ND_2}}\context_2[\repreceive\NA\role\msg\NB.\PQ_1'] \}
\]
and
\[
\PP_2,\PQ_2\in\{\rest{\tilde {\ND_1}}\context_1^-[\scope{\NA}\PP_1'], \rest{\tilde {\ND_2}}\context_2^-[\scope{\NA}\PQ_1'\subst{\NB}{\NX} \parop \repreceive\NA\role\msg\NB.\PQ_1']\},
\]
where $\context_1^-[\cdot]=\operator(\context_1[\cdot]; \NA)$ and $\context_2^-[\cdot]=\operator(\context_2[\cdot]; \emptyset)=\context_2[\cdot].$

\item \rulename{(l-close)} and  \rulename{(l-auth)} we get by the similar reasoning, where for the second we apply \rulename{(r-auth)} instead of rule \rulename{(r-comm)}.
\end{itemize}
For the induction step,  the last applied rule is one of the following:
\begin{itemize}
\item \rulename{(l-res)}: $\rest\NA\PP\lts{\tau}\rest\NA\PQ$ is derived from $\PP\lts{\tau}\PQ.$ By induction hypothesis $\PP\red\PQ$ and by \rulename{(r-newc)} $\rest\NA\PP\red\rest\NA\PQ$ 
%We get the proof by induction hypothesis and contextually of $\equiv.$ 
\item \rulename{(l-scope-ext)}: $\scope\NA\PP\lts{\tau}\PQ$ is derived from $\PP\lts{\tau_{\scope\NA}}\PQ.$ By Lemma~\ref{lemm:inv_on_lts}, we have three cases, we detail only the case when $\PP\equiv \rest{\tilde {\ND}}\context[\send\NA\role\msg\NB.\PP', \receive\NA\role\msg\NX.\PP'']$ and
 $\PQ\equiv\rest{\tilde {\ND}}\context^-[\scope\NA\PP', \scope\NA\PP''\subst{\NB}{\NX}],$ where $\context^-[\cdot_1, \cdot_2]=\operator(\context[\cdot_1, \cdot_2]; \NA;\emptyset),$ or $\context^-[\cdot_1; \cdot_2]=\operator(\context[\cdot_1, \cdot_2]; \emptyset; \NA),$ and
  $\operator(\context[\cdot_1, \cdot_2]; \NA; \NA)$ is undefined. By\\
   \rulename{(sc-scope-auth)} we get $\scope\NA\PP\equiv \rest{\tilde {\ND}}\scope\NA\context[\send\NA\role\msg\NB.\PP', \receive\NA\role\msg\NX.\PP''].$ \\
 Then $\operator(\scope\NA\context[\cdot_1, \cdot_2]; \NA,\NA)=\context^-[\cdot_1, \cdot_2].$ Thus, by \rulename{(r-comm)}, \rulename{(r-newc)} and \rulename{(r-stru)} we get the proof.
\item \rulename{(l-scope)}:  $\scope\NC\PP\lts{\tau}\scope\NC\PQ$ is derived from $\PP\lts{\tau}\PQ,$ or $\PP\parop\PR\lts{\tau}\PQ\parop\PR$ is derived from $\PP\lts{\tau}\PQ.$ By induction hypothesis $\PP\red\PQ.$ Then 
 $\PP\equiv \rest{\tilde {\ND}}\context[\send\NA\role\msg\NB.\PP_1, \receive\NA\role\msg\NB.\PP_2]$ and 
  $\PQ\equiv \rest{\tilde {\ND}}\context^-[\scope\NA\PP_1, \scope\NA\PP_2\subst{\NB}{\NA}]$, where $\context^-[\cdot_1, \cdot_2]=\operator(\context[\cdot_1, \cdot_2]; \NA; \NA)$ or  
  $\PP\equiv \rest{\tilde {\ND}}\context[\sauth\NA\role\msg\NB.\PP_1, \rauth\NA\role\msg\NB.\PP_2]$  and \\
  $\PQ\equiv \rest{\tilde {\ND}}\context^-[\scope\NA\PP_1, \scope\NA\scope\NB\PP_2]$, where $\context^-[\cdot_1, \cdot_2]=\operator(\context[\cdot_1, \cdot_2]; \NA,\NB; \NA)$. Thus, by \rulename{(sc-scope-auth)} we get $\scope\NC\PP\equiv \rest{\tilde {\ND}}\scope\NC\context[\send\NA\role\msg\NB.\PP_1, \receive\NA\role\msg\NB.\PP_2]$ or \\
  $\scope\NC\PP\equiv \rest{\tilde {\ND}}\scope\NC\context[\sauth\NA\role\msg\NB.\PP_1, \rauth\NA\role\msg\NB.\PP_2]$ where in the former case we get the proof by \rulename{(r-comm)}, \rulename{(r-newc)} and \rulename{(r-stru)}, and in the latter case we get the proof by \rulename{(r-auth)}, \rulename{(r-newc)} and \rulename{(r-stru)}.

\item \rulename{(l-par)}: by similar reasoning as for \rulename{(l-scope)}.
\end{itemize}
\end{proof}

\begin{theorema}{\ref{cor:red=tau}~(Harmony)}

 $\PP\red\PQ$ if and only if $\PP\lts{\tau}\equiv\PQ.$
 
\end{theorema}
\begin{proof}
The proof follows directly from Lemma \ref{lemm:reduction_tau} and Lemma \ref{lemm:tau_reduction}.
\end{proof}

%%%%%%%%%%%%%%%%%%%%%%%%%%%%%%%%%%%%%%%%%%%%%%%%%%%%%%%%%%%%%%%%%%%%%%%%%%

\subsection{Proofs from Section~\ref{sec:Types}}
\label{app:typesproofs}
\begin{proposition}[Preservation of Well-formedness]{\label{lemm:well-formed}}

If $\PP$ is well-formed and $\PP\equiv\PQ$ or $\PP\red\PQ$ then $\PQ$ is also well-formed and $\sym{\PP}=\sym{\PQ}$.
\end{proposition}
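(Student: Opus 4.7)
The plan is to prove the two preservation claims (under $\equiv$ and under $\red$) separately, each by induction on the respective derivation, and in every case to check simultaneously the two constituents of well-formedness (unique occurrence of symbols from $\cal{S}$ and absence of symbols in bodies of replicated inputs) along with the identity $\sym{\PP}=\sym{\PQ}$. Since $\N$ and $\cal{S}$ are disjoint, name substitutions of the form $\PP\subst{\NB}{\NX}$ do not alter the multiset of symbol occurrences, a fact I would state once at the beginning and reuse silently.

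For the structural congruence part I would traverse the rules of Table~\ref{tab:structural}. The symmetric/associative/commutative parallel and restriction swap rules, together with $\equiv_\alpha$, merely rearrange the term and neither duplicate nor delete symbols; the refined restriction-extrusion rule $\PP\parop\rest{\NA{:}\NR}\PQ\equiv\rest{\NA{:}\NR}(\PP\parop\PQ)$ keeps the decorating symbol $\NR$ paired with its unique occurrence, and similarly for \rulename{(sc-scope-auth)}. The only delicate case is \rulename{(sc-rep)}, which duplicates the body of a replicated input: $\repreceive\NA\role\msg{\NX}.\PP\equiv\repreceive\NA\role\msg{\NX}.\PP\parop\scope{\NA}\receive\NA\role\msg{\NX}.\PP$. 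Here well-formedness of the left-hand side already forces $\sym{\PP}=\emptyset$ (since $\PP$ sits under a replicated input prefix), so no symbol is created on the right-hand side, uniqueness holds vacuously for the new fragments, and the only replicated-input subterm on the right still has an empty symbol set in its body.

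For the reduction part I would do induction on the derivation of $\PP\red\PQ$. The base cases are \rulename{(r-comm)} and \rulename{(r-auth)}: in both, the resulting process is obtained by (i) rewriting the context via $\operator$, an operation that only strips specific $\scope{\cdot}$ authorization scopes and hence leaves all occurrences of symbols from $\cal{S}$ untouched, and (ii) replacing the two prefixed subprocesses by their (possibly substituted) continuations. Since such continuations are strict subterms of the original and the substitutions involve only names, no new symbol is introduced and no symbol is duplicated; moreover, replicated-input subterms are either preserved verbatim or left outside the redex, so their bodies still have empty symbol sets. The inductive cases \rulename{(r-newc)} and \rulename{(r-stru)} are immediate from the induction hypothesis, using the structural-congruence part already established.

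The one genuinely load-bearing step is the \rulename{(sc-rep)} case: this is the unique rule that duplicates a subprocess, and preservation there rests precisely on exploiting well-formedness of the source to deduce that the duplicated body contains no symbols from $\cal{S}$, so duplication cannot create a symbol clash or violate the no-symbols-under-replicated-input condition. Everything else reduces to routine bookkeeping on symbol occurrences.
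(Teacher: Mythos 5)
Your proposal is correct and follows essentially the same strategy as the paper: induction on the last applied structural congruence or reduction rule, checking that symbol occurrences are neither created nor duplicated. You simply detail different cases than the paper does (the paper spells out only \rulename{(r-newc)}, whereas you focus on \rulename{(sc-rep)} and the communication redexes), and your observation that \rulename{(sc-rep)} is the one genuinely load-bearing case is accurate.
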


\begin{proof}
The proof is by induction on the last applied structural congruence or reduction rule. We detail only the case when the last applied reduction rule is \rulename{(r-newc)}. Let $\PP=\rest{\NA: \NR}\PP',$ $\PQ=\rest{\NA: \NR}\PQ',$ and $\PP\red\PQ$ be derived from $\PP'\red\PQ'.$ Since $\PP$ is well-formed $\NR\notin\sym{\PP'}$ and $\PP'$ is well-formed. By induction hypothesis we get $\PQ'$ is well-formed and $\sym{\PP'}=\sym{\PQ'}.$ Thus, $\NR\notin\sym{\PQ'},$ $\PQ$ is well-formed and $\sym{\PP}=\sym{\PQ}.$
\end{proof}

\begin{lemma}[Inversion on Typing]{\label{lemm:Inv.Lemma}}
\begin{itemize}
\item[1.] If $ \Delta\vdash_\rho \rest{\NA: \NR}\PP$ then  $ \Delta', \NA:\{\NA\}(T)\vdash_\rho \PP ,$ where $\Delta=\Delta'\subst{\NR}{\NA}$ and $\NR\notin \sym{\PP}$ and $\NA\notin\rho, \mathit{names} (T).$ 
\item[2.] If $ \Delta\vdash_\rho \rest{\NA: \nub}\PP$ then  $ \Delta, \NA:\nub(T)\vdash_\rho \PP ,$ where  $\NA\notin\rho, \mathit{names} (T, \Delta).$ 
\item[3.] If $ \Delta\vdash_\rho \scope\NA\PP$ then $ \Delta\vdash_{\rho\uplus \{\NA\}} \PP.$ 
\item[4.] If $ \Delta\vdash_\rho \send\NA\role\msg\NB.\PP$ then $ \Delta\vdash_{\rho} \PP,$ where $\Delta(\NA)=\ttype(\ttype'(T)),$ $\Delta(\NB)=\ttype''(T),$ 	 $\ttype''\subseteq \ttype'$ and if $\NA\notin\rho$ then $\omega\subseteq\rho.$
\item[5.] If $ \Delta\vdash_\rho \receive\NA\role\msg\NX.\PP$ then $ \Delta, \NX:T \vdash_{\rho} \PP,$ where $\Delta(\NA)=\omega(T),$ $\NX\notin\rho, \mathit{names}(\Delta)$ and if $\NA\notin\rho$ then $\omega\subseteq\rho.$
\item[6.] If $ \Delta\vdash_\rho \repreceive\NA\role\msg\NX.\PP$ then $ \Delta, \NX: T \vdash_{\{\NA\}} \PP$ where $\sym{\PP}=\emptyset$ and $\Delta(\NA)=\omega(T)$ and $\NX \notin \rho, \mathit{names}(\Delta).$
\item[7.] If $\Delta\vdash_\rho \sauth\NA\role\msg\NB.\PP$ then $\Delta\vdash_{\rho'} \PP,$ $\Delta(\NA)=\omega(T),$ where $\rho=\rho'\uplus\{\NB\}$ and if $\NA\notin\rho'$ then $\omega \subseteq\rho'.$
\item[8.] If $ \Delta\vdash_\rho \rauth\NA\role\msg\NB.\PP$ then $ \Delta\vdash_{\rho\uplus\{\NB\}} \PP,$ where $\Delta(\NA)=\omega(T)$ and if $\NA\notin\rho$ then $\omega \subseteq\rho.$
\item[9.] If $ \Delta\vdash_{\rho} \PP_1 \parop\PP_2$ then $ \Delta\vdash_{\rho_1}\PP_1$ and $ \Delta\vdash_{\rho_2} \PP_2,$ where $\rho=\rho_1\uplus\rho_2$ and $\sym{\PP_1} \cap \sym{\PP_2}=\emptyset.$
\end{itemize}
\end{lemma}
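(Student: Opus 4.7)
The plan is to proceed by a direct case analysis, exploiting the fact that the typing rules in Table~\ref{tab:Typing rules} are syntax-directed: for each process constructor, exactly one typing rule has a conclusion of the matching shape, and there are no structural rules (no subsumption, weakening, or contraction) that could obscure this correspondence. Consequently, for each item of the lemma, the last step of any derivation of the hypothesized judgment is uniquely determined, and reading off the premises (and side conditions) of that rule yields precisely the claimed assertion.

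Concretely, I would take each item in turn. Item~3 reduces to the observation that the only rule with conclusion typing $\scope{\NA}\PP$ is \rulename{(t-auth)}, whose premise is $\Delta \vdash_{\rho \uplus \{\NA\}} \PP$, which is exactly what is claimed. Items~4--8 are handled analogously using \rulename{(t-out)}, \rulename{(t-in)}, \rulename{(t-rep-in)}, \rulename{(t-deleg)} and \rulename{(t-recep)} respectively, each time extracting the side conditions (on $\Delta(\NA)$, on $\ttype'' \subseteq \ttype'$, on $\omega \subseteq \rho$ when $\NA \notin \rho$, on $\NX \notin \rho, \mathit{names}(\Delta)$, and on $\sym{\PP} = \emptyset$ for the replicated input) verbatim from the corresponding rule's premises. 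Item~9 is a direct reading of \rulename{(t-par)}, where the multiset splitting $\rho = \rho_1 \uplus \rho_2$ and the symbol-disjointness side condition are already explicit in the rule.

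The two mildly more delicate cases are the restriction items 1 and 2. For item~1 the last rule must be \rulename{(t-new)}, whose conclusion is of the shape $\Delta' \vdash_\rho \rest{\NA:\NR}\PP$ with $\Delta' = \Delta\subst{\NR}{\NA}$; inverting this yields the existence of $\Delta$ and $T$ with $\Delta, \NA:\{\NA\}(T) \vdash_\rho \PP$, and the side conditions $\NR \notin \sym{\PP}$ and $\NA \notin \rho, \mathit{names}(T)$ are inherited from the rule premises. The analogous argument for item~2 uses \rulename{(t-new-rep)}, which involves no substitution but carries the stronger requirement $\NA \notin \mathit{names}(T,\Delta)$.

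I do not expect any genuine obstacle: the argument is really just bookkeeping since the typing system is syntax-directed and every item of the lemma corresponds one-to-one to a single rule. The only care needed is in item~1, where one must argue that the substitution $\Delta' = \Delta\subst{\NR}{\NA}$ in the conclusion of \rulename{(t-new)} can indeed be ``undone'' to recover the premise environment; this is unambiguous precisely because $\NA$ is a free name of the process whereas $\NR \in \mathcal{S}$ is a symbol, so the two cannot be confused inside types. No induction on the derivation is actually required beyond recognizing which rule fired last.
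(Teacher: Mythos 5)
Your proposal is correct and is exactly the intended argument: the paper itself states this inversion lemma without proof, precisely because the typing rules are syntax-directed and each item is read off verbatim from the unique rule whose conclusion matches the process constructor. Your extra care about ``undoing'' the substitution in item~1 is harmless but unnecessary, since the lemma only asserts the existence of a premise environment $\Delta'$ with $\Delta=\Delta'\subst{\NR}{\NA}$, which is supplied directly by the derivation.
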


The following two results (weakening and strengthening properties) are fundamental to prove Subject Congruence, which in turn is crucial to prove Subject Reduction.
We write $\NA \leftrightarrow \NR$ or  $\NA \leftrightarrow \nub$ depending on whether the name $\NA$ is bound in the process, or the process is in a context were the name $\NA$ is bound, with $\rest{\NA :\NR}$ or $\rest{\NA: \nub},$ respectively.

%\comment{H: Please merge the weakening Lemmas (following two), this way we have only one weakening label ;) - you can add a separate statement, hence separate proof, or you can combine the statements and prove separately}
\begin{lemma}[Weakening]{\label{lemm:Weakening_Lemma}}
\begin{enumerate}
\item Let $ \Delta\vdash_\rho \PP$ and $\NA\notin\fn{\PP}\cup\rho.$ Then
\begin{itemize}
\item [1.] if $\NA \leftrightarrow \NR$ and $\NR\notin\sym{\PP}$ and $\Delta'=\Delta\subst{\NA}{\NR}$ then $ \Delta', \NA: \{\NA\}(T) \vdash_\rho \PP;$
\item [2.] if $\NA \leftrightarrow \nub$ then $ \Delta, \NA: \nub(T) \vdash_\rho \PP.$
\end{itemize}
\item  If $ \Delta\vdash_\rho \PP$ then $ \Delta\vdash_{\rho\uplus\rho'} \PP.$
\end{enumerate}
\end{lemma}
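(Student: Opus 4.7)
The plan is to prove both parts by induction on the derivation of $\Delta \vdash_\rho \PP$, treating the two parts essentially independently.

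For Part 2, the argument is the easier one. For each typing rule, $\rho$ is either split additively and distributed to the premises (as in \rulename{(t-par)}, \rulename{(t-auth)}, \rulename{(t-deleg)}, \rulename{(t-recep)}), propagated unchanged (as in \rulename{(t-new)}, \rulename{(t-out)}), or consulted in monotone inclusions of the form $\omega \subseteq \rho$ (possibly guarded by $\NA \notin \rho$). All such conditions are stable under enlarging $\rho$ by a further multiset $\rho'$: for the splitting rules I would route $\rho'$ into a single branch before invoking the IH, and for the inclusion checks I would observe that adding elements can only strengthen a guard of the form $\NA \notin \rho$ into triviality while preserving the corresponding inclusion conclusion. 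Side conditions in \rulename{(t-new)}, \rulename{(t-new-rep)}, \rulename{(t-in)}, \rulename{(t-rep-in)} that forbid bound names from occurring in $\rho$ are handled by alpha-renaming the binder to be fresh with respect to $\rho'$ before applying the IH.

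For Part 1, subcase $\NR$, the argument is more delicate because it combines adding a new entry $\NA : \{\NA\}(T)$ with the uniform substitution $\Delta\{\NA/\NR\}$. I would proceed by induction on the derivation, relying on two key observations. First, because $\NA \notin \fn{\PP}$, every environment lookup $\Delta(\NB)$ made by the rule applied to any subprocess of $\PP$ concerns a name $\NB \neq \NA$, so the added entry does not interfere with any such lookup (it only intervenes through the substitution of $\NR$ by $\NA$ inside the carried types). Second, the shape conditions used by the communication rules (types of the form $\omega(T)$ and inclusions such as $\omega'' \subseteq \omega'$) are preserved by uniformly substituting a single symbol by a single name inside the sets $\varphi$. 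Together, these observations make the inductive step transparent for \rulename{(t-par)}, \rulename{(t-auth)}, \rulename{(t-out)}, \rulename{(t-in)}, \rulename{(t-deleg)} and \rulename{(t-recep)}. For \rulename{(t-new)} and \rulename{(t-new-rep)} I would alpha-rename the newly bound name and its symbol (or $\nub$) to be distinct from $\NA$ and $\NR$ before applying the IH; the hypothesis $\NR \notin \sym{\PP}$ ensures there is no conflict with the symbols labelling inner restrictions. For \rulename{(t-rep-in)} the hypothesis $\sym{\PP} = \emptyset$ on the body removes the symbol concern outright. The $\nub$-subcase is strictly simpler because no substitution takes place and one just checks that the added entry is never consulted in the premises.

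The main obstacle is the bookkeeping around the symbol substitution in Part 1: verifying that the syntactic shape $\omega(T)$ and the inclusion $\omega'' \subseteq \omega'$ are invariant under consistently replacing $\NR$ by $\NA$ across $\Delta$. A clean way to organise this is to first establish an auxiliary invariance statement: whenever $\Delta \vdash_\rho \PP$ and $\NR \notin \sym{\PP}$, then $\Delta\{\NA/\NR\} \vdash_\rho \PP$ for any name $\NA \notin \fn{\PP} \cup \rho$. Once this invariance is proved by a routine induction, Part 1 follows by composing it with the trivial addition of a fresh entry for $\NA$, which separates the two concerns and keeps each case of the outer induction short.
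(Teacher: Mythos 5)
Your proposal takes essentially the same route as the paper: both parts are proved by induction on the typing derivation, with $\alpha$-renaming of binders to sidestep the freshness side conditions, and the Part~2 argument via monotonicity of the guarded inclusions is exactly the paper's. Your suggested factoring of Part~1 into an environment-substitution invariance step followed by addition of a fresh entry is a mild reorganization rather than a different proof.

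One point in Part~1 is underspecified in your write-up. You justify the inductive steps for the communication rules by saying that inclusions such as $\ttype''\subseteq\ttype'$ are preserved by uniformly replacing $\NR$ with $\NA$ \emph{inside the sets $\varphi$}; that covers comparisons between two substituted type components, but the rules also impose guarded conditions of the form $\NB\notin\rho \Rightarrow \omega\subseteq\rho$ in which $\omega$ is substituted while $\rho$ is not (and $\NA\notin\rho$ by hypothesis). Preservation there is not a consequence of uniform substitution; it needs the separate observation that $\rho$ is a multiset of names and hence contains no symbols, so $\omega\subseteq\rho$ already forces $\NR\notin\omega$ and thus $\omega\subst{\NA}{\NR}=\omega$ (equivalently, as the paper argues, if $\NR\in\omega$ the guard's antecedent must fail). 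This is a one-line repair, but as stated your invariance claim does not cover it.
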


\begin{proof}
The proof is by induction on the depth of the derivation $ \Delta\vdash_\rho \PP.$
\\
$\mathit{1.}$ We detail only two cases when the last applied rule is \rulename{(t-out)} or \rulename{(t-rep-in)}. 
\begin{itemize}
\item \rulename{(t-out)}: Let $ \Delta\vdash_\rho \send\NB\role\msg\NC.\PP$ be derived from $ \Delta\vdash_{\rho} \PP,$ where $\Delta(\NB)=\ttype(\ttype'(T')),$ $\Delta(\NC)=\ttype''(T'),$ 	 $\ttype''\subseteq \ttype'.$ and if $\NB\notin\rho$ then $\omega \subseteq\rho.$ 
Then, we have two cases:

-If $\NA \leftrightarrow \NR$ and $\NR\notin \sym{\PP}$ and $\Delta'=\Delta\subst{\NA}{\NR}$, by induction hypothesis $ \Delta', \NA: \{\NA\}(T) \vdash_\rho \PP$. Since $\Delta'(\NB)=(\ttype(\ttype'(T')))\subst{\NA}{\NR},$ $\Delta'(\NC)=(\ttype''(T'))\subst{\NA}{\NR},$ then	 $\ttype''\subst{\NA}{\NR}\subseteq \ttype'\subst{\NA}{\NR}$. Since from $\NB\notin\rho$ it follows $\omega \subseteq\rho,$ then if $\NR\in\omega$ it follows $\NB \in\rho.$ If $\NR \notin \omega$ then $\omega\subst{\NA}{\NR}=\omega.$ Thus, by \rulename{(t-out)} we get $ \Delta', \NA: \{\NA\}(T) \vdash_\rho \send\NB\role\msg\NC.\PP.$

-If $\NA \leftrightarrow \nub$ then by induction hypothesis $ \Delta, \NA: \nub(T) \vdash_\rho \PP.$ Since $\Delta$ does not change we get the result directly by \rulename{(t-out)}.

\item \rulename{(t-rep-in)}: Let $ \Delta\vdash_{\rho} \repreceive\NB\role\msg\NX\PP$ be derived from $ \Delta, \NX: T' \vdash_{\{\NB\}}\PP,$ with  $\Delta(\NB)=\ttype(T'),$ where without loss of generality we can assume that $\NX\not=\NA.$ Again, we have two cases:

-If $\NA \leftrightarrow \NR$ and $\NR\notin \sym{\PP}$ and $\Delta'=\Delta\subst{\NA}{\NR}$, by induction hypothesis $\Delta', \NX:T'\subst{\NA}{\NR}, \NA: \{\NA\}(T) \vdash_{\{\NB\}} \PP.$ Since $\Delta'(\NB)=(\ttype(T'))\subst{\NA}{\NR}$ by \rulename{(t-rep-in)} we get $ \Delta', \NA: \{\NA\}(T) \vdash_{\rho} \repreceive\NB\role\msg\NX\PP.$

-If $\NA \leftrightarrow \nub$ then by induction hypothesis $ \Delta, \NX: T', \NA: \nub(T) \vdash_{\{\NB\}} \PP.$ Again, we get the result directly by \rulename{(t-rep-in)}.
%\htv{maybe distinguish the case when $b=a$ explicitly?}
%\comment{I: I don't think we need this case, I added a condition $\NA\notin\fn{\PP}$ to exclude this case \htv{OK!}}
%\end{proof}

%\comment{H: see comment above}
%\begin{lemmata}{\ref{lemm:Weakening_Lemma}~(Weakening).}
% If $ \Delta\vdash_\rho \PP$ then $ \Delta\vdash_{\rho\uplus\rho'} \PP.$
%\end{lemmata}
%\begin{proof}
%The proof is by induction on the depth of the derivation $ \Delta\vdash_\rho \PP.$ 
\end{itemize}
$\mathit{2.}$ We detail only the case when the last applied rule is \rulename{(t-in)}. Let $ \Delta\vdash_\rho \receive\NA\role\msg\NX.\PP$ be derived from $ \Delta, \NX:T \vdash_{\rho} \PP,$ where $\Delta(\NA)=\omega(T),$ $\NX\notin\rho, \mathit{names}(\Delta)$ and if $\NA\notin\rho$ then $\omega \subseteq\rho.$ By induction hypothesis $ \Delta, \NX: T \vdash_{\rho\uplus\rho'} \PP.$ Without loss of generality we can assume that $\NX$ is new to $\rho',$ i.e. $\NX\notin\rho'.$ Since we can conclude that if  $\NA\notin\rho\uplus\rho'$ then $\omega \subseteq\rho\uplus\rho',$ by \rulename{(t-in)} we get $ \Delta\vdash_{\rho\uplus\rho'} \receive\NA\role\msg\NX.\PP.$
\end{proof}

\begin{lemma}[Strengthening]{\label{lemm:Strength.Lemmma}}
\begin{itemize}
\item [1.]If $ \Delta, \NA: \{\NA\}(T) \vdash_{\rho} \PP$ and $\NA\notin\fn\PP \cup \rho$ and $\NA \leftrightarrow \NR$ and $\NR\notin \sym{\PP}$ then $ \Delta'\vdash_{\rho} \PP,$ where $\Delta'=\Delta\subst{\NR}{\NA}.$
\item [2.] If $ \Delta, \NA: \nub (T) \vdash_{\rho} \PP$ and $\NA\notin\fn\PP \cup \rho$ then $ \Delta\vdash_{\rho} \PP.$ 
\end{itemize}
\end{lemma}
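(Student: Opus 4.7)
The plan is to prove both items simultaneously by induction on the depth of the derivation $\Delta, \NA:\{\NA\}(T) \vdash_\rho \PP$ (respectively $\Delta, \NA:\nub(T) \vdash_\rho \PP$), with case analysis on the last typing rule applied. Throughout, one crucially uses the hypotheses $\NA\notin\fn{\PP}$ and $\NA\notin\rho$, which rule out $\NA$ appearing as a communication subject, as a delegated authorization, or as a contextual authorization, so the entry being removed is genuinely unused inside $\PP$.

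The structural cases \rulename{(t-stop)}, \rulename{(t-par)}, and \rulename{(t-auth)} are essentially bookkeeping: in \rulename{(t-par)} the freshness hypothesis transfers to each subprocess since $\fn{\PP_1\parop\PP_2}=\fn{\PP_1}\cup\fn{\PP_2}$ and $\sym{\PP_i}\subseteq\sym{\PP}$, and in \rulename{(t-auth)} the scoped name $\NB$ is free in $\PP$ hence distinct from $\NA$, so $\NA\notin\rho\uplus\{\NB\}$ and the induction hypothesis applies directly. For the restriction cases \rulename{(t-new)} and \rulename{(t-new-rep)}, using the well-formedness/$\alpha$-renaming convention I assume that the bound name and its symbol are distinct from $\NA$ and $\NR$, so the added entry for the restricted name commutes with the substitution (in part~1) or the entry removal (in part~2).

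The communication rules \rulename{(t-out)}, \rulename{(t-in)}, \rulename{(t-deleg)}, \rulename{(t-recep)} and \rulename{(t-rep-in)} are where the real work happens. In every case, because $\NA\notin\fn{\PP}$, the subject channel is some name $\NB\neq \NA$, so $\Delta(\NB)$ is preserved as $\Delta'(\NB)=\Delta(\NB)\subst{\NR}{\NA}$ in part~1 and as $\Delta(\NB)$ itself in part~2. The side conditions that need checking are the set-inclusion conditions of the form $\ttype''\subseteq\ttype'$ (preserved under pointwise substitution) and the contextual-authorization condition ``if $\NB\notin\rho$ then $\omega\subseteq\rho$''. For part~1 I would argue as follows: if $\NB\notin\rho$, then $\omega\subseteq\rho$; but $\rho$ contains only names (no symbols) and $\NA\notin\rho$, so $\NA\notin\omega$, hence $\omega\subst{\NR}{\NA}=\omega\subseteq\rho$. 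If instead $\NB\in\rho$, the conclusion is vacuous. Part~2 is analogous and simpler since no substitution is performed. For \rulename{(t-rep-in)} one additionally uses that $\sym{\PP}=\emptyset$, so the side condition $\NR\notin\sym{\PP}$ is immediate.

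The main obstacle I anticipate is the interplay in part~1 between the substitution $\Delta\subst{\NR}{\NA}$ and the typing rules that introduce new assumptions, namely \rulename{(t-new)}, \rulename{(t-new-rep)}, \rulename{(t-in)} and \rulename{(t-rep-in)}: one must carefully apply $\alpha$-conversion to keep the freshly introduced name and its associated symbol disjoint from $\{\NA, \NR\}\cup\mathit{names}(\Delta)\cup\sym{\PP}$ so that the substitution commutes with the extension of $\Delta$. A secondary subtlety is verifying in \rulename{(t-out)} and \rulename{(t-deleg)} that the substitution on the object name $\NC$ (which is again $\neq \NA$ since $\NA\notin\fn{\PP}$) preserves the matching of carried types $T$, but this follows pointwise from the definition of substitution on types. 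Once these are dispatched the induction closes.
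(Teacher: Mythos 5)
Your proposal is correct and follows essentially the same route as the paper: induction on the depth of the typing derivation with case analysis on the last rule, using $\NA\notin\fn{\PP}$ to conclude the subject and object names differ from $\NA$ and $\NA\notin\rho$ (together with the conditional $\omega\subseteq\rho$) to conclude $\NA\notin\omega$, so the substitution $\subst{\NR}{\NA}$ leaves the side conditions intact. The paper only writes out the \rulename{(t-in)} case, and your treatment of that case matches its argument.
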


\begin{proof}
The proof is by induction on the depth of the derivation $ \Delta\vdash_\rho \PP.$ We comment only the case when the last applied rule is \rulename{(t-in)}. We have two cases:

-If $\NA \leftrightarrow \NR$ and $\NR\notin \sym{\PP}$  and $\Delta'=\Delta\subst{\NR}{\NA},$ and $ \Delta,\NA: \{\NA\} (T) \vdash_\rho \receive\NB\role\msg\NX.\PP.$ 
From $\NA\notin\fn{\receive\NB\role\msg\NX\PP},$ without loss of generality, we can conclude $\NA\not=\NB$ and $\NA\not=\NX.$ 
By Lemma \ref{lemm:Inv.Lemma} we get $ \Delta,\NA: \{\NA\}(T),\NX: T' \vdash_{\rho} \PP$, $\Delta(\NB)=\omega(T'),$ and if $\NB\notin\rho$ then $\omega \subseteq\rho.$ 
By induction hypothesis $ \Delta',\NX: T'\subst{\NR}{\NA} \vdash_{\rho} \PP.$
Since $\NA \notin\rho$ then if $\NA\in \omega$ it follows $\NB\in \rho.$ If $\NA\notin \omega$ then $\omega\subst{\NR}{\NA}=\omega.$ Thus, using \rulename{(t-in)} we get $ \Delta' \vdash_\rho \receive\NB\role\msg\NX.\PP.$

-If $\NA \leftrightarrow \nub$ and $ \Delta,\NA: \{\NA\} (T) \vdash_\rho \receive\NB\role\msg\NX.\PP.$ 
Using the same arguments we can again assume $\NA\not=\NB$ and $\NA\not=\NX.$ 
By Lemma \ref{lemm:Inv.Lemma} we get $ \Delta,\NA: \nub(T),\NX: T' \vdash_{\rho} \PP$, $\Delta(\NB)=\omega(T'),$ and if $\NB\notin\rho$ then $\omega \subseteq\rho.$ 
By induction hypothesis $ \Delta,\NX: T' \vdash_{\rho} \PP.$
Thus, using \rulename{(t-in)} we get $ \Delta \vdash_\rho \receive\NB\role\msg\NX.\PP.$

\end{proof}

\begin{lemma}[Subject Congruence]{\label{lemm:Subject_congruence}}
If $ \Delta\vdash_\rho \PP$ and $\PP\equiv \PQ$ then $ \Delta\vdash_\rho \PQ.$
%\htv{what about $\inact \equiv \rest{\NA: \_} \inact$ ($\Rightarrow$)? I would also add the rule for replication
%(folding direction)}
\end{lemma}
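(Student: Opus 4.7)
The plan is to proceed by induction on the derivation $\PP \equiv \PQ$, noting that $\equiv$ is the least congruence generated by the axioms of Table~\ref{tab:structural}. Since $\equiv$ is symmetric we must establish each axiom in both directions. The congruence-closure cases (reflexivity, symmetry, transitivity and propagation of $\equiv$ under each syntactic constructor) are immediate from the induction hypothesis combined with the appropriate typing rule, using the Inversion Lemma (Lemma~\ref{lemm:Inv.Lemma}) to expose the premise of the last typing step.

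Among the axiomatic cases, most are routine. The parallel axioms \rulename{(sc-par-inact)}, \rulename{(sc-par-comm)}, \rulename{(sc-par-assoc)} follow from the fact that $\uplus$ is a commutative monoid with $\emptyset$ as identity and that $\inact$ is typable under any $\Delta$ and $\rho$ by \rulename{(t-stop)}. The authorization axioms \rulename{(sc-auth-swap)} and \rulename{(sc-auth-inact)} reduce to commutativity of $\uplus$ via repeated applications of \rulename{(t-auth)} (with \rulename{(t-stop)} absorbing $\scope{\NA}\inact$). Alpha-conversion \rulename{(sc-alpha)} is standard and relies on the fact that no typing rule inspects the identity of a bound name once it is sufficiently fresh with respect to $\Delta$, $\rho$ and $\sym{\PP}$.

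The delicate cases are those involving the refined restriction construct with its symbol annotation. For \rulename{(sc-res-swap)} and \rulename{(sc-scope-auth)} I would invert the outer construct, then the inner, and reassemble in swapped order; the sanity checks are that (i) when both restricted names carry $\NR$-style annotations, the two substitutions introduced by \rulename{(t-new)} commute (the symbols are distinct and fresh to each other's scope), (ii) \rulename{(t-auth)} modifies only $\rho$ and not $\Delta$, so authorization scoping commutes transparently with the substitution performed by \rulename{(t-new)}, and (iii) the freshness side conditions ($\NA \notin \mathit{names}(T)$, $\NA \notin \rho$, $\NR \notin \sym{\PP}$, etc.) are preserved by the swap. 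For \rulename{(sc-res-extr)}, $\PP \parop \rest{\NA : \NR}\PQ \equiv \rest{\NA : \NR}(\PP \parop \PQ)$ under $\NA \notin \fn{\PP}$ (with a symmetric case for the $\nub$ annotation), the Weakening (Lemma~\ref{lemm:Weakening_Lemma}) and Strengthening (Lemma~\ref{lemm:Strength.Lemmma}) lemmas are indispensable: one direction extends the derivation of $\PP$ by introducing the entry for $\NA$, the other removes it, exploiting the side condition $\NA \notin \fn{\PP}$ and the corresponding symbol invariants. Finally, for \rulename{(sc-rep)}, inversion via \rulename{(t-rep-in)} gives $\sym{\PP} = \emptyset$ and $\Delta, \NX : T \vdash_{\{\NA\}} \PP$ with $\Delta(\NA) = \ttype(T)$; I may then retype the unfolded occurrence $\scope{\NA}\receive{\NA}\role\msg{x}.\PP$ by \rulename{(t-in)} (whose authorization premise is vacuous since $\NA \in \{\NA\}$) followed by \rulename{(t-auth)} under the empty $\rho$, and combine it via \rulename{(t-par)} with a fresh \rulename{(t-rep-in)} application for the remaining copy, using $\rho \uplus \emptyset = \rho$ and the disjointness premise $\sym{\PP} \cap \sym{\PP} = \emptyset$ which holds trivially.

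I expect the principal obstacle to be \rulename{(sc-res-extr)}, since it is the only axiom that genuinely alters the scope of a typing assumption in $\Delta$. The care required lies in handling the two forms of symbol annotation ($\NR \in \mathcal{S}$ and $\nub$) uniformly and in verifying, in each direction, that the side conditions of \rulename{(t-new)} (respectively \rulename{(t-new-rep)}) together with those of Weakening/Strengthening --- absence of $\NA$ from the relevant $\rho$, from types in the remaining environment, and from $\sym{\PP}$ --- all transfer correctly from the hypothesis to the conclusion.
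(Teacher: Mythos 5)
Your proposal is correct and follows essentially the same route as the paper's proof: induction on the derivation of $\PP\equiv\PQ$, with the Inversion, Weakening and Strengthening lemmas carrying the \rulename{(sc-res-extr)} case (in both symbol-annotation forms) and the unfolding argument for \rulename{(sc-rep)} matching the appendix. The only small point worth flagging is that already in \rulename{(sc-par-inact)} the direction from $\PP\parop\inact$ to $\PP$ requires Weakening on $\rho$, since inversion of \rulename{(t-par)} may assign a non-empty multiset to the $\inact$ branch, so monoid identities alone do not suffice.
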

\begin{proof}
The proof is by induction on the depth of the derivation $\PP\equiv \PQ.$ We comment only three cases when the last applied rule is \rulename{(sc-par-inact)}, \rulename{(sc-res-extr)} or \rulename{(sc-rep)}:
\begin{itemize}
\item[$\mathit{1.}$] $\PP\parop\inact\equiv\PP$. From $ \Delta\vdash_\rho \PP\parop\inact$ by Lemma~\ref{lemm:Inv.Lemma} we get $ \Delta\vdash_{\rho_1} \PP $ and $ \Delta\vdash_{\rho_2} \inact,$ where $\rho_1 \uplus\rho_2=\rho.$ By Lemma~\ref{lemm:Weakening_Lemma} we get $ \Delta\vdash_\rho \PP.$

If $ \Delta\vdash_\rho \PP,$ by \rulename{(t-stop)} we get $ \Delta\vdash_\emptyset \inact$ and by \rulename{(t-par)} we conclude $ \Delta\vdash_\rho \PP\parop\inact.$

\item[$\mathit{2.}$]$\PP\parop \rest{\NA: \NR}\PQ \equiv \rest{\NA: \NR}(\PP\parop\PQ)$ or $\PP\parop \rest{\NA: \nub}\PQ \equiv \rest{\NA: \nub}(\PP\parop\PQ),$ if $\NA\notin\fn\PP.$ 

To show implication from right to the left we have two cases:

-If $\NA \leftrightarrow \NR$ then from $ \Delta\vdash_\rho \PP\parop \rest{\NA: \NR}\PQ$ by Lemma~\ref{lemm:Inv.Lemma} we get $ \Delta\vdash_{\rho_1} \PP$ and $ \Delta\vdash_{\rho_2}  \rest{\NA: \NR}\PQ,$ where $\rho_1\uplus\rho_2=\rho$ and $\sym{\PP} \cap \sym{\rest{\NA: \NR}\PQ}=\emptyset.$  
Applying Lemma~\ref{lemm:Inv.Lemma} we get $ \Delta',\NA:\{\NA\}(\varphi)\vdash_{\rho_2} \PQ$ where $\Delta'=\Delta\subst{\NA}{\NR}$ and $\NA\notin\rho_2$ and $\NR\notin\sym{\PQ}.$ From $\sym{\PP} \cap \sym{\rest{\NA: \NR}\PQ}=\emptyset$ we get $\NR\notin\sym{\PP}$ and from $\NA\notin\fn\PP$ and $\NA\in\bn{\rest{\NA : \NR}\PQ}$ without loss of generality can assume $\NA\notin\rho_1.$ By Lemma~\ref{lemm:Weakening_Lemma} we get $ \Delta',\NA:\{\NA\}(\varphi)\vdash_{\rho_1}\PP$ where $\Delta'=\Delta\subst{\NA}{\NR}.$ 
By \rulename{(t-par)} $ \Delta',\NA:\{\NA\}(\varphi)\vdash_\rho \PP\parop\PQ$ and by \rulename{(t-new)} $ \Delta\vdash_\rho \rest{\NA: \NR}(\PP\parop\PQ).$

-If $\NA \leftrightarrow \nub$ then from $ \Delta\vdash_\rho \PP\parop \rest{\NA: \nub}\PQ$ by Lemma~\ref{lemm:Inv.Lemma} we get $ \Delta\vdash_{\rho_1} \PP$ and $ \Delta\vdash_{\rho_2}  \rest{\NA: \nub}\PQ,$ where $\rho_1\uplus\rho_2=\rho.$ 
Applying Lemma~\ref{lemm:Inv.Lemma} again we get $ \Delta,\NA:\nub(\varphi)\vdash_{\rho_2} \PQ$ where $\NA\notin\rho_2.$ 
By Lemma~\ref{lemm:Weakening_Lemma} we get $ \Delta,\NA:\nub(\varphi)\vdash_{\rho_1}\PP.$ 
By \rulename{(t-par)} $ \Delta,\NA:\nub(\varphi)\vdash_\rho \PP\parop\PQ$ and by \rulename{(t-new-rep)} $ \Delta\vdash_\rho \rest{\NA: \nub}(\PP\parop\PQ).$

To show implication from left to the right we again have two cases:

-If $\NA \leftrightarrow \NR$ then from $ \Delta\vdash_\rho \rest{\NA: \NR} ( \PP\parop \PQ)$ by Lemma~\ref{lemm:Inv.Lemma} we get $ \Delta', \NA: \NA(T)\vdash_{\rho} \PP \parop \PQ$ where $\Delta'=\Delta\subst{\NA}{\NR}$ and $\NR \notin\sym{\PP\parop \PQ}$ and $\NA\notin \rho.$ By Lemma~\ref{lemm:Inv.Lemma} 
$ \Delta', \NA:\{\NA\}(T)\vdash_{\rho_1} \PP$ and $ \Delta', \NA: \{\NA\}(T)\vdash_{\rho_2}  \PQ,$ where $\rho_1\uplus\rho_2=\rho$ and $\sym{\PP} \cap \sym{\PQ}=\emptyset.$  Since $\NA\notin\fn{\PP}\cup \rho_1$ and $\NR\notin \sym{\PP}$ by Lemma~\ref{lemm:Strength.Lemmma} $ \Delta \vdash_{\rho_1} \PP.$ 
Using $\NR\notin\sym{\PQ}$ and $\NA\notin\rho_2$ by \rulename{(t-new)} we get $ \Delta \vdash_{\rho_2}  \rest{\NA: \NR}\PQ,$ and by \rulename{(t-par)} we get $ \Delta \vdash_\rho \PP \parop \rest{\NA: \NR} \PQ.$

-If $\NA \leftrightarrow \nub$ then from $ \Delta\vdash_\rho \rest{\NA: \nub} ( \PP\parop \PQ)$ by Lemma~\ref{lemm:Inv.Lemma} we get $ \Delta, \NA: \nub(T)\vdash_{\rho} \PP \parop \PQ$ where $\NA\notin \rho.$ By Lemma~\ref{lemm:Inv.Lemma} 
$ \Delta, \NA: \{\NA\}(T)\vdash_{\rho_1} \PP$ and $ \Delta, \NA: \{\NA\}(T)\vdash_{\rho_2}  \PQ,$ where $\rho_1\uplus\rho_2=\rho.$ Since $\NA\notin\fn{\PP}\cup \rho_1$ by Lemma~\ref{lemm:Strength.Lemmma} $ \Delta \vdash_{\rho_1} \PP.$ 
Using \rulename{(t-new-rep)} we get $ \Delta \vdash_{\rho_2}  \rest{\NA: \nub}\PQ,$ and by \rulename{(t-par)} we get $ \Delta \vdash_\rho \PP \parop \rest{\NA: \nub} \PQ.$

\item[$\mathit{3.}$] $\repreceive\NA\role\msg\NX.\PP \equiv \repreceive\NA\role\msg\NX.\PP \parop \scope\NA\receive\NA\role\msg\NX.\PP.$ We show only one implication. Suppose $ \Delta\vdash_\rho \repreceive\NA\role\msg\NX.\PP \parop \scope\NA\receive\NA\role\msg\NX.\PP.$ By Lemma~\ref{lemm:Inv.Lemma} we get $ \Delta\vdash_{\rho_1} \repreceive\NA\role\msg\NX.\PP$ and $ \Delta\vdash_{\rho_2} \scope\NA\receive\NA\role\msg\NX.\PP,$ where $\rho_1\uplus\rho_2=\rho.$ By the same Lemma we get $\sym{\PP}=\emptyset$ and $\Delta, \NX: T\vdash_{\{\NA\}} \PP,$ where $\Delta(\NA)=\ttype(T)$ and $\NX \notin \rho_1, \mathit{names}(\Delta).$ By \rulename{(t-rep-in)} we get $ \Delta\vdash_{\rho} \repreceive\NA\role\msg\NX.\PP.$

\end{itemize}
\end{proof}

%\comment{H: why not stating $x \not \in \Delta, \rho$ from starters? we have this in the input.. then no need to replace in $\rho$ right?}
\begin{lemmata}{\ref{lemm:Substitution_lemma}~(Substitution).}

Let $ \Delta,\NX:\ttype(T)\vdash_\rho \PP$ and $\NX\notin\mathit{names}(\Delta)$. Then 
\begin{itemize}
\item[1.] If $\Delta(\NA)=\{\NA\}(T)$  and $\NA\in \ttype$ then $\Delta\vdash_{\rho\subst{\NA}{\NX}} \PP\subst{\NA}{\NX}.$
\item[2.] If $\Delta(\NA)=\nub(T)$  and $\nub=\ttype$ then $\Delta\vdash_{\rho\subst{\NA}{\NX}} \PP\subst{\NA}{\NX}.$
\end{itemize}

\end{lemmata}

\begin{proof}
The proof is by induction on the depth of the derivation $ \Delta\vdash_\rho\PP.$ We detail two cases:
\begin{itemize}
\item  $ \Delta,\NX:\ttype(\ttype'(T))\vdash_\rho \send\NX\role\msg\NB.\PP.$ 
By Lemma \ref{lemm:Inv.Lemma} $ \Delta, \NX:\ttype(\ttype'(T))\vdash_{\rho}\PP,$ where $\Delta(\NB)=\ttype''(T)$ and $\ttype''\subseteq\ttype'$ and if $\NX\notin\rho$ then $\ttype\subseteq\rho.$ 
By induction hypothesis $ \Delta\vdash_{\rho\subst{\NA}{\NX}} \PP\subst{\NA}{\NX}.$ 
If $\Delta(\NA)=\{\NA\}(\omega'(T))$ and $\NX\in\rho$ then $\NA\in\rho\subst{\NA}{\NX}$ and if $\NX\not\in\rho$ then by $\ttype\subseteq\rho$ we get $\NA\in\ttype\subseteq\rho=\rho\subst{\NA}{\NX}.$ 
If $\Delta(\NA)=\nub(\omega'(T))$ then $\NX\in\rho$ which implies $\NA\in\rho\subst{\NA}{\NX}.$
By \rulename{(t-out)} we get $ \Delta\vdash_{\rho\subst{\NA}{\NX}} (\send\NX\role\msg\NB.\PP)\subst{\NA}{\NX}.$

\item $ \Delta,\NX:\ttype(T)\vdash_\rho \send\NB\role\msg\NX.\PP.$ 
By Lemma \ref{lemm:Inv.Lemma} $ \Delta, \NX:\ttype(T)\vdash_{\rho}\PP,$ where $\Delta(\NB)=\ttype'(\ttype''(T))$ and $\ttype\subseteq\ttype''$ and if $\NB\notin\rho$ then $\ttype\subseteq\rho.$ 
By induction hypothesis $ \Delta\vdash_{\rho\subst{\NA}{\NX}} \PP\subst{\NA}{\NX}.$ If $\Delta(\NA)=\{\NA\}(T)$ then $\NA\in\ttype\subseteq\ttype''$ and if $\NA: \nub(T)$ then $\nub=\ttype\subseteq\ttype''.$
Thus, by \rulename{(t-out)} we get $ \Delta\vdash_{\rho\subst{\NA}{\NX}} (\send\NB\role\msg\NX.\PP)\subst{\NA}{\NX}.$
\end{itemize}
\end{proof}

%\comment{H: please merge this result with the last Lemma, you already have subject congruence.}

\begin{lemma}[Authorization Safety]
\label{lem:authpresent}
If $\Delta \vdash_\emptyset \context[\PP_1,\PP_2]$ and 
$\Delta \vdash_{\rho_1} \PP_1$ and $\Delta \vdash_{\rho_2} \PP_2$
and $\NA\in\rho_1 \cap \rho_2$ then $ \optop{\context[\cdot_1,\cdot_2]}{\NA}{\NA}$ is defined.
\end{lemma}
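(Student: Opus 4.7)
The proof proceeds by structural induction on the context $\context[\cdot_1,\cdot_2]$, invoking Inversion on Typing (Lemma~\ref{lemm:Inv.Lemma}) at each step. Since $\optop{\cdot}{\cdot}{\cdot}$ is the instance of $\opholes{\cdot}{\cdot}{\cdot}{\cdot}{\cdot}$ with empty ``already removed'' lists, and since traversing a $\scope{\NC}$ construct naturally grows these lists, a strengthened statement formulated directly in terms of $\opholes{\cdot}{\cdot}{\cdot}{\cdot}{\cdot}$ and accommodating a non-empty top-level $\rho$ coming from the typing derivation is required to push the induction through.

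For the split case $\context[\cdot_1, \cdot_2] = \context_1[\cdot_1] \parop \context_2[\cdot_2]$, inversion of \rulename{(t-par)} forces $\Delta \vdash_\emptyset \context_i[\PP_i]$ for $i=1,2$. Rule \rulename{(c2-spl)} then reduces the goal to a single-hole analog: if $\Delta \vdash_\emptyset \context[\PP]$ via a derivation that types $\PP$ with $\rho$ containing $\NA$, then $\ophole{\context[\cdot]}{\NA}{\emptyset}$ is defined. This single-hole statement is proved by a parallel structural induction using the rules of Table~\ref{tab:op_for_one_hole}. The case $\context = \PP \parop \context'[\cdot_1,\cdot_2]$ follows directly from inversion of \rulename{(t-par)} (which fixes the unused branch's $\rho$ to $\emptyset$), the induction hypothesis on $\context'$, and rule \rulename{(c2-par)}.

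The key case is $\context = \scope{\NC} \context'[\cdot_1, \cdot_2]$. Inversion of \rulename{(t-auth)} gives $\Delta \vdash_{\{\NC\}} \context'[\PP_1, \PP_2]$, so the premise of the original statement no longer matches, motivating the strengthened form that allows a non-empty $\rho$. If $\NC = \NA$, the added authorization services one of the two sides: we dispatch to \rulename{(c2-rem-l)} or \rulename{(c2-rem-r)} and invoke the strengthened IH on $\context'$ with one fewer $\NA$ to remove on the chosen side and one more $\NA$ in the corresponding ``removed'' list. If $\NC \neq \NA$, rule \rulename{(c2-skip)} applies and the IH handles $\context'$ unchanged.

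The main obstacle is formulating the strengthened induction hypothesis so that the side conditions of \rulename{(c2-skip)}---requiring that the $\scope{}$ being preserved refers to a name absent from the ``removed'' lists---remain satisfied throughout. This demands careful bookkeeping relating the multiset difference between the required $\rho_1 \uplus \rho_2$ and the currently available $\rho$ in the derivation (which counts the authorizations still pending removal towards the holes) to the accumulating ``removed'' lists, guaranteeing that the authorizations dropped are always the ones nearest to the holes, as prescribed by the $\operator$ operation.
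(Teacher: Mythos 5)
The paper's own proof of this lemma is a one\--liner (``by induction on the structure of $\context[\cdot_1,\cdot_2]$''), so your overall strategy---structural induction on the context with a strengthened hypothesis tracking the provided multiset $\rho$ and the accumulating ``removed'' lists---is the intended one, and your handling of the split, parallel and $\NC\neq\NA$ cases is on the right track. The problem is the key case $\context=\scope{\NA}\context'[\cdot_1,\cdot_2]$: you prescribe \emph{always} consuming this occurrence via \rulename{(c2-rem-l)} or \rulename{(c2-rem-r)}. That greedy strategy derives a false intermediate claim. Take $\context[\cdot_1,\cdot_2]=\scope{\NA}(\scope{\NA}\cdot_1\parop\scope{\NA}\cdot_2)$, which satisfies the hypotheses of the lemma (the outer authorization goes to one side at \rulename{(t-par)}, so $\NA\in\rho_1\cap\rho_2$ at the holes). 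Consuming the outer $\scope{\NA}$, say for $\cdot_1$, leaves you needing $\opholes{\scope{\NA}\cdot_1\parop\scope{\NA}\cdot_2}{\emptyset}{\NA}{\NA}{\emptyset}$ to be defined---but it is not: after \rulename{(c2-spl)} the left component requires $\ophole{\scope{\NA}\cdot_1}{\emptyset}{\NA}$, where \rulename{(c-rem)} is blocked because the ``to remove'' list is empty and \rulename{(c-skip)} is blocked because $\NA$ already occurs in the ``removed'' list. The operation \emph{is} defined for this context, but only via \rulename{(c2-skip)} at the outer occurrence followed by removal of the two inner ones, precisely because $\operator$ removes the authorizations nearest to the holes. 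Hence no formulation of your strengthened induction hypothesis can rescue the recipe ``remove whenever $\NC=\NA$, skip otherwise'': the choice at a $\scope{\NA}$ depends on whether further occurrences of $\scope{\NA}$ lie below on the relevant paths.

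You do flag the bookkeeping around the \rulename{(c2-skip)} side condition as ``the main obstacle,'' but you leave it unresolved while committing to a local case analysis that it defeats. A clean repair is to decouple the two concerns: first prove, purely about contexts, that $\optop{\context[\cdot_1,\cdot_2]}{\tilde\NA}{\tilde\NB}$ is defined whenever each required removal can be injectively assigned to a distinct occurrence of the corresponding authorization scope dominating the respective hole (the witnessing derivation skips every occurrence until only the innermost needed ones remain, so \rulename{(c2-skip)} is never invoked after a removal of the same name); then use the typing derivation only to exhibit such an assignment, tracing each $\NA\in\rho_i$ back through \rulename{(t-auth)} and the multiset splittings of \rulename{(t-par)} to a distinct $\scope{\NA}$ in the context, which is possible because at the root $\rho=\emptyset$ provides nothing.
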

\begin{proof}
By induction on the structure of $\context[\cdot_1,\cdot_2]$.
\end{proof}

\paragraph{\bf Notation} We use $\rest{\tilde\NC: \tilde{\Omega}}$ to abbreviate $\rest{\NC_1: \Omega_1}\ldots\rest{\NC_n: \Omega_n}$, where $\Omega$ ranges over symbols from $\cal{S}$
and $\nu$.
\begin{lemma}[Interaction Safety]{\label{lemm:Error_Free}}
\begin{itemize}
%\item[]{}
\item[1.] If $\PP\equiv \rest{\tilde\NC: \tilde{\Omega}}\context[\send\NA\role\msg\NB.\PP_1, \receive\NA\role\msg\NX.\PP_2]$ and $\PP$ is well-typed with $\Delta\vdash_\emptyset \PP$ then $ \optop{\context[\cdot_1,\cdot_2]}{\NA}{\NA}$ is defined and if $\context'[\cdot_1, \cdot_2]= \optop{\context[\cdot_1,\cdot_2]}{\NA}{\NA}$ and $\PQ\equiv \rest{\tilde\NC: \tilde{\Omega}}\context'[\scope\NA\PP_1, \scope\NA\PP_2\subst{\NB}{\NX}]$ then $ \Delta\vdash_\emptyset \PQ.$
\item[2.] If $\PP\equiv \rest{\tilde\NC: \tilde{\Omega}}\context[\sauth\NA\role\msg\NB.\PP_1, \rauth\NA\role\msg\NB.\PP_2]$ and $\PP$ is well-typed with $\Delta\vdash_\emptyset \PP$ then $ \optop{\context[\cdot_1,\cdot_2]}{\NA, \NB}{\NA}$ is defined and if $\context'[\cdot_1, \cdot_2]= \optop{\context[\cdot_1,\cdot_2]}{\NA, \NB}{\NA}$ and $\PQ\equiv \rest{\tilde\NC: \tilde{\Omega}}\context'[\scope\NA\PP_1, \scope\NA\scope\NB\PP_2]$ then $ \Delta\vdash_\emptyset \PQ.$
\end{itemize}

\end{lemma}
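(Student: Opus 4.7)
The plan is to handle both items by a common strategy: first expose the redex using Subject Congruence, then decompose the typing through the context to locate the required authorizations, invoke Lemma \ref{lem:authpresent} (or its natural generalization for the delegation case) to obtain definedness of the $\operator$ operation, and finally rebuild the typing of the reduced process using the Substitution Lemma and repeated \rulename{(t-auth)}.

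For item 1, Lemma \ref{lemm:Subject_congruence} together with repeated inversion of \rulename{(t-new)} and \rulename{(t-new-rep)} (items 1--2 of Lemma \ref{lemm:Inv.Lemma}) strips the outer restrictions and yields $\Delta^+\vdash_\emptyset \context[\send\NA\role\msg\NB.\PP_1,\receive\NA\role\msg\NX.\PP_2]$, where $\Delta^+$ extends $\Delta$ with entries of the form $\NC_i:\{\NC_i\}(T_i)$ or $\NC_i:\nu(T_i)$. Consequently the first component of $\Delta^+(\NA)$ is either the singleton $\{\NA\}$ or the symbol $\nu$, whether $\NA$ is free in $\PP$ or bound by one of the $\NC_i$. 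A structural induction on $\context[\cdot_1,\cdot_2]$ (following the grammar in Table \ref{tab:Contexts}) combined with inversions on \rulename{(t-par)} and \rulename{(t-auth)} at each step produces judgments $\Delta^+\vdash_{\rho_1}\send\NA\role\msg\NB.\PP_1$ and $\Delta^+\vdash_{\rho_2}\receive\NA\role\msg\NX.\PP_2$. Inverting \rulename{(t-out)} and \rulename{(t-in)} on these two judgments, the side condition ``$\NA\notin\rho\Rightarrow\omega\subseteq\rho$'' collapses (given the shape of $\Delta^+(\NA)$) to $\NA\in\rho_i$, so $\NA\in\rho_1\cap\rho_2$, and Lemma \ref{lem:authpresent} gives that $\optop{\context[\cdot_1,\cdot_2]}{\NA}{\NA}$ is defined.

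To type the reduct, I apply Lemma \ref{lemm:Substitution_lemma} to the typing of $\PP_2$ obtained from a further inversion of \rulename{(t-in)}, and wrap both continuations with \rulename{(t-auth)} to derive typings of $\scope\NA\PP_1$ and $\scope\NA\PP_2\subst{\NB}{\NX}$ under multisets each carrying one fewer occurrence of $\NA$. A second, mirrored induction on the context then shows that $\context'[\cdot_1,\cdot_2]=\optop{\context[\cdot_1,\cdot_2]}{\NA}{\NA}$ filled with these continuations is typable under $\Delta^+\vdash_\emptyset$: the two $\scope\NA$ authorizations removed by $\operator$ (those nearest to each hole) correspond exactly to the new confining $\scope\NA$ on the continuations, so authorizations match level by level. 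Re-applying \rulename{(t-new)} and \rulename{(t-new-rep)} for the outer restrictions and invoking Subject Congruence closes the argument. Item 2 is strictly analogous: inverting \rulename{(t-deleg)} yields $\NA\in\rho_1$ and $\NB\in\rho_1$; inverting \rulename{(t-recep)} yields $\NA\in\rho_2$ and a typing of $\PP_2$ under a multiset already containing $\NB$; so wrapping with $\scope\NA\scope\NB$ on the right hole matches exactly the behavior of $\optop{\context[\cdot_1,\cdot_2]}{\NA,\NB}{\NA}$, which is defined by a variant of Lemma \ref{lem:authpresent} that additionally tracks $\NB$ on the left.

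The main obstacle is keeping the two inductions on the context structure in lockstep: the induction that locates authorizations and produces $\rho_1,\rho_2$ must coincide with the one rebuilding the typing after drift, so that every $\scope\NA$ accounted for as ``provided to the left (resp.\ right) hole'' is either preserved in $\context'$ or replaced by an inner $\scope\NA$ on the corresponding continuation, honouring the ``nearest to the hole'' rule built into $\operator$. This matching is the technical expression of the confinement principle; once set up, the rest of the proof reduces to routine inversions and applications of the lemmas listed above.
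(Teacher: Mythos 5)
Your proposal follows essentially the same route as the paper's proof: Subject Congruence plus inversion to strip restrictions and descend through the context, the observation that the shape of $\Delta(\NA)$ forces $\NA\in\rho_1\cap\rho_2$, Lemma~\ref{lem:authpresent} for definedness of $\operator$, the Substitution Lemma for the input continuation, and a context-structured rebuild (the paper organizes this via Proposition~\ref{prop:Cases_for_contexts}) matching the removed authorizations with the confining $\scope\NA$ on the continuations. The approach is correct and no further comparison is needed.
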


\begin{proof}
The proof is by induction on the structure of the context $\context[\cdot_1, \cdot_2].$ We detail only the first statement. If $ \Delta\vdash_\emptyset \PP$ by Lemma~\ref{lemm:Subject_congruence} we get  $ \Delta\vdash_\emptyset \rest{\tilde\NC: \tilde{\Omega}}\context[\send\NA\role\msg\NB.\PP_1, \receive\NA\role\msg\NX.\PP_2]$ and by consecutive application of Lemma ~\ref{lemm:Inv.Lemma}. 1 and 2, we get $ \Delta'\vdash_\emptyset \context[\send\NA\role\msg\NB.\PP_1, \receive\NA\role\msg\NX.\PP_2],$
where $\Delta'= \Delta, \Delta''$ and for each $\NC\in\mathit{dom}(\Delta''),$ $\Delta''(\NC)=\NC(T)$ or $\Delta''(\NC)=\nub(T).$ By consecutive application of Lemma~\ref{lemm:Inv.Lemma}. 3 and 9, we get 
\[
 \Delta'\vdash_{\rho_1} \send\NA\role\msg\NB.\PP_1 \quad \text{and} \quad  \Delta'\vdash_{\rho_2} \receive\NA\role\msg\NX.\PP_2,
\]
for some multisets of names $\rho_1, \rho_2.$ 
By the same Lemma again 
\[
 \Delta'\vdash_{\rho_1} \PP_1 \quad \text{and} \quad  \Delta', \NX:\ttype(T)\vdash_{\rho_2}\PP_2,
\]
where $\Delta'(\NA)=\{\NA\}(\ttype(T))$ or $\Delta'(\NA)=\nub(\ttype(T))$ and thus $\NA\in\rho_1,$ $\NA\in\rho_2.$  Furthermore, $\Delta'(\NB)=\{\NB\}(T)$ or $\Delta'(\NB)=\nub(T),$ and $\NB\in\ttype$  and $\NX\notin\rho_2\cup\mathit{names}(\Delta')$. Hence we have $\rho_2\subst{\NB}{\NX}=\rho_2,$ and by Lemma \ref{lemm:Substitution_lemma} we get $\Delta'\vdash_{\rho_2}\PP_2\subst{\NB}{\NX}.$
By \rulename{(t-auth)} we get 
\[
 \Delta'\vdash_{\rho_1'} \scope\NA \PP_1 \quad \text{and} \quad  \Delta'\vdash_{\rho_2'} \scope \NA\PP_2\subst{\NB}{\NX},
\]
where $\rho_1=\rho_1'\uplus\{\NA\}$ and $\rho_2=\rho_2'\uplus\{\NA\}.$

Since $ \Delta'\vdash_\emptyset \context[\send\NA\role\msg\NB.\PP_1, \receive\NA\role\msg\NX.\PP_2],$ and $\NA\in\rho_1,$ $\NA\in\rho_2$ by Lemma~\ref{lem:authpresent} we conclude
 $\optop{\context[\cdot_1,\cdot_2]}{\NA}{\NA}$ is defined.
By Proposition~\ref{prop:Cases_for_contexts} we distinguish four cases for the structure of the context $\context[\cdot_1, \cdot_2]=\context''[\context_1[\cdot_1] \parop \context_2[\cdot_2]].$ 
We comment only the case when $\context_1[\cdot]=\context_1'[\scope\NA \context_1''[\cdot]]$ and $\context_2[\cdot]=\context_2'[\scope\NA \context_2''[\cdot]],$ where $\operator(\context_1''[\cdot]; \NA)$ and $\operator(\context_2''[\cdot]; \NA)$ are undefined, thus in contexts $\context_1''[\cdot]$ and $\context_2''[\cdot]$ the holes are not in the scope of authorizations $\scope\NA.$  
By consecutive application of \rulename{(t-par)} and \rulename{(t-auth)} we get $ \Delta' \vdash_{\rho_1''} \scope\NA \context_1''[\send\NA\role\msg\NB.\PP_1]$ and $ \Delta' \vdash_{\rho_2''} \scope\NA \context_2''[\receive\NA\role\msg\NX.\PP_2],$ but also $ \Delta' \vdash_{\rho_1''}  \context_1''[\scope\NA\PP_1]$ and $ \Delta' \vdash_{\rho_2''}  \context_2''[\scope\NA\PP_2\subst{\NB}{\NX}],$ for some $ \rho_1'', \rho_2''.$ 
Since $\context'[\cdot_1, \cdot_2]=\context''[\context_1'[\context_1''[\cdot_1]] \parop \context_2'[\context_2''[\cdot_2]]]$, again by consecutive application of \rulename{(t-par)} and \rulename{(t-auth)} we get $\Delta'\vdash_\emptyset \context'[\scope\NA\PP_1, \scope\NA\PP_2\subst{\NB}{\NX}]$ then by consecutive application of \rulename{(t-new)} and \rulename{(t-new-rep)} we get $\Delta\vdash_\emptyset \rest{\tilde\NC: \tilde{\Omega}}\context'[\scope\NA\PP_1, \scope\NA\PP_2\subst{\NB}{\NX}]$.
By Lemma~\ref{lemm:Subject_congruence} we get $\Delta\vdash_\emptyset \PQ.$
\end{proof}

\begin{theorema}{\ref{theorem:Subject_reduction}~(Subject Reduction).}
If $\PP$ is well-typed, $ \Delta\vdash_\emptyset \PP$, and $\PP\red\PQ$ then $ \Delta\vdash_\emptyset \PQ.$

\end{theorema}

\begin{proof}
The proof is by case analysis on last reduction step. 
We have two base cases, when the rules applied are \rulename{(r-comm)} or \rulename{(r-auth)}, which both we get directly by Lemma~\ref{lemm:Error_Free}. 
For the induction steps we have two cases:
\begin{itemize}
\item If the last applied rule is \rulename{(r-newc)} then we have two cases
	\begin{itemize}
	\item $\rest{\NA: \NR} \PP' \red \rest{\NA: \NR} \PQ'$ is derived from $ \PP' \red  \PQ'.$ Let $ \Delta\vdash_\emptyset \rest{\NA: \NR} \PP'.$ By Proposition~\ref{lemm:well-formed} we get $\rest{\NA: \NR} \PQ'$ is well-formed, thus $\NR\notin\sym{\PQ'}$.
By Lemma~\ref{lemm:Inv.Lemma} we get $ \Delta', \NA: \{\NA\}(T) \vdash_\rho \PP',$ where  $\NR\notin\sym{\PP'}$ and $\Delta'=\Delta\subst{\NA}{\NR}.$ By induction hypothesis we get $ \Delta', \NA: \{\NA\}(T) \vdash_\rho \PQ',$ and by \rulename{(t-new)} we get $ \Delta\vdash_\emptyset \rest{\NA: \NR} \PQ'.$ 
	\item $\rest{\NA: \nub} \PP' \red \rest{\NA: \nub} \PQ'$ is derived from $ \PP' \red  \PQ'.$ Let $ \Delta\vdash_\emptyset \rest{\NA: \nub} \PP'.$ By Lemma~\ref{lemm:Inv.Lemma} we get $ \Delta, \NA: \nub(T) \vdash_\rho \PP'.$  By induction hypothesis we get $ \Delta, \NA: \nub(T) \vdash_\rho \PQ',$ and by \rulename{(t-new-rep)} we get $ \Delta\vdash_\emptyset \rest{\NA: \nub} \PQ'.$
	\end{itemize} 
\item If the last applied rule is \rulename{(r-struc)} then $ \PP \red \PQ$ is derived from $ \PP' \red  \PQ',$ where $\PP\equiv \PP'$ and $\PQ\equiv \PQ'.$ Let $ \Delta\vdash_\emptyset \PP.$ By Lemma~\ref{lemm:Subject_congruence} we get $ \Delta\vdash_\emptyset \PP'.$ By induction hypothesis $ \Delta\vdash_\emptyset \PQ'$ and by Lemma~\ref{lemm:Subject_congruence} we get $ \Delta\vdash_\emptyset \PQ.$
\end{itemize}
\end{proof}

\begin{prop}{\ref{lemm:Error_Freedom}~(Typing Soundness)}

\begin{itemize}
\item[]{}
\item[$\mathit{1.}$] If $\Delta\vdash_\rho \PP$ then $\PP$ is well-formed.
\item[$\mathit{2.}$] If $\PP$ is well-typed then $\PP$ is not an error.
\end{itemize}
\end{prop}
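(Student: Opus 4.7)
The plan is to prove the two items separately, both as essentially immediate corollaries of lemmas already in place.

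For item~1, I would proceed by induction on the depth of the typing derivation $\Delta\vdash_\rho \PP$, checking the two well-formedness clauses: uniqueness of symbols from $\cal{S}$ in $\PP$, and $\sym{\PP'}=\emptyset$ for every subprocess $\PP'$ prefixed by a replicated input. The base case \rulename{(t-stop)} is trivial. The cases for \rulename{(t-auth)}, \rulename{(t-out)}, \rulename{(t-in)}, \rulename{(t-deleg)}, and \rulename{(t-recep)} are immediate from the inductive hypothesis, since the conclusion introduces no symbols beyond those present in the premise. The genuine work is in the three rules whose side conditions encode precisely the well-formedness constraints: \rulename{(t-par)} with $\sym{\PP_1}\cap\sym{\PP_2}=\emptyset$ preventing symbol duplication across parallel components, \rulename{(t-new)} with $\NR\notin\sym{\PP}$ enforcing freshness of the newly introduced symbol, and \rulename{(t-rep-in)} with $\sym{\PP}=\emptyset$ directly giving the second clause for a replicated subprocess. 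In each of these cases, combining the inductive hypothesis with the explicit side condition yields the claim.

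For item~2, I would argue by contradiction. Suppose $\PP$ is well-typed, so $\Delta\vdash_\emptyset \PP$ with $\Delta$ of the admissible shape, and also that $\PP$ is an error. By Definition~\ref{d:error} we may write $\PP \equiv \rest{\tilde\NC : \tilde\Omega}\context[\alpha_\NA.\PQ,\alpha_\NA'.\PR]$ where either the prefixes form a matching communication pair with $\optop{\context[\cdot_1,\cdot_2]}{\NA}{\NA}$ undefined, or they form a matching delegation pair with $\optop{\context[\cdot_1,\cdot_2]}{\NA,\NB}{\NA}$ undefined. By Subject Congruence (Lemma~\ref{lemm:Subject_congruence}) the congruent form is typable under the same environment and with empty authorization multiset. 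The appropriate clause of Interaction Safety (Lemma~\ref{lemm:Error_Free}) then asserts that the corresponding $\operator$ operation \emph{is} defined, contradicting the error condition. Hence $\PP$ cannot be an error.

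The principal difficulty here is not in this proposition itself: all the substantive reasoning — in particular the case analysis on the structure of the static context relative to the authorizations required by the interacting partners — has already been absorbed into Lemma~\ref{lemm:Error_Free}, which in turn relies on Proposition~\ref{prop:Cases_for_contexts} and on the Substitution Lemma. Once those prerequisites together with Subject Congruence are established, the present proposition collapses to reading off the side conditions of the typing rules (item~1) and to a one-step contradiction (item~2).
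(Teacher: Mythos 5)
Your proposal is correct and follows exactly the paper's own (much terser) proof: item~1 is obtained by inspecting the typing rules, with the side conditions of \rulename{(t-par)}, \rulename{(t-new)} and \rulename{(t-rep-in)} doing the work, and item~2 is immediate from Lemma~\ref{lemm:Error_Free} (Interaction Safety) applied to the redex exhibited by Definition~\ref{d:error}. The extra detail you supply (the induction for item~1 and the explicit contradiction for item~2) is a faithful expansion of what the paper leaves implicit.
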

\begin{proof}
\begin{itemize}
\item[]{}
\item[$\mathit{1.}$] Directly from the typing rules.
\item[$\mathit{2.}$] Immediate from Lemma~\ref{lemm:Error_Free}. 
\end{itemize}
\end{proof}

%\comment{H: Please merge with other Error absence}
%\begin{lemmata}{\ref{lemm:Error_Free}~(Error Absence).}

%If $ \Delta\vdash_\emptyset \PP$ and for each $\NA\in\mathit{dom}(\Delta),$ $\Delta(\NA)=\NA(T)$ or $\Delta(\NA)=\nub(T)$ and $P\equiv \rest{\tilde\NC} \context[\alpha_\NA.\PQ, \alpha_\NA'.\PR]$ where 
%\begin{enumerate}
%\item 
%$\alpha_\NA = \send\NA\role\msg\NB$, 
%$\alpha_\NB = \receive\NA\role\msg\NX$
%then
%$ \optop{\context[\cdot_1,\cdot_2]}{\NA}{\NA}$  is defined, or 
%\item 
%$\alpha_\NA = \sauth\NA\role\msg\NB$, 
%$\alpha_\NB = \rauth\NA\role\msg\NB$
%then
%$ \optop{\context[\cdot_1,\cdot_2]}{\NA,\NB}{\NA}$ is defined.
%\end{enumerate}
%Therefore, $\PP$ is not an error.
%\end{lemmata}

%\begin{proof}
%Directly from Lemma~\ref{lemm:Pre_Subject_reduction} and Lemma~\ref{lemm:Subject_congruence}.
%\end{proof}

\begin{corollar}{\ref{cor:Type_Safety}~(Type Safety).}

If $\PP$ is well-typed and $\PP\red^*\PQ$ then $\PQ$ is not an error.
\end{corollar}
\begin{proof}
The proof follows directly from Theorem~\ref{theorem:Subject_reduction} and Proposition~\ref{lemm:Error_Freedom}(2).
\end{proof}

\end{document}